\newcommand{\n}{\noindent}
\newcommand{\be}{\begin{equation}}
\newcommand{\ee}{\end{equation}}
\newcommand{\ben}{\begin{displaymath}}
\newcommand{\een}{\end{displaymath}}
\newcommand{\vs}{\vspace{0.2cm}}
\newcommand{\scb}{\scalebox}
\newtheorem{Proposition}{Proposition}
\newtheorem{Theorem}{Theorem}
\newtheorem{Lemma}{Lemma}
\newtheorem{Corollary}{Corollary}
\newtheorem{Note}{Note}
\begin{document}

\newcommand{\north}{{\sf N}}
\newcommand{\south}{{\sf S}}
\newcommand{\poles}{{\mathcal{P}}}
\newcommand{\mc}{\text{tr}_{h}\Theta}

\newcommand{\meanc}{{\rm tr}_{h}\Theta}
\newcommand{\narea}{\frac{A(S)}{4\pi}}
\newcommand{\nareat}{\frac{A(S_{t})}{4\pi}}

\newcommand{\rc}{\color{red}}

\begin{center}
{\huge On extreme Kerr-throats and zero 

\vs
temperature black-holes.}

\vspace{0.6cm}
{\sc Martin Reiris.}\\

\vs
\textsc{Max Planck Institute f\"ur Gravitationsphysik. \\ Albert Einstein Institut - Germany.}\\e-mail: martin@aei.mpg.de.

\vspace{0.6cm}
\begin{minipage}[c]{12cm}
\begin{spacing}{1}
{\small Recently it was shown that the area $A$ and the angular momentum $J$ of any apparent horizon on a maximal, axisymmetric and asymptotically flat Cauchy hyper-surface of a vacuum space-time satisfy necessarily the universal inequality $A\geq 8\pi |J|$. 
We show here that the equality $A=8\pi |J|$ is never attained. 
%
We study too the global structure of data sets having surfaces with $A=8\pi |J|$. This lead us to prove the rigidity of the extreme Kerr-throats and to investigate the important phenomenon of formation of extreme Kerr-throats along sequences of data sets.}
\end{spacing}
\end{minipage}

\end{center}


\section{Introduction.}

The celebrated Penrose singularity theorem asserts (in particular) that
when a trapped surface is present in an asymptotically flat Cauchy hyper-surface of a given globally hyperbolic vacuum space-time then such space-time is necessarily null geodesically incomplete \cite{MR757180}. Because of this and other facts, trapped surfaces are usually associated to the emergence of black holes and are therefore central objects of study in General Relativity. 
Keeping this in mind let us concentrate in axisymmetric and asymptotically flat vacuum space-times and in maximal axisymmetric Cauchy hyper-surfaces. Moreover suppose that over the Cauchy hyper-surface there is a trapped surface and that it is isotopic to the sphere at ``infinity" over one of the possibly many ends as is depicted in Figure \ref{Fig1}. In this scenario the boundary of the trapped region in the hyper-surface is known to be a stable Marginally Outer Trapped Surface (MOTS), called the {\it apparent horizon}, which in turn is usually interpreted as a quasi-localization of the event horizon \cite{MR2884392}. On the other hand in axisymmetry every (embedded, orientable, compact and boundary-less [\footnote{These will be common assumptions.}]) surface has associated its Komar angular momentum $J(S)$ that depends only on the homology class of the surface (see Section \ref{ANGULARMOMEN}). In particular the angular momentum of the apparent horizon coincides with that of the respective asymptotically flat end. Moreover
as was shown in \cite{2011PhRvL.107e1101D} the universal inequality 
\be\label{AJI}
A(S)\geq  8\pi |J(S)|,
\ee
holds between the area $A(S)$ of any embedded surface $S$ and its angular momentum $J(S)$.
It is concluded then that the area of the apparent horizon is always greater or equal than $8\pi$ times the angular momentum of the respective asymptotically flat end. 
In this article we show that apparent horizons saturating the inequality (\ref{AJI}), namely  with $A=8\pi |J|$, cannot exist in maximal, axisymmetric and asymptotically flat vacuum data sets (Theorem \ref{MLemma} and Corollary \ref{Coroll1}). 
But we also investigate what occurs to the geometry of this type of data sets around apparent horizons nearly saturating (\ref{AJI}) (Corollary \ref{Cor2}). This is closely related to analyzing the global structure of data sets (of a different global type) admitting a surface saturating (\ref{AJI}) and that we investigate in Theorem \ref{Lemma3}. As we will see the quest has deep theoretical implications. We will  be explaining all this in full detail in the discussion below.

Most of the discussions in this article are centered around the notions of extreme Kerr-throat and extreme Kerr-throat sphere. 
To begin explaining these notions let us consider the family of the Kerr-solutions in the Boyer-Lindquist coordinates 
\begin{align}
\label{Km} {\bf g}=& -\bigg[\frac{\Delta - a^{2}\sin^{2}\theta}{\Sigma}\bigg]\ dt^{2} - 
\frac{a\sin^{2}\theta(r^{2}+a^{2}-\Delta)}{\Sigma}(dt d\varphi+d\varphi dt)\\
\nonumber & +\bigg[\frac{(r^{2}+a^{2})^{2}-\Delta a^{2} \sin^{2}\theta}{\Sigma}\bigg]\sin^{2}\theta\ d\varphi^{2}+
\frac{\Sigma}{\Delta}\ dr^{2}+\Sigma\ d\theta^{2},
\end{align}
where here
$\Sigma=r^{2}+a^{2}\cos^{2}\theta,\ 
\Delta=r^{2}+a^{2}-2m r,
$
and $a=J/m$. The family is parametrized by the mass $m$ and the angular momentum $J$. The Kerr black holes correspond to the range of parameters $m^{2}\geq |J|$ while the extreme Kerr-black holes correspond to the case when $m^{2}=|J|$ (i.e. $a^{2}=m^{2}$). 
Note that in this last case we have $\Delta=(r-m)^{2},\ r>m$. 
Let us restrict the attention to the maximal slice $\{t=0\}$. Over this slice the extreme Kerr solutions have 
one asymptotically flat end (when $r\uparrow \infty$) and one cylindrical end (when $r\downarrow m$). These ends are depicted in Figure \ref{Fig2}. The cylinder possesses asymptotically a well 
defined smooth (i.e. $C^{\infty}$) data $(S^{2}\times \mathbb{R};g_{T},K_{T})$ indexed here with a $T$ from ``Throat" and called the {\it extreme kerr-throat of angular momentum $|J|$} (or of mass $m=\sqrt{|J|}$). The explicit form of the data is
\begin{align}
& \label{mKt} g_{T}=\bigg(\frac{4|J|\sin^{2}\theta}{1+\cos^{2}\theta}\bigg)\, d\varphi^{2}+|J|(1+\cos^{2}\theta)\, d\theta^{2} +|J| (1+\cos^{2}\theta)\, d\tilde{r}^{2},\\
& \label{KKt} K_{T}=\bigg(\frac{2\sqrt{|J|}\sin^{2}\theta}{(1+\cos^{2}\theta)^{\frac{3}{2}}}\bigg)\, (d\varphi d\tilde{r}+d\tilde{r}d\varphi).
\end{align}

\n The vector field $\partial_{\tilde{r}}$ is a Killing field and $|\partial_{\tilde{r}}|^{2}=|J|(1+\cos^{2}\theta)$. Therefore $\partial_{\tilde{r}}=\alpha_{T} \varsigma$ where $\alpha_{T}:=\sqrt{|J|(1+\cos^{2}\theta)}$ and $\varsigma$ is a unit field normal to the spheres of constant $\tilde{r}$. The function $\alpha_{T}$, which depends on $\sqrt{|J|}$, will play a fundamental role later on. Note that the data of the extreme Kerr-throats are parametrized by their angular momentum $|J|$ which plays the role of a global scale factor. 
The spheres of constant $\tilde{r}$ are called {\it extreme Kerr-throat spheres} of area $A=8\pi|J|$ (or of mass $m=\sqrt{A/8\pi}$). They are totally geodesic, i.e. as surfaces in $(\Sigma;g)$ have zero second fundamental form, and are in particular minimal. Moreover they are stable minimal and the second variation of the area is non-negative and zero if and only if it is in a direction proportional to $\partial_{\tilde{r}}=\alpha_{T}\varsigma$ (i.e. $\partial_{\bar{r}}$ up to a non-zero factor). The induced two-metric is
\be\label{INDUMETRIC}
h_{T}=\bigg(\frac{4|J|\sin^{2}\theta}{1+\cos^{2}\theta}\bigg)\, d\varphi^{2}+|J|(1+\cos^{2}\theta)d\theta^{2}, 
\ee
and in the basis $\{\partial_{\theta},\partial_{\varphi},\varsigma\}$ the only non-zero components of $K_{T}$ are
\be\label{INDUSFF}
K_{T}(\partial_{\varphi},\varsigma)=K_{T}(\varsigma,\partial_{\varphi})=\frac{2 \sin^{2}\theta}{(1+\cos^{2}\theta)^{2}}.
\ee
In this setup, an axisymmetric sphere $S$ of area $A(S)=8\pi |J(S)|$ embedded in a data $(\Sigma;g,K)$ is said to be an extreme Kerr-throat sphere if (i) it is totally geodesic in $(\Sigma;g)$, (ii) the induced metric expressed in areal coordinates $(\theta,\varphi)$ (see Section \ref{APCO}) is given by (\ref{INDUMETRIC})
and if (iii) in the basis $\{\partial_{\theta},\partial_{\varphi},\varsigma\}$, where $\varsigma$ is a unit normal to $S_{H}$ in $\Sigma$, the only non-zero components of $K$ are $K(\partial_{\varphi},\varsigma)=K(\varsigma,\partial_{\varphi})$ and given by (\ref{INDUSFF}).

\begin{figure}[h]
\centering
\includegraphics[width=12cm,height=8cm]{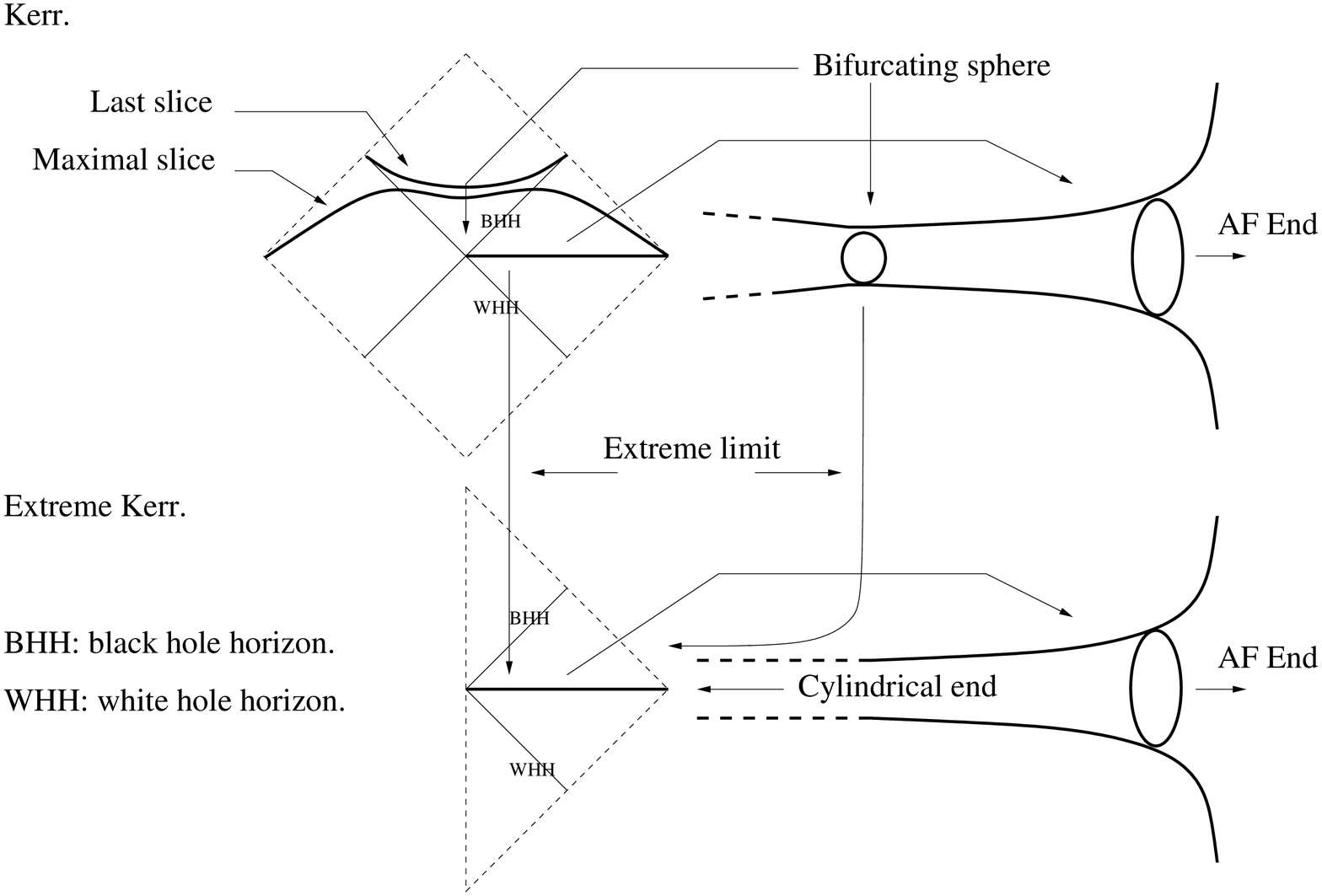}
\caption{}
\label{Fig2}
\end{figure} 

A fundamental result proved in \cite{2011PhRvL.107e1101D} and which is the basis to prove the universal inequality (\ref{AJI}) says that any stable and axisymmetric minimal surface with $A=8\pi |J|$ and embedded in a maximal and axisymmetric vacuum data set $(\Sigma;g,K)$ is necessarily an extreme Kerr-throat sphere regardless of the global nature of the data set in which it is embedded like completeness or asymptotic flatness. We will use this result very often.

Observe that if the equality in (\ref{AJI}) were reached in some surface $S$ inside an axisymmetric, maximal and asymptotically flat vacuum data set then such surface would have to be minimal and stable because the angular momentum of a surface depends only on its homology class [\footnote{More explicitly, for any smooth $F:[-\varepsilon,\varepsilon]\times S\rightarrow \Sigma$ with $F(0,-)={\rm Id} (-)$ and $\varepsilon$ small to have $F(x,-):S^{2}\rightarrow \Sigma$ a smooth embedding, the real function $\lambda\rightarrow A(F(\lambda,S))$ must have an absolute minimum at $\lambda=0$ because of (\ref{AJI}). It follows that the first $\lambda$-derivative is zero and the second is non-negative. As this is valid for all $F$ then the surface is minimal and stable.}]. As a result such surface would have to be an extreme Kerr-throat sphere.  

There is a worth mentioning interpretation for the surfaces $S$ saturating (\ref{AJI}) in terms of the well known thermodynamical heuristic from which we borrowed the terminology ``{\it zero temperature black hole}" that we used in the title. To explain this observe that non extremal Kerr black holes have a horizon in the slice $\{t=0\}$ of positive ``temperature"
\be\label{TEMP}
{\sf T}=\frac{1-\big(\frac{8\pi J}{A}\big)^{2}}{\sqrt{\frac{A}{4\pi} +\frac{16\pi J^{2}}{A}}},
\ee  
while the extremal Kerr black holes have  an ``asymptotic" horizon of zero ``temperature" because $A=8\pi |J|$.  If we import (\ref{TEMP}) as a raw definition for the temperature of an embedded surface then according to what was said before surfaces of zero temperature in vacuum axisymmetric and asymptotically flat data sets would be simply extreme Kerr-throat spheres of a particular area. 
In the context of the present discussion it is thus pertinent to ask whether zero temperature apparent horizons can arise in this type of data set or if on the contrary they can not, but could arise as in the extreme Kerr solutions only as asymptotic horizons on cylindrical ends of data sets. The present article investigates such situation and further related topics. 

Our first result, Theorem \ref{MLemma}, shows that indeed no surface exists saturating (\ref{AJI}) and embedded in an axisymmetric and asymptotically flat, maximal vacuum data set. 
\begin{Theorem}\label{MLemma} Let $(\Sigma;g,K)$ be a smooth vacuum axisymmetric maximal data set with finitely many asymptotically flat ends $E_{1},\ldots,E_{n}$. Let $S$ be any orientable compact and boundary-less embedded surface. Then 
\be\label{MI}
A(S)>8\pi |J(S)|,
\ee
where $A(S)$ is the area of $S$ and $J(S)$ is its angular momentum.
\end{Theorem}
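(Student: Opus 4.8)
The plan is to argue by contradiction: suppose some orientable compact boundary-less embedded surface $S$ satisfies $A(S)=8\pi|J(S)|$. By the remark made in the excerpt (footnote on minimality), the inequality (\ref{AJI}) applied to all nearby embeddings forces $S$ to be stable and minimal; then the fundamental rigidity result of \cite{2011PhRvL.107e1101D} cited above forces $S$ to be an extreme Kerr-throat sphere, so near $S$ the data $(\Sigma;g,K)$ agrees in areal coordinates with the throat data $(g_T,K_T)$ on the slice $\{\tilde r=0\}$. The heart of the argument is then to show that this local throat structure cannot be ``capped off'': it must propagate, forcing a cylindrical end and contradicting asymptotic flatness of the $E_i$. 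Concretely, I would exploit the degenerate direction of the stability operator — the second variation of area vanishes exactly along $\partial_{\tilde r}=\alpha_T\varsigma$ — to run a continuation/flow argument: starting from $S$, deform in the normal direction $\alpha_T\varsigma$ and show the deformed surfaces remain extreme Kerr-throat spheres, so that a whole one-parameter family $\{S_t\}$ foliates an open neighbourhood isometric to a piece of the exact extreme Kerr-throat cylinder $(S^2\times\mathbb R; g_T,K_T)$.

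The key steps, in order, are: (1) reduce to $S$ being an extreme Kerr-throat sphere via the footnote argument plus the quoted theorem of \cite{2011PhRvL.107e1101D}; (2) identify the Jacobi field: the lapse $\alpha_T=\sqrt{|J|(1+\cos^2\theta)}$ is a positive solution of the stability (Jacobi) equation $L\alpha_T=0$ with $L$ the second-variation operator of $S$, which both forces the throat rigidity locally and provides the deformation vector field with no zeros; (3) use the vacuum constraint equations (Hamiltonian and momentum, with the maximal condition $\mathrm{tr}_g K=0$) together with axisymmetry to show that flowing $S$ along $\alpha_T\varsigma$ produces surfaces $S_t$ each of which is again totally geodesic with induced metric (\ref{INDUMETRIC}) and second fundamental form data (\ref{INDUSFF}) — i.e. the ``extreme Kerr-throat sphere'' condition is preserved under the flow; this is essentially an ODE-uniqueness statement in the transverse variable, using that the throat data is a fixed point of the relevant evolution; (4) conclude that the maximal domain swept out is an isometric copy of the complete extreme Kerr-throat $(S^2\times\mathbb R;g_T,K_T)$ with its cylindrical (non-asymptotically-flat) end, which cannot embed as an open subset of a data set all of whose ends are asymptotically flat — or, alternatively, invoke a unique-continuation argument for the vacuum equations near the throat to get the same conclusion. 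This contradiction establishes the strict inequality (\ref{MI}).

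I expect step (3)–(4) to be the main obstacle. Preserving the exact extreme Kerr-throat sphere structure along the deformation requires carefully combining the Gauss and Codazzi equations with the vacuum momentum constraint in axisymmetry to show the transverse derivatives of the geometric data vanish or are exactly the throat ones; the delicate point is controlling the lapse and the axial Killing data simultaneously so that no ``tilting'' away from the cylinder occurs, and then upgrading a formal/order-by-order matching to an honest isometric neighbourhood (this is where a unique-continuation theorem for the Einstein vacuum constraints, or a direct ODE integration along $\varsigma$, must be invoked). A secondary subtlety is global: ruling out that the flow closes up into something compact rather than running off to a cylindrical end — here one uses that $\alpha_T$ never vanishes, so the flow has no critical leaves, together with the completeness/asymptotic-flatness hypothesis on $(\Sigma;g,K)$ to force the contradiction.
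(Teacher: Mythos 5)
Your step (1) matches the paper's starting point, but from there the paper takes a completely different route, and the route you propose has a genuine gap at step (3). The constraint equations (\ref{EC})--(\ref{MC}) are four equations for the twelve components of $(g,K)$; they do \emph{not} determine the transverse derivatives $\partial_{\varsigma}g$, $\partial_{\varsigma}K$ from the data induced on the two-surface $S_{H}$. Knowing that $S_{H}$ is an extreme Kerr-throat sphere fixes only the induced metric, the (vanishing) second fundamental form, and the restriction of $K$ to $S_{H}$; the first-order jet of the data off $S_{H}$ remains essentially free subject to four scalar relations. So there is no ``ODE-uniqueness in the transverse variable,'' and no unique continuation to invoke: unique continuation for the vacuum equations requires knowledge of the solution on an open set (or the full jet on a hypersurface of the spacetime), not merely the first and second fundamental forms of a two-surface inside a slice. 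Consequently the leaves of the flow along $\alpha_{T}\varsigma$ need not be minimal, let alone throat spheres, and the neighbourhood need not be isometric to any piece of the cylinder. The paper's own rigidity statement (Theorem \ref{Lemma3}) confirms that such rigidity is not local: it needs the global hypotheses $\Sigma\approx S^{2}\times\mathbb{R}$, homogeneous regularity, and $A\geq 8\pi|J|$ for \emph{all} isotopic spheres, and even then its proof is not a transverse propagation but again a contradiction argument.

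What the paper actually does is evolve in \emph{time}, not in space. Proposition \ref{P1} computes that for the extreme Kerr-throat sphere $S_{H}$ one has $\dot{A}(S_{H})=0$ and $\ddot{A}(S_{H})=-A''_{N}(S_{H})\leq 0$, with equality iff the initial lapse restricted to $S_{H}$ is proportional to $\alpha_{T}$. Proposition \ref{P2} (a linear-independence plus unique-continuation argument for the \emph{lapse equation}, which is elliptic and scalar, hence does enjoy unique continuation) produces on a compact region $\Omega$ a positive solution $N_{0}$ of the lapse equation not proportional to $\alpha_{T}$ on $S_{H}$, so the area strictly decreases at order $t^{2}$ while $J$ is conserved. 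Since the evolution is only maximal to order $t^{2}$, the inequality $A\geq 8\pi|J|$ is not directly available on the evolved slices; Proposition \ref{P4} recovers $A(S_{t})\geq 8\pi|J(S_{t})|-\Lambda_{0}t^{4}$ for stable axisymmetric minimal surfaces, and minimizing area in the isotopy class of $S_{H}$ at time $t$ (via \cite{MR678484}) then yields $8\pi|J|\leq A \leq 8\pi|J|-\tfrac{1}{2}A''_{N_{0}}(S_{H})t^{2}+O(t^{3})$, a contradiction for small $t$. If you want to salvage a spatial argument you would have to replace your ODE-continuation step by a genuine variational input (produce area-minimizers near $S_{H}$ and use the global inequality), which is essentially the machinery of Theorem \ref{Lemma3} — and even that proof ultimately falls back on the time-evolution mechanism above.
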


An immediate corollary is 
\begin{Corollary}\label{Coroll1} There are no black hole apparent horizons of zero temperature in smooth, maximal, axisymmetric and asymptotically flat vacuum data sets.
\end{Corollary}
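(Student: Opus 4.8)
The plan is to deduce Corollary \ref{Coroll1} directly from Theorem \ref{MLemma} by translating the phrase ``zero temperature'' into the algebraic condition $A=8\pi|J|$ and then observing that an apparent horizon is precisely the kind of surface to which Theorem \ref{MLemma} applies. First I would recall the raw definition of temperature imported from the Kerr family: for an embedded surface $S$ of area $A=A(S)$ and angular momentum $J=J(S)$, formula (\ref{TEMP}) reads ${\sf T}(S)=\big(1-(8\pi J/A)^{2}\big)\big/\sqrt{A/4\pi+16\pi J^{2}/A}$. Since $A>0$ the denominator is strictly positive and finite, so ${\sf T}(S)$ has the sign of its numerator; in particular ${\sf T}(S)=0$ holds if and only if $(8\pi J/A)^{2}=1$, that is, if and only if $A=8\pi|J|$.

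Next I would invoke the structure of apparent horizons in the present axisymmetric, asymptotically flat setting, as recalled in the Introduction: such a horizon $S_{H}$ is a stable MOTS, hence an orientable, compact, boundary-less surface embedded in the maximal Cauchy hyper-surface $\Sigma$, while $(\Sigma;g,K)$ is by assumption a smooth vacuum, axisymmetric, maximal data set with finitely many asymptotically flat ends. These are exactly the hypotheses of Theorem \ref{MLemma}, applied with $S=S_{H}$, which therefore yields the strict inequality $A(S_{H})>8\pi|J(S_{H})|$.

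Finally I would combine the two facts: from $A(S_{H})>8\pi|J(S_{H})|$ we obtain $8\pi|J(S_{H})|/A(S_{H})<1$, hence $(8\pi J(S_{H})/A(S_{H}))^{2}<1$, so the numerator of (\ref{TEMP}) is strictly positive and ${\sf T}(S_{H})>0$ (the case $J(S_{H})=0$ being trivially positive as well). Consequently no apparent horizon of zero temperature can occur in such a data set. I do not anticipate any genuine obstacle: the whole mathematical content is carried by Theorem \ref{MLemma}, the only extra ingredient being the elementary remark that the numerator of (\ref{TEMP}) vanishes exactly on the forbidden locus $A=8\pi|J|$. The single point that deserves to be stated with care is that an apparent horizon does satisfy the embeddedness, compactness, orientability and boundary-lessness hypotheses of Theorem \ref{MLemma}, which is guaranteed by the general theory of stable MOTS in this axisymmetric, asymptotically flat context.
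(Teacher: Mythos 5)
Your proposal is correct and follows exactly the route the paper intends: the corollary is stated there as an immediate consequence of Theorem \ref{MLemma}, since ``zero temperature'' in the sense of (\ref{TEMP}) is equivalent to $A=8\pi|J|$ and the apparent horizon is an embedded, orientable, compact, boundary-less surface in a data set satisfying the hypotheses of that theorem. Your only additions (the sign analysis of the numerator of (\ref{TEMP}) and the check of the hypotheses on the MOTS) are just the routine details the paper leaves implicit.
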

\begin{figure}[h]
\centering\includegraphics[width=10cm,height=5cm]{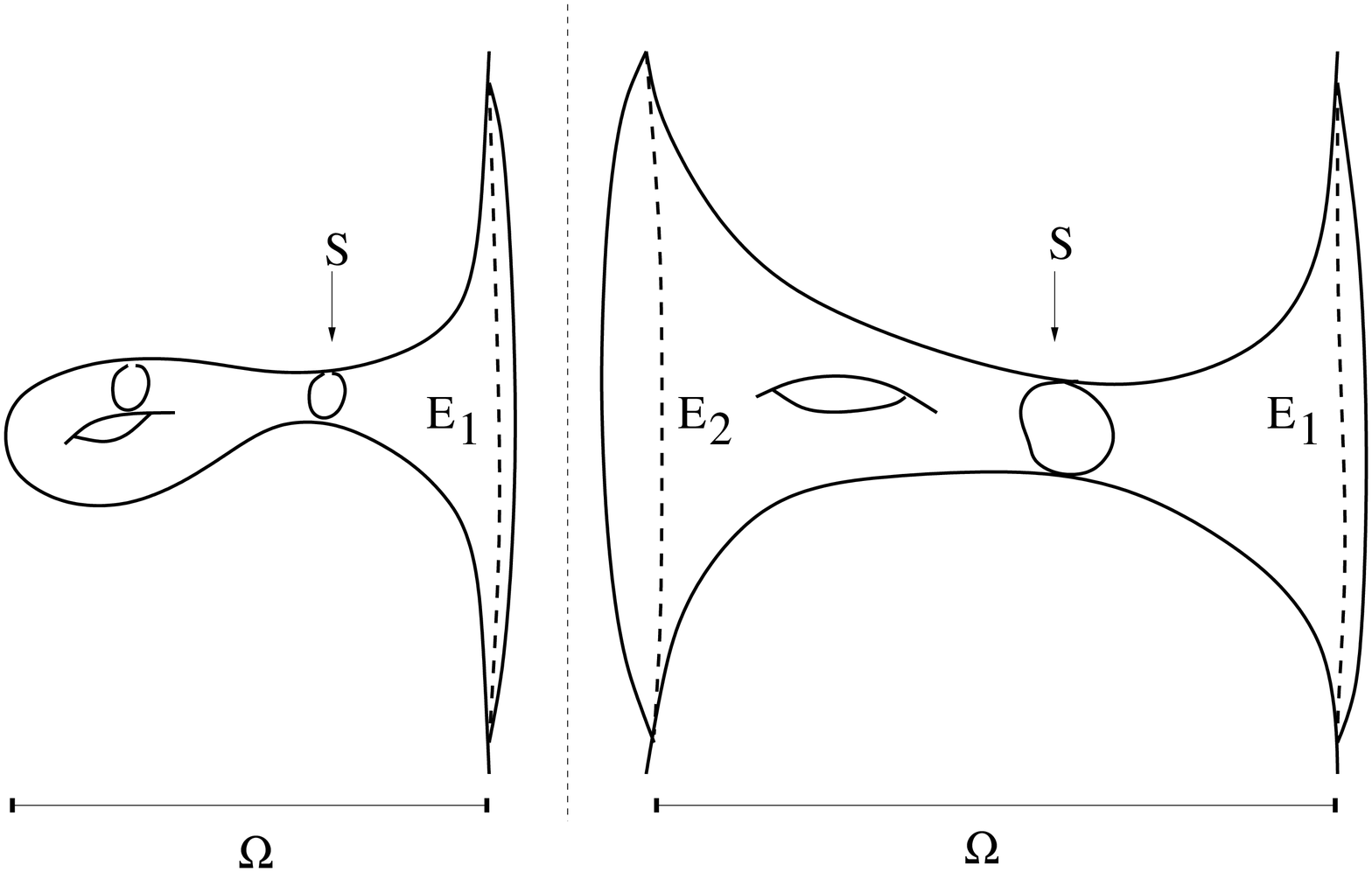}
\caption{Two possible configurations of topological black holes with the horizons marked with a $S$. On the left there is only one asymptotically flat end $E_{1}$, while on the right there are two, $E_{1}$ and $E_{2}$. Described are also large convex spheres on every asymptotically flat end. The horizon on the left has zero angular momentum because it encloses a compact boundary. The other surface shown on the left as well as the horizon shown on the right can have a priori non-zero angular momentum.}
\label{Fig1}\end{figure} 
Of course the extreme Kerr-throats are instances of maximal axisymmetric data sets possessing extreme Kerr-throat spheres but they are not asymptotically flat. The following theorem partially explains the special role that the extreme Kerr-throats play among the data sets containing an extreme Kerr-throat sphere.    
\begin{Theorem}\label{Lemma3} {\rm (Rigidity of extreme Kerr-throats)} Let $(\Sigma;g,K)$ be an homogeneously regular smooth axisymmetric maximal vacuum data set where $\Sigma$ is diffeomorphic to $S^{2}\times \mathbb{R}$. Suppose that for any sphere $S$ isotopic to the factor $S^{2}$ we have $A(S)\geq 8\pi |J|$ where $J=J(S)=J([S^{2}])$, and suppose that there is at least one sphere $S_{H}$ also isotopic to the factor $S^{2}$ with $A(S_{H})=8\pi |J|$. Then $(\Sigma;g,K)$ is the extreme Kerr-throat data set (\ref{mKt})-(\ref{KKt}) of angular momentum $|J|$ and $S_{H}$ is an extreme Kerr-throat sphere of area $A(S_{H})=8\pi |J|$.
\end{Theorem}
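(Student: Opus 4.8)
\emph{Sketch of proof.} The plan is to locate $S_H$ as an extreme Kerr-throat sphere, to propagate this rigidity to a neighbourhood, to globalise it by a connectedness argument, and then to read off $g$ and $K$ from the resulting foliation.

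\emph{Step 1: the surface $S_H$ and the structure of extreme Kerr-throat spheres.} Since $J(S)=J$ for every sphere $S$ isotopic to the factor $S^2$ (dependence on the homology class only), the hypothesis $A(S)\ge 8\pi|J|$ with equality at $S_H$ says exactly that $S_H$ minimises the area among isotopic spheres. Hence the first variation of the area at $S_H$ (in any such direction) vanishes and the second is non-negative, so $S_H$ is a stable minimal axisymmetric sphere with $A(S_H)=8\pi|J(S_H)|$; by the equality-rigidity result of \cite{2011PhRvL.107e1101D} recalled above, $S_H$ is an extreme Kerr-throat sphere. The same reasoning shows that \emph{any} axisymmetric isotopic sphere of area exactly $8\pi|J|$ is an extreme Kerr-throat sphere. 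I also record that every extreme Kerr-throat sphere $S$ is stable minimal and its Jacobi (stability) operator $L_S=\Delta_S+|\mathrm{II}_S|^2+\mathrm{Ric}_g(\varsigma,\varsigma)$ has first eigenvalue $0$ with positive, hence simple, first eigenfunction $\alpha_T$: indeed $\mathrm{II}_S=0$ and, by the Gauss equation together with the vacuum maximal Hamiltonian constraint $R_g=|K|^2_g$, the function $\mathrm{Ric}_g(\varsigma,\varsigma)$ along $S$ is already determined by the extreme-Kerr-throat data (\ref{INDUMETRIC})--(\ref{INDUSFF}) and so coincides with its value in the model.

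\emph{Step 2: local propagation (the crux).} Fix an extreme Kerr-throat sphere $S$. Using $\alpha_T\in\ker L_S$, a Lyapunov--Schmidt reduction in spaces of axisymmetric functions produces a family $\{S_t\}_{|t|<\varepsilon}$ of axisymmetric spheres isotopic to the factor, with $S_0=S$, variation vector field $\alpha_T\varsigma$ at $t=0$ (the transverse part being $O(t^2)$), and weighted mean curvature $H(S_t)=c(t)\,\alpha_T$ for a function $c$ with $c(0)=0$. The first variation formula gives $\tfrac{d}{dt}A(S_t)=-c(t)\int_{S_t}\alpha_T f_t\,d\mu_t$ with $f_t>0$ for $t$ small, so $\tfrac{d}{dt}A(S_t)$ has the sign of $-c(t)$; since $A(S_t)\ge 8\pi|J|=A(S_0)$, this derivative changes sign at $t=0$, hence so does $c$. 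It remains to upgrade this to $c\equiv 0$. For that I would establish a one-sided monotonicity of the weighted mean curvature along the flow, by combining the evolution equation for $H$ with the Gauss equation on $S_t$, the Hamiltonian and momentum constraints (the latter through the twist potential, which carries $J$), maximality, axisymmetry, and Gauss--Bonnet on the genus-zero leaves --- that is, by running at second order the mechanism that produces (\ref{AJI}) at first order --- and using that the extreme Kerr-throat saturates every inequality involved. Together with the sign change this would force $c\equiv 0$, so each $S_t$ is minimal with $A(S_t)=8\pi|J|$ and hence, by Step 1, an extreme Kerr-throat sphere. (Alternatively one may try to show $A(S_t)$ is constant directly, or to invoke a unique-continuation property of the axisymmetric maximal vacuum constraint system from the extreme-Kerr-throat Cauchy data on $S$.) \textbf{This is the step I expect to be the main obstacle.}

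\emph{Step 3: globalisation and conclusion.} Let $\mathcal{A}\subset\Sigma$ be the set of points lying on some axisymmetric extreme Kerr-throat sphere isotopic to the factor. It is non-empty ($S_H\subset\mathcal{A}$) and, by Step 2, open. It is also closed: if $p_i\to p$ with $p_i\in S_i$, then the $S_i$ are axisymmetric stable minimal spheres of fixed area $8\pi|J|$, so homogeneous regularity and the curvature estimates for stable minimal surfaces prevent collapse or escape, and a subsequence converges smoothly to an axisymmetric minimal isotopic sphere through $p$ of area $8\pi|J|$, which by Step 1 is an extreme Kerr-throat sphere; here the hypothesis of homogeneous regularity is essential. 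Hence $\mathcal{A}=\Sigma$. By the strong maximum principle two such spheres coincide or are disjoint, and since $\Sigma\cong S^2\times\mathbb{R}$ they are nested, so they form a smooth foliation $\{S_{\tilde r}\}_{\tilde r\in\mathbb{R}}$ of $\Sigma$ by extreme Kerr-throat spheres. Writing $g=N^2\,d\tilde r^2+h_{\tilde r}$ with $\partial_{\tilde r}\perp S_{\tilde r}$ and $N=|\partial_{\tilde r}|$: total geodesy of the leaves gives $\partial_{\tilde r}h_{\tilde r}=0$, so $h_{\tilde r}\equiv h_T$ as in (\ref{INDUMETRIC}); the prescribed form (\ref{INDUSFF}) of $K$ on each leaf together with maximality fixes all components of $K$; and since every leaf is minimal, $\partial_{\tilde r}H\equiv 0$ is the Jacobi equation $L_{S_{\tilde r}}N=0$ for the lapse, so $N(\tilde r,\cdot)\in\ker L_{S_{\tilde r}}=\mathbb{R}\,\alpha_T$, i.e. $N=\lambda(\tilde r)\,\alpha_T$. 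Reparametrising $\tilde r$ to absorb $\lambda$ gives $g=\alpha_T^2\,d\tilde r^2+h_T$ and $K=\big(2\sqrt{|J|}\sin^2\theta/(1+\cos^2\theta)^{3/2}\big)\,(d\varphi\,d\tilde r+d\tilde r\,d\varphi)$, which is precisely the extreme Kerr-throat data (\ref{mKt})--(\ref{KKt}) of angular momentum $|J|$; and $S_H$, being one of the leaves, is an extreme Kerr-throat sphere of area $8\pi|J|$.
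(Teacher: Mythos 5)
Your Step 1 is correct and matches the paper, and the final computation in Step 3 (reading off $g$ and $K$ from a foliation by totally geodesic extreme Kerr-throat spheres, with $N=\lambda(\tilde r)\alpha_T$ forced by the Jacobi equation) is essentially the concluding computation of the paper's Lemma \ref{LE1}. But Step 2, which you yourself flag as the main obstacle, is a genuine gap and it is precisely where the theorem lives. Knowing that $c(t)$ changes sign at $t=0$ does not give $c\equiv 0$, and the ``one-sided monotonicity of the weighted mean curvature'' that you would need is neither stated precisely nor proved; it would amount to a strong unique-continuation statement for the axisymmetric maximal vacuum constraints in a purely spatial direction, and nothing in the mechanism behind (\ref{AJI}) supplies it. In particular it is consistent with everything you have established up to that point that $S_H$ is an isolated area minimizer, so the openness needed for your connectedness argument in Step 3 is not available.

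The paper closes exactly this gap by a completely different, dynamical mechanism. It does not try to propagate the throat sphere spatially; instead it argues by contradiction: if the data near $S_H$ is not already foliated by throat spheres (detected via the infima $\underline{A}_i(\tilde p)$ of areas of isotopic spheres in the quotient $\tilde\Sigma$, Lemma \ref{LE1} and Proposition \ref{SHGE}), then Proposition \ref{P2} produces, by unique continuation for the Lapse equation, a positive solution $N_0$ of $\Delta N=|K|^2N$ that is \emph{not} proportional to $\alpha_T$ on $S_H$. Evolving the data in time with this lapse, Proposition \ref{P1} gives $\ddot A(S_H)=-A''_{N_0}(S_H)<0$, so $A_{g(t)}(S_H)<8\pi|J|$ for small $t>0$; on the other hand the geometric-measure-theory minimization in the convexified region $\Omega_{2\delta}(t)$ (confined away from the axis by Proposition \ref{PMSF2}) together with the quasi-preservation estimate $A(S_t)\ge 8\pi|J|-\Lambda_0 t^4$ of Proposition \ref{P4} (which controls the $O(t^2)$ failure of maximality) yields a contradiction at order $t^2$. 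If you want to complete your argument, you will need to import this time-evolution step or find a genuine substitute for it; the static Lyapunov--Schmidt construction alone does not suffice.
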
 

Theorem \ref{Lemma3} is a remarkable manifestation, like in the positive mass theorem, of the non-linearity of the constraint equations. The notion of {\it homogeneously regular} manifold which is explained in Section \ref{HRM} essentially says that the metric is controlled in $C^{2}$ (on certain coordinates) on every metric ball of a uniform radius. We do not know at the moment if this condition can be removed and if only completeness of the data set is enough (it would be nice to answer this question).  It seems however feasible to prove an optimal version of Theorem \ref{Lemma3} prescinding of the topological condition $\Sigma\approx S^{2}\times \mathbb{R}$. 

A consequence of Theorem \ref{Lemma3} is the following important corollary on the formation of Kerr-throats.
\begin{Corollary}\label{Cor2} {\rm (Formation of extreme Kerr-throats)} Let $(\Sigma;g_{i},K_{i})$ be a sequence of smooth maximal vacuum axisymmetric and asymptotically flat initial data sets and let $S_{i}$ be a sequence of spheres embedded in $\Sigma_{i}$. Suppose that the sequence of data sets converges smoothly into a homogeneously regular maximal data set $(\Sigma_{\infty};g_{\infty},K_{\infty})$ and that the sequence of spheres converges to a sphere $S_{\infty}$. If $\Sigma_{\infty}$ is diffeomorphic to $S^{2}\times \mathbb{R}$ and $A(S_{\infty})=8\pi |J(S_{\infty})|$, then the limit data $(\Sigma_{\infty};g_{\infty},K_{\infty})$ is the extreme Kerr-throat of angular momentum $|J(S_{\infty})|$ and $S_{\infty}$ is an extreme Kerr-throat sphere of area $A(S_{\infty})$.
\end{Corollary}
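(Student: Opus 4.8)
The plan is to verify that the limit data $(\Sigma_\infty;g_\infty,K_\infty)$ satisfies every hypothesis of Theorem~\ref{Lemma3} and then to invoke that theorem directly, taking the distinguished ``equality sphere'' there to be $S_\infty$ itself. The sequence $S_i$ enters only through its limit: it guarantees that $S_\infty$ is a genuine embedded sphere. First I would unpack smooth convergence: it supplies diffeomorphisms $\phi_i$ from $\Sigma_\infty$ onto an exhausting sequence of open pieces of $\Sigma_i$ with $\phi_i^{*}g_i\to g_\infty$ and $\phi_i^{*}K_i\to K_\infty$ in $C^\infty_{\mathrm{loc}}$, and -- carrying along the axisymmetric structure -- $\phi_i^{*}\xi_i\to\xi_\infty$ for the axial Killing fields. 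Since the vacuum constraint equations and the maximality condition $\mathrm{tr}_{g}K=0$ are continuous in the $C^{2}$-topology of $(g,K)$, they pass to the limit, so $(\Sigma_\infty;g_\infty,K_\infty)$ is a smooth axisymmetric maximal vacuum data set; it is homogeneously regular and diffeomorphic to $S^{2}\times\mathbb{R}$ by hypothesis. What remains is to produce the two area conditions of Theorem~\ref{Lemma3}.

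For the inequality: given any sphere $S\subset\Sigma_\infty$ isotopic to the $S^{2}$ factor, for $i$ large the surface $\phi_i(S)$ is an embedded sphere in the maximal, axisymmetric, asymptotically flat vacuum data set $(\Sigma;g_i,K_i)$, so the universal inequality~(\ref{AJI}) gives $A_{g_i}(\phi_i(S))\geq 8\pi\,|J_{g_i,K_i}(\phi_i(S))|$. Then I would let $i\to\infty$: the left side converges to $A_{g_\infty}(S)$ because $\phi_i^{*}g_i\to g_\infty$ uniformly on the compact set $S$, and the right side converges to $8\pi|J_{g_\infty,K_\infty}(S)|$ because the Komar angular momentum is the integral over $S$ of a density assembled pointwise and smoothly from $(g,K,\xi)$, all of which converge. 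Hence $A_{g_\infty}(S)\geq 8\pi|J_{g_\infty,K_\infty}(S)|$, and since the angular momentum depends only on the homology class, $J_{g_\infty,K_\infty}(S)=J_{g_\infty,K_\infty}([S^{2}])=:J$, so the inequality holds for every such $S$.

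For the equality sphere: by hypothesis $A(S_\infty)=8\pi|J(S_\infty)|$ with $A(S_\infty)>0$, hence $J(S_\infty)\neq 0$. In a maximal vacuum data set the momentum constraint makes the Komar density divergence-free, so the angular momentum of any surface bounding a compact region vanishes; therefore $S_\infty$ does not bound a compact region and so represents a nontrivial class in $H_2(\Sigma_\infty;\mathbb{Z})\cong\mathbb{Z}$. By the classification of embedded $2$-spheres in $S^{2}\times\mathbb{R}$, $S_\infty$ is then isotopic to the $S^{2}$ factor, whence $J(S_\infty)=J$ and $A(S_\infty)=8\pi|J|$. At this point Theorem~\ref{Lemma3} applies with $S_H:=S_\infty$ and identifies $(\Sigma_\infty;g_\infty,K_\infty)$ with the extreme Kerr-throat~(\ref{mKt})--(\ref{KKt}) of angular momentum $|J|=|J(S_\infty)|$ and $S_\infty$ with an extreme Kerr-throat sphere of area $A(S_\infty)$, which is the assertion.

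The hard part is not conceptual -- all the depth is already packed into Theorem~\ref{Lemma3} -- but rather the soft bookkeeping in the first step: one must be careful that ``converges smoothly'' delivers a convergent family of diffeomorphisms \emph{together with} a convergent family of axial Killing fields (equivalently, convergent $U(1)$-actions), so that the limit is honestly axisymmetric with $\xi_\infty\not\equiv 0$ and so that the Komar integral is continuous along the sequence, and one must check that each $(\Sigma;g_i,K_i)$ genuinely carries the asymptotic flatness, maximality, vacuum and axisymmetry required to apply~(\ref{AJI}) to the transported spheres $\phi_i(S)$. The purely topological ingredient -- that a homologically nontrivial embedded $2$-sphere in $S^{2}\times\mathbb{R}$ is isotopic to a fibre -- is standard.
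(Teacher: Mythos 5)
Your proposal is correct and follows exactly the route the paper intends: the corollary is stated as a direct consequence of Theorem~\ref{Lemma3}, and your argument supplies precisely the required bookkeeping (constraints, maximality, axisymmetry and the inequality $A\geq 8\pi|J|$ passing to the limit via the universal inequality (\ref{AJI}) on the asymptotically flat approximants, plus the homological argument placing $S_{\infty}$ in the isotopy class of the $S^{2}$ factor). No substantive difference from the paper's reasoning.
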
  
The precise notions of convergence involved in this statement are the following. The sequence $(\Sigma;g_{i},K_{i})$ converges smoothly to $(\Sigma_{\infty};g_{\infty},K_{\infty})$ iff there is a sequence of diffeomorphisms $\varphi_{i}:\Sigma_{\infty}\rightarrow \Sigma_{i}$ such that $\varphi_{i}^{*}\, g_{i}$ and $\varphi^{*}_{i}\, K_{i}$ converge in $C^{\infty}$ and over any open set of compact closure to $g_{\infty}$ and $K_{\infty}$ respectively. The spheres $S_{i}$ converge smoothly to $S_{\infty}$ iff there are diffeomorphisms $\phi_{i}:S^{2}\rightarrow S_{i}$ such that $\varphi_{i}^{-1}\circ \phi_{i}: S^{2}\rightarrow \Sigma_{\infty}$ converges in $C^{\infty}$ to a smooth embedding $S^{2}\rightarrow S_{\infty}\subset \Sigma_{\infty}$. 

Roughly speaking what the corollary says is that under basic assumptions a sequence of asymptotically flat, maximal axisymmetric data $(\Sigma_{i};g_{i},K_{i})$ and sequence of embedded spheres $S_{i}$ can asymptotically saturate the universal inequality (\ref{AJI}) only at the expense of the formation of an extreme Kerr-throat. More heuristically: extreme Kerr-throats form as the ``temperature decreases to zero". This phenomenon is depicted in Figure \ref{Figure2}. 
\begin{figure}[h]
\centering
\includegraphics[width=12cm,height=6cm]{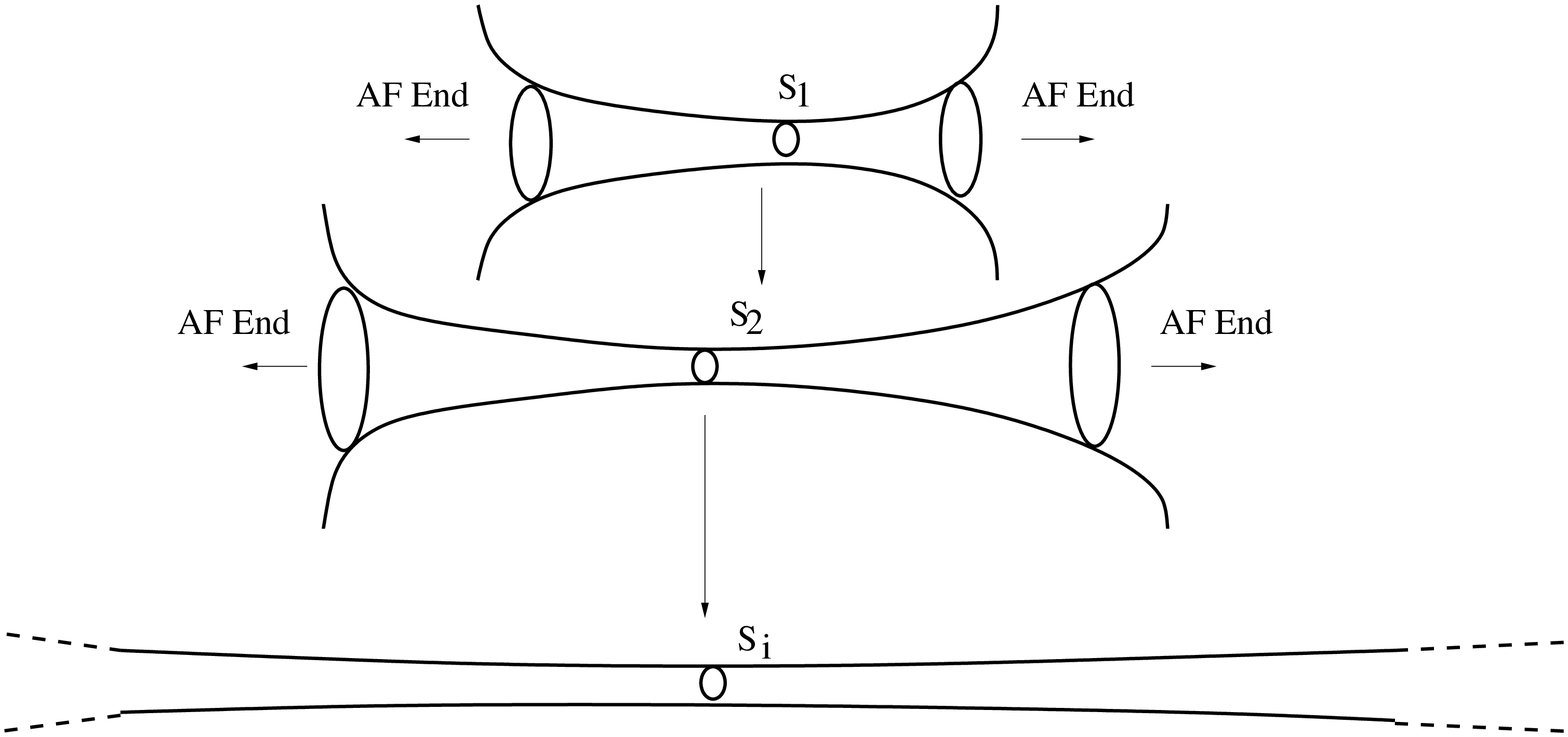}
\caption{The phenomenon of the formation of extreme Kerr-throats along sequence of data sets. The inequality (\ref{AJI}) is saturated asymptotically along the spheres $\{S_{i}\}$.}
\label{Figure2}
\end{figure} 

It is interesting to see Theorem \ref{MLemma} and Corollary \ref{Cor2} at work in the Kerr family of black holes. For the discussion that follows it is worth keeping in mind the Penrose diagram in Figure \ref{Fig2} of the Kerr black holes for $0<J^{2}<m^{4}$. To facilitate the discussion we will assume $m=1$ and therefore $0<a=J^{2}<1$. The space-time is represented in the Penrose diagram in four sectors. The quadrant on the right corresponds to the metric (\ref{Km}) for the range of coordinates 
\ben
\bigg\{-\infty<t<\infty,\ 1+(1-a^{2})^{\frac{1}{2}}<r<\infty,\ 0\leq \theta\leq \pi,\ 0\leq \varphi<2\pi\bigg\},
\een
and the upper quadrant corresponds to the range of coordinates
\ben
\bigg\{-\infty<t<\infty,\ 1-(1-a^{2})^{\frac{1}{2}}<r<1+(1-a^{2})^{\frac{1}{2}},\ 0\leq \theta\leq \pi,\ 0\leq \varphi<2\pi\bigg\}.
\een
Note that in this case $\partial_{t}$ is space-like while $\partial_{r}$ is time-like. 
One can check directly Theorem \ref{MLemma} for the $\{t=0\}$ maximal slice which has the black hole horizon located at $\{r=1+(1-a^{2})^{\frac{1}{2}}\}$.  The area is 
\ben
A=(1+(1-a^{2})^{\frac{1}{2}}) 8\pi,
\een
which is always greater than $8\pi$ and converges to $8\pi$ as $a\uparrow 1$. 
A more interesting phenomenon occurs when we see Corollary \ref{Cor2} in the light of the maximal slices given by the evolution in the maximal gauge of the initial slice. The maximal foliation penetrates inside the black hole region (the upper quadrant) and approaches in the long-time limit to a maximal slice, ``the last slice", lying entirely inside the black hole region [\footnote{We do not know an explicit expression for the last slice.}] (see Figure \ref{Fig2}). One observes now that the area of the spheres of constant $r$ along the hyper-surface $\{t=1+(1-a^{2})^{\frac{1}{2}}\}$ in the upper quadrant, evolve from $(1+(1-a^{2})^{\frac{1}{2}})8\pi$ monotonically to $(1-(1-a^{2})^{\frac{1}{2}})8\pi$. 
The area of the sphere formed by the intersection of the ``last slice" and $\{t=1+(1-a^{2})^{\frac{1}{2}}\}$ thus approaches to $8\pi$ as $a\uparrow 1$. According to Corollary \ref{Cor2} the ``last slice" must approach the Kerr-throat as $a\uparrow 1$. This can be seen explicitly by studying carefully the metric (\ref{Km}) as $a\uparrow 1$. What occurs is a remarkable phenomenon. As $a\uparrow 1$ the metric (\ref{Km}) in the upper quadrant of the Penrose diagram converges into the following metric
\begin{align}
\label{ms} {\bf g}=&-(1+\cos^{2}\bar{\theta})d\bar{t}^{2}+(1+\cos^{2}\bar{\theta})((\tan \bar{t})\bar{r} d\bar{t} +d\bar{r})^{2}\\
\nonumber &+\bigg[\frac{4\sin^{2}\bar{\theta}}{(1+\cos^{2}\bar{\theta})}\bigg](\bar{r}d\bar{t}-d\bar{\varphi})^{2}+(1+\cos^{2}\bar{\theta})^{2}d\bar{\theta}^{2},
\end{align}
where we have barred the space-time coordinates to emphasize that they are not the same as the coordinates $(t,r,\theta,\varphi)$ which indeed degenerate as $a\uparrow 1$. This metric we call a {\it metric soliton} in the sense that, as seen as a flow $(g,K;N,X)(\bar{t})$ over $S^{2}\times \mathbb{R}$ with coordinates $(\bar{r},\bar{\theta},\bar{\varphi})$, we have
\begin{align*}
& g(\bar{t})=g_{T},\vs\\
& K(\bar{t})=K_{T}+(1+\cos^{2}\bar{\theta})^{\frac{1}{2}}(\tan \bar{t}) d\bar{r}^{2},\vs\\
& N(\bar{t})=(1+\cos^{2}\bar{\theta})^{\frac{1}{2}},\vs\\
& X(\bar{t})=\bar{r}(\tan \bar{t}) \partial_{\bar{r}}-\bar{r}\partial_{\bar{\varphi}},
\end{align*}
where, as it is apparent, the metric $g$ does not evolve and remains equal to the three-metric $g_{T}$ of the Kerr-throat given by (\ref{mKt}). The only slice $\{\bar{t}=const.\}$ with zero mean curvature is $\{\bar{t}=0\}$ and is the limit of the ``last slices" as $a\uparrow 1$. As $\tan (\bar{t}=0)=0$ we conclude from the expression above that the data $(g,K)$ over the slice $\{\bar{t}=0\}$ is exactly the Kerr-throat. Note that $t\in (-\pi/2,\pi/2)$. The metric (\ref{ms}) is globally hyperbolic and partly coincides with the so called {\it near horizon geometry} (see \cite{2009JHEP...09..044A}, Eq. (2.5)) that has been extensively studied in the literature and is not globally hyperbolic. The spacetime described by the metric (\ref{ms}) has the remarkable property that is foliated by marginally trapped spheres saturating (\ref{AJI}). In a certain sense the whole solution is a horizon.  
To obtain the expression  (\ref{ms}) make the change of variables $(t,r,\theta,\varphi)\rightarrow (\bar{t},\bar{r},\bar{\theta},\bar{\varphi})$ 
\ben
\bar{t}=\frac{a(-\Delta)^{\frac{1}{2}}}{r^{2}+a^{2}}t,\quad
\bar{r}=\arcsin \frac{1-r}{(1-a^{2})^{\frac{1}{2}}}\quad
\bar{\theta}=\theta,\quad \text{and}\quad 
\bar{\varphi}=\varphi-\frac{a}{r^{2}+a^{2}}t,
\een
in the expression (\ref{Km}) and take the limit as $a\uparrow 1$.

\vs
We glimpse now on the basic idea behind the proof of Theorem \ref{MLemma}. We suppose by contradiction that there is a data set $(\Sigma;g_{0},K_{0})$ as in the hypothesis of Theorem \ref{MLemma} but possessing an embedded (orientable, compact and boundary-less) surface $S=S_{H}$ saturating (\ref{AJI}). 
Then, as discussed before, the surface $S_{H}$ must be an extreme Kerr-throat sphere. Let $(g(t),K(t))$ be the evolution of the initial data $(g_{0},K_{0})=(g(0),K(0))$ in the maximal gauge and with zero shift (see Section \ref{VEE}). In particular every $(\Sigma;g(t),K(t))$ is a maximal axisymmetric and asymptotically flat vacuum data set. Then, as proved by a simple calculation in Proposition \ref{P1}, unless the lapse function $N(0)$ at the time zero is proportional to $\alpha_{T}$ over $S_{H}$
then the $g(t)$-area of the surface $S_{H}$ strictly decreases in short times. But as $J(S_{H})$ is preserved then the universal inequality (\ref{AJI}) would be violated in short times. Unfortunately this simple argument works as long as  $N(0)$ is not proportional to $\alpha_{T}$ over $S_{H}$. It takes some technical work and a large part of this article to provide a proof of the Theorem \ref{MLemma} in the spirit described but which also contemplates this possibility. The proof of Theorem \ref{Lemma3} is based in the same principle. We will be explaining more along the article.        

The article is organized as follows. In Section \ref{BN} we recall the very basic notions and definitions required to read the article. Section \ref{PMR} contains the proof of Theorems \ref{MLemma} and \ref{Lemma3}. In the Appendix we prove some propositions used in the proofs of the main results whose contents are a bit apart from the mainstream of the article.

\section{Basic notions.}\label{BN}

In this section we introduce the basic notions that we will use during the article. First we recall the Einstein vacuum equations from the dynamical point of view. This point of view is the one appropriate to analyze initial data as we will do in Theorems \ref{MLemma} and \ref{Lemma3}. Then we recall the second variation formula for minimal surfaces on three manifolds and the form that they acquire when the metric is the metric of an initial data for vacuum solutions. After that we introduce areal coordinates on axisymmetric spheres and other coordinates which play a crucial role in certain formulae. Finally we discuss basic but important properties of stable surfaces in axisymmetric three manifolds whose proofs are given in the Appendix. The presentation is somehow general and could be of use in further research.      
\subsection{The vacuum Einstein equations.}\label{VEE} 
Let $\Sigma$ be a fixed smooth manifold 
(say covered by a set of fixed charts $(x^{1},x^{2},x^{3})$). Given a Riemannian metric $g$ we will denote by $Ric$ and $R$ the Ricci and the scalar curvatures of $g$ respectively. Given a symmetric two tensor $K$, we will use the notation $|K|^{2}=K_{ij}K^{ij}$ and $k=tr_{g}K=K_{ij}g^{ij}$. Finally given a vector field $X=X^{i}$, ${\mathcal{L}}_{X}$ will denote the Lie derivative along $X$. 
Let $(g,K;N,X)(t)$ be a smooth flow over $\Sigma$ of: 
\begin{align*}
&\text{Riemannian metrics $g(t)=g_{ij}(t)$ and Symmetric two-tensors $K(t)=K_{ij}(t)$},\vs\\ 
&\text{Lapse functions $N(t)>0$ and Shift vector fields $X(t)=X^{i}(t)$}.
\end{align*}
If for all $t$ (in its range $I$) we have
\begin{align}
\label{DEG} & \dot{g}_{ij}=-2NK_{ij}+({\mathcal{L}}_{X}g)_{ij},\\
\label{DEK} & \dot{K}_{ij}=-\nabla_{i}\nabla_{j} N+N(Ric_{ij}+kK_{ij}-2K_{i}^{\ k} K_{kj})+({\mathcal{L}}_{X} K)_{ij},\\
\label{EC} & R=|K|^{2}-k^{2},\\
\label{MC} & \nabla^{i}K_{ij}=\nabla_{j}k,
\end{align}
then the $3+1$ metric
\ben
{\bf g}=-(N^{2}-X_{i}X^{i})dt^{2}+X_{i}(dt\otimes dx^{i}+dx^{i}\otimes dt)+g_{ij}dx^{i}dx^{j},
\een
is a solution of the Einstein vacuum equations (${\bf Ric}=0$) on $I\times \Sigma$ where $I$ is the interval on which $t$ varies \cite{MR583716}. 
If we let $\mathfrak{n}$ be a unit-normal to the level sets of the time coordinate (which are space-like hyper-surfaces) we have $\partial_{t}=N\mathfrak{n}+X^{i}\partial_{i}$. Moreover $K(t)$ are the second fundamental forms of the level sets of the time coordinate, namely of $\{t\}\times \Sigma$. 
Equations (\ref{DEG})-(\ref{DEK}) are the {\it dynamical equations} and (\ref{EC})-(\ref{MC}) are the {\it energy} and {\it momentum constraints equations} respectively. The evolution is said to be {\it maximal} if $k(t)=0$ for all $t$. In such case the Lapse satisfies, at any time, the {\it Lapse equation}
\be\label{Lapse}
\Delta N=|K|^{2}N.
\ee
Conversely if $\Sigma$ is a space-like hyper-surface (possibly with boundary) on a vacuum space-time and $V$ is a non-zero time-like vector field defined on an open space-time neighborhood of $\Sigma$ then one can obtain a flow $\Sigma(t)$, at least for a short time, by moving $\Sigma$ along $V$. Coordinates charts $(x^{1},x^{2},x^{3})$ are propagated by $V$ to every $\Sigma(t)$ and any two $\Sigma(t)$ and $\Sigma(t')$ are naturally diffeomorphic. In this way one obtains a flow $(g_{ij}(t),K_{ij}(t))$ for the induced three-metrics and second fundamental forms on the {\it fixed} manifold $\Sigma$. Writing $V|_{\Sigma(t)}=N(t)\mathfrak{n}+X^{i}(t)\partial_{i}$, where $\mathfrak{n}$ is a space-time unit normal to $\Sigma(t)$, then one obtains a flow of Lapse functions $N(t)$ and Shift vectors $X(t)=X^{i}\partial_{i}$. The flow $(g(t),K(t);N(t),X(t))$ satisfies (\ref{DEG})-(\ref{MC}).    
{\it It is important to stress that the flow is on a fixed manifold $\Sigma$.}

\subsection{Angular momentum.}\label{ANGULARMOMEN}

Let $S$ be an orientable compact and boundary-less surface embedded in an axisymmetric data set $(\Sigma;g,K)$. Let $\varsigma$ be one of the two unit-normal fields to $S$ in $\Sigma$. Then the Komar angular momentum $J(S)$ (in the direction of $\varsigma$) is 
\be\label{KAM}
J(S):=\frac{1}{8\pi}\int_{S} K(\xi,\varsigma)\, dA.
\ee
In this expression $\xi$ is the rotational Killing field, $K$ is the second fundamental form of the data set and $\varsigma$ is a unit normal to $S$. Of course with the other choice of the normal $\varsigma$ the angular momentum just changes sign. Note that by the axisymmetry and (\ref{MC}) we have $\nabla^{i}(K_{ij}\xi^{j})=0$. This shows that $J(S)$ depends only on the homology class of $S$.  

\subsection{Minimal surfaces and the second variation of area.}
Let $(\Sigma;g,K)$ be a data set. An embedded surface $S$ is minimal if its mean curvature is identically zero. Suppose that $S$ is a compact orientable minimal surface possibly with boundary. 
Let $\varsigma$ be a unit normal vector field to $S$ and let  $\alpha:S\rightarrow \mathbb{R}$ be a smooth function that is zero on $\partial S$ when $\partial S\neq \emptyset$. Then, the second variation of the area, $A''_{\alpha}(S)$, for the deformation of $S$ 
along $\alpha\varsigma$ is the well known \cite{MR2780140}
\be\label{SIM}
A''_{\alpha}(S)=\int_{S}\big[\, |d \alpha|^{2}-(|\Theta|^{2}+Ric(\varsigma,\varsigma))\,\alpha^{2}\big]\, dA,
\ee  
where here $\Theta$ is the second fundamental form of $S$. The mean curvature will be denoted by $\meanc$. The minimal surface $S$ is said to be stable if $A''_{\alpha}(S)\geq 0$ for all $\alpha$. In 
dimension three we have the general identity $2{\mathcal{K}}=(\meanc)^{2}-|\Theta|^{2}+R-2Ric(\varsigma,\varsigma)$, where ${\mathcal{K}}$ is the Gaussian curvature of $S$ and of course $\meanc=0$ when $S$ minimal. From this and the energy constraint we deduce that if $S$ is stable then for any $\alpha$ as described before we have
\be\label{SIMD} 
\int_{S} \big(\, |d\alpha|^{2}+{\mathcal{K}} \alpha^{2}\, \big)\, dA\geq \frac{1}{2} \int_{S} \big(\, |\Theta|^{2} +|K|^{2}-k^{2}\, \big)\, \alpha^{{2}}\, dA.
\ee
\subsection{Areal and polar coordinates for axisymmetric spheres and further coordinates.}\label{APCO}  

Let $(\Sigma,g)$ be an a axisymmetric manifold. Then the group $U(1)$ acts on $(\Sigma,g)$ and the orbits are either circles or fixed points. The set of fixed points are a set of complete one dimensional manifolds (i.e. diffeomorphic to lines or circles) which we call the axes and denote by $\poles$.    

\begin{enumerate}
\item[I.] {\it Areal coordinates on spheres}. Let $S\subset \Sigma$ be an axisymmetric sphere. Then $S$ is foliated by $U(1)$-orbits $\{C\}$. Two of these orbits, that we call the poles, are just points and are denoted by $\north$ (from ``North") and $\south$ (from ``South"). Every orbit which is not a pole divides $S$ into two discs that we denote by $D^{\north}(C)$ and $D^{\south}(C)$ (the first contains $\north$ and the second $\south$). Their areas are denoted by $A^{\north}(C)=\text{Area}(D^{\north}(C))$ and $A^{\south}(C)=\text{Area}(D^{\south}(C))$. Define the polar coordinate $\theta(C)$ at $C$ by
\ben
A^{\north}(C)=\frac{A(S)}{2}\big(1-\cos \theta(C)\big).
\een
To define an azimuthal coordinate $\varphi$ proceed as follows. The gradient of $\theta$ (inside $S$) defines a vector field perpendicular to the orbits and invariant under the $U(1)$-action. The integral curves foliate smoothly $S\setminus \{\north,\south\}$. Denote the foliation by $\{C'\}$. Pick any integral curve to define $\{\varphi=0\}$. Then $\varphi(C')$ is the angle necessary to rotate $\{\varphi=0\}$ to get $C'$. In these coordinates the induced metric $h$ (from $g$) on $S$ looks like
\be\label{MH}
h=\bigg[\narea\bigg]^{2}e^{\displaystyle -\sigma}d\theta^{2}+e^{\displaystyle \sigma}\sin^{2}\theta d\varphi^{2},
\ee
where $\sigma=\sigma(\theta)$. Note that $dA=(A(S)/4\pi)\sin\theta d\theta d\varphi$.

\item[II.] {\it Polar coordinates on discs.} In addition to areal coordinates we will use polar coordinates for discs $D^{\north}$ or $D^{\south}$. Around the respective pole the induced metric $h$ in polar coordinates is
\be\label{POLCO}
h=ds^{2}+\bigg[\frac{\ell(s)}{2\pi}\bigg]^{2}d\varphi^{2}.
\ee 
Of course $\{s=0\}$ is the pole and $\ell(s)$ is the length of the orbit $C(s)$ at a $h$-distance $s$ from the pole. Say $\{s=0\}$ is the north pole. Then we will denote $A(s):=A^{\north}(C(s))$. This notation will be used later (see the proof of Proposition \ref{P4}).  In particular $A(s)=\int_{0}^{s} \ell(\bar{s})d\bar{s}$.

\item[III.] {\it Gaussian-coordinates around surfaces.} Let $S$ be an axisymmetric sphere embedded in $\Sigma$ and provided with areal-coordinates $(\theta,\varphi)$. For $\bar{r}>0$ small enough let 
\ben
{\mathcal T}_{g}(S,\bar{r}):=\{p\in \Sigma, dist_{g}(p,S)\leq \bar{r}\},
\een
be the tubular (closed) neighborhood of $S$ of radius $\bar{r}$. For every $q\in S$ let $\gamma_{q}(r)$ be the $g$-geodesic emanating from $q$, perpendicular to $S$ and parametrized with (signed) arc-length $r\in [-\bar{r},\bar{r}]$. For every $p\in {\mathcal T}_{g}(S,\bar{r})$ let $q(p)$ be the initial point in $S$ such that $\gamma_{q(p)}(r(p))=p$ ($|r(p)|=dist_{g}(p,S)$). The $g$-{\it Gaussian coordinates} $(r,\theta,\varphi)$ on ${\mathcal T}_{g}(S,\bar{r})$ are defined through
\ben
(r,\theta,\varphi)(p)=(r(p),\theta(q(p)),\varphi(q(p))).
\een
We will use Gaussian coordinates often in this article. Observe that $\partial_{\varphi}$ is the rotational Killing field. In Gaussian coordinates the metric $g$ is written as $g=dr^{2}+h$ where $h(\partial_{r},-)=h(-,\partial_{r})=0$ at any point. 

\item[IV.] {\it Space-time coordinates.} The Gaussian coordinates are propagated along the evolution (using $(N,X)$) in such a way that $(r,\theta,\varphi,t)$ ($t$ small) is a space-time coordinate system around $S$. 
We will use the index $A,B,\ldots$ when we use only the two coordinates $(\theta,\varphi)$ while we will use the index $i,j,\ldots$ when use the three coordinates $(r,\theta,\varphi)$. In this sense the components of $h$ are $h_{AB}$ while those of $g$ are $g_{ij}$.

\end{enumerate}

Above we introduced ${\mathcal T}_{g}(S,\bar{r})$. More in general, we will use
\ben
{\mathcal T}_{g}(\Omega,\bar{r})=\{p\in \Sigma/ {\rm dist}_{g}(p,\Omega)\leq \bar{r}\},
\een 
to denote the (closed) tubular neighborhood of a set $\Omega\subset \Sigma$ and radius $\bar{r}$. 

\subsection{Homogeneously regular manifolds.}\label{HRM}

A complete manifold is homogeneously regular if the injectivity radius is uniformly bounded from below (away from zero) and the curvature is uniformly bounded from above. A more quantitative but equivalent definition is the following \cite{MR678484}. 
{\it A complete manifold $\Sigma$ is $\rho_{0}$-homogeneously regular if there is $0<\nu<1/2$ such that at any point $p\in \Sigma$ we have
\begin{enumerate}
\item The exponential map $Exp$, from $B_{T_{p}\Sigma}(0,\rho_{0})$ into $B(p,\rho_{0})$ is a diffeomorphism (where $B_{T_{p}\Sigma}(0,\rho_{0})$ is the ball in the tangent space $T_{p}\Sigma$ of center $p$ and radius $\rho_{0}>0$ and $B(p,\rho_{0})$ is the geodesic ball in $\Sigma$ of center $p$ and radius $\rho_{0}$). 
\item Let $\{(\bar{x}^{1},\bar{x}^{2},\bar{x}^{3})\}$ be cartesian coordinates in $T_{p}\Sigma$. Define as usual coordinates on $B(p,\rho_{0})$ by $(x^{1},x^{2},x^{3})=(\bar{x}^{1},\bar{x}^{2},\bar{x}^{3})\circ Exp^{-1}$. Then, 
\ben
\sup_{B(p,\rho_{0})} \big|\,g_{ij}-\delta_{ij}\,\big|\leq \nu,\
\sup_{B(p,\rho_{0})}\bigg|\,\frac{\partial g_{ij}}{\partial x^{k}}\,\bigg|\leq \frac{\nu}{\rho_{0}},\
\sup_{B(p,\rho_{0})}\bigg|\,\frac{\partial g_{ij}}{\partial x^{k}x^{l}}\,\bigg|\leq \frac{\nu}{\rho_{0}^{2}},
\een
\end{enumerate}
for all $i,j,k,l$ in $\{1,2,3\}$.} 

\vs
Essentially, a homogeneously regular manifold is one for which the metric is controlled in $C^{2}$ (on certain coordinates) on every metric ball of a uniform radius.   
It is easy to see from the definition that if $(\Sigma,g)$ is $\rho_{0}$-homogeneously regular then $(\Sigma,\lambda_{0}^{2}g)$, $\lambda_{0}>0$, is $\lambda_{0}\rho_{0}$-homogeneously regular. In particular, for any $\epsilon>0$ there is $\lambda_{0}(\epsilon,\rho_{0})$ such that for any point $p$, the scaled metric $\lambda_{0}^{2}g$ on $B_{\lambda_{0}^{2}g}(p,1)$ is $\epsilon$-close in $C^{2}$ (in the coordinates described above) to the flat metric on the unit ball of $\mathbb{R}^{3}$. Before $B_{\lambda_{0}^{2}g}(p,1)$ is the ball of center $p$ and radius one with respect to $\lambda_{0}^{2}g$. 

For a manifold with boundary the definition is similar, but we require that there is an extension $(\bar{\Sigma},\bar{g})$ of $(\Sigma,g)$ such that the boundary of $\bar{\Sigma}$ lies at a $\bar{g}$-distance greater than $\rho_{0}$ from $\partial \Sigma$ and at every point $p$ of $\Sigma$ the points $1$ and $2$ hold. Of course every compact manifold, with boundary or not, is $\rho_{0}$-homogeneously regular for some $\rho_{0}$. 

\subsection{Minimal surfaces in axisymmetric spaces.}\label{MSAS} Let $B_{\mathbb{R}^{3}}(o,r)$ be the ball in $\mathbb{R}^{3}$ of radius $r$ and centered at the origin $o=(0,0,0)$. We will think $\mathbb{R}^{3}$ as an axisymmetric space-time where the $U(1)$-action is by rotations around the $z$-axis. Every compact, connected and possibly with boundary axisymmetric surface $S$ embedded inside $\overline{B_{\mathbb{R}^{3}}(o,1)}$ (but with its boundary not necessarily in $\partial B_{\mathbb{R}^{3}}(o,1)$) is either 
\begin{enumerate}
\item[${\mathcal{I}}_{0}$.] An axisymmetric sphere or an axisymmetric torus (zero boundary component),
\item[${\mathcal I}_{1}$.] An axisymmetric disc (one boundary component),
\item[${\mathcal I}_{2}$.] An axisymmetric cylinder, namely diffeomorphic to $S^{1}\times [0,1]$ (two boundary components),
\end{enumerate}
Let ${\mathscr{S}}_{i}$, $i=0,1,2$, be the set of connected axisymmetric surfaces embedded in $B_{\mathbb{R}^{3}}(o,1)$ of type ${\mathcal I}_{i},\ i=0,1,2$ respectively and with boundary, if any, lying in $\partial B_{\mathbb{R}^{3}}(o,1)$. Let $C$ be an orbit in $B_{\mathbb{R}^{3}}(o,1)$ which is not a point in the axis. Let ${\mathscr{I}}^{C}_{1}$ be the set of axisymmetric discs in ${\mathcal I}_{1}$ with boundary $C$. 


The following Proposition is straightforward (indeed the first {\it item} is trivial) from standard properties of minimal surfaces. As it is standard and the proof has few to do with the content of the article we divert the proof until the Appendix. 
\begin{Proposition}\label{PMSF} There is $0<L<1$ such that, 
\begin{enumerate}
\item For every orbit $C\subset B_{\mathbb{R}^{3}}(o,L)$, which is not a point in the axis, there is a unique disc $D$ in 
${\mathscr{S}}^{C}_{1}$ which is minimal. Such unique disc minimizes area among all the surfaces in the family of surfaces 
${\mathscr{S}}^{C}_{1}$ and therefore is stable.
\item There are no stable axisymmetric minimal surfaces in the family of surfaces ${\mathscr{S}}_{0}\cup {\mathscr{S}}_{2}$ and intersecting $B_{\mathbb{R}^{3}}(o,L)$.
\end{enumerate}
\end{Proposition}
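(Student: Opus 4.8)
The plan is to prove Proposition \ref{PMSF} by compactness together with the classical structure theory of stable minimal surfaces in $\mathbb{R}^{3}$, exploiting the fact that in a \emph{small} ball the ambient metric is nearly flat (here it is exactly flat, which is the setting of this Proposition, but the argument is robust). I first dispose of item 1. Fix an orbit $C\subset B_{\mathbb{R}^{3}}(o,L)$ that is not a point on the axis; it bounds at least one axisymmetric disc, and among all surfaces in ${\mathscr{S}}^{C}_{1}$ the infimum of area is finite and positive. Using the standard solution of the Plateau problem in the convex body $\overline{B_{\mathbb{R}^{3}}(o,1)}$, plus the fact that the obstacle/boundary geometry is axisymmetric, I would obtain an area-minimizing disc, and by the uniqueness of the $U(1)$-equivariant minimizer (or simply by averaging a minimizer over the $U(1)$-action, which cannot increase area) this minimizer is axisymmetric; minimality of area gives stability automatically. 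Uniqueness for $C$ in a sufficiently small ball follows because two distinct minimal discs with the same boundary and lying in a nearly flat ball would have to intersect, contradicting the maximum principle for minimal surfaces; alternatively one invokes that in a small enough geodesic ball minimal surfaces are graphical and the minimal surface equation has a unique solution with prescribed small boundary data. This fixes the constant $L$ for item 1, and since here the metric is literally the Euclidean one the relevant $L$ is explicit.

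For item 2, the claim is a non-existence statement: no stable axisymmetric minimal sphere, torus, or cylinder meeting $B_{\mathbb{R}^{3}}(o,L)$, with boundary (if any) on $\partial B_{\mathbb{R}^{3}}(o,1)$. I would argue by contradiction and compactness. Suppose there were a sequence $L_{n}\downarrow 0$ and stable axisymmetric minimal surfaces $S_{n}\in{\mathscr{S}}_{0}\cup{\mathscr{S}}_{2}$ intersecting $B_{\mathbb{R}^{3}}(o,L_{n})$. Stable minimal surfaces in $\mathbb{R}^{3}$ satisfy uniform curvature estimates away from their boundary (the Schoen curvature estimate), so on the interior ball $B_{\mathbb{R}^{3}}(o,1/2)$ the surfaces $S_{n}$ have uniformly bounded second fundamental form; hence a subsequence converges (in $C^{\infty}_{loc}$, with multiplicity) to a stable minimal surface $S_{\infty}$ passing through the origin $o$. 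A complete stable minimal surface in $\mathbb{R}^{3}$ is a plane (do Carmo--Peng, Fischer-Colbrie--Schoen, Pogorelov); here $S_{\infty}$ is only a piece of such a surface, but by the axisymmetry $S_{\infty}$ is an axisymmetric minimal surface through $o$, and an axisymmetric minimal surface in $\mathbb{R}^{3}$ is a piece of either a plane through the origin orthogonal to the $z$-axis, a catenoid, or the $z$-axis-containing flat half-planes (which are not smooth embedded surfaces through an interior orbit); one checks the catenoid cannot pass through the rotation axis while remaining smooth, so $S_{\infty}$ is a flat disc transverse to the $z$-axis. The contradiction is topological: a flat disc through $o$ cannot be the limit of surfaces of genus-zero-with-zero-boundary (sphere), of tori, or of cylinders in a way compatible with the boundary constraint on $\partial B_{\mathbb{R}^{3}}(o,1)$ — more precisely, near $o$ the surfaces $S_{n}$ would be graphs over this disc, forcing them to be discs locally, and tracking them out to $\partial B_{\mathbb{R}^{3}}(o,1)$ contradicts their topology (a sphere or torus has no boundary, and a cylinder meeting the small ball would have to close up there).

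A cleaner route for item 2, which I would actually prefer to write, is to use the stability inequality directly with the test function $\alpha\equiv 1$ (legitimate only when $\partial S=\emptyset$, i.e.\ for the sphere and torus cases) or with a logarithmic cutoff, combined with the Gauss equation in the flat background: for a minimal surface in $\mathbb{R}^{3}$ one has $|\Theta|^{2}+Ric(\varsigma,\varsigma)=|\Theta|^{2}=-2{\mathcal K}+(\meanc)^{2}=-2{\mathcal K}$, so stability reads $\int_{S}|d\alpha|^{2}\geq \int_{S}(-2{\mathcal K})\alpha^{2}$. For a closed surface take $\alpha=1$: $0\geq -2\int_{S}{\mathcal K}\,dA=-4\pi\chi(S)$, i.e.\ $\chi(S)\geq 0$; this already rules out the torus ($\chi=0$ forces ${\mathcal K}\equiv 0$, hence $\Theta\equiv 0$, hence $S$ flat and totally geodesic, impossible for a compact surface in $\mathbb{R}^{3}$) and, for the sphere, it forces $S$ to have small total curvature concentrated so that by the monotonicity formula its area is bounded below by $\pi r^{2}$ if it meets $B_{\mathbb{R}^{3}}(o,r)$ and reaches $\partial B_{\mathbb{R}^{3}}(o,1)$ — but a closed minimal surface in $\mathbb{R}^{3}$ does not exist at all (there are no compact minimal submanifolds of Euclidean space, by the maximum principle applied to $|x|^{2}$), so ${\mathscr{S}}_{0}$ is empty outright; the cylinder case is handled by the logarithmic cutoff argument above or by noting a minimal cylinder with boundary on $\partial B_{\mathbb{R}^{3}}(o,1)$ and meeting a tiny central ball again violates the convex-hull property or the maximum principle. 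I expect the only genuinely delicate point to be the cylinder subcase of item 2: there one cannot use $\alpha\equiv 1$, and one must either run the curvature-estimate compactness argument carefully (controlling the topology in the limit) or exhibit a foliation of the small ball by planes/spheres and slide it to get a contradiction via the maximum principle at a first point of contact. Everything else is standard and I would relegate it, as the authors do, to the Appendix.
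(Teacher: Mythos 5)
Your overall strategy is sound, and for the one genuinely delicate case (stable cylinders in ${\mathscr{S}}_{2}$) it runs on the same engine as the paper: curvature estimates for stable minimal surfaces force graphicality at a definite scale near the origin, and this is incompatible with a cylinder threading a tiny ball around the axis while extending to a definite radius. The paper does this quantitatively (Corollary 2.11 and Lemma 2.4 of Colding--Minicozzi give an explicit scale $L_{0}$ at which the connected component of $S\cap B_{\mathbb{R}^{3}}(o,L_{0}/4)$ through a point $q\in B_{\mathbb{R}^{3}}(o,L)$ is a single graph over $T_{q}S$ with $|\nabla u|\leq 1$; projecting to $T_{q}S$, the curve of constant $\varphi$ from $q$ to the outer boundary has length at most $2L$, yet it must have length at least $L_{0}/4-L$, which fails for $L<L_{0}/20$), whereas you argue by compactness, which is workable but leaves the final ``topological contradiction'' vague. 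The precise statement you need is that the \emph{single connected component} through the near-axis point is one almost-flat graph at a definite scale --- not merely that a subsequence converges with multiplicity, which in any case would require an area bound you have not supplied. Once that graphical statement is in hand the contradiction is immediate (the paper's length comparison, or: by axisymmetry the component is a catenoid piece of neck radius at most $L$ extending to radius $L_{0}/4\gg L$, and such pieces are unstable). Your other fallbacks for the cylinder do not close the argument: the convex-hull property gives nothing, since the two boundary orbits lie on $\partial B_{\mathbb{R}^{3}}(o,1)$ and their convex hull is a solid slab containing the origin, so a thin catenoidal neck is perfectly consistent with it, and a sliding foliation by planes or spheres produces no first-contact point for such a neck.

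The second concrete gap is the uniqueness step in item 1. The claim that ``two distinct minimal discs with the same boundary would have to intersect, contradicting the maximum principle'' is not a valid argument: transversal intersection of two minimal surfaces violates nothing, and the maximum principle only forbids one-sided tangential contact. Your graphicality alternative also does not apply, because an arbitrary competitor minimal disc in ${\mathscr{S}}^{C}_{1}$ need not be stable and therefore carries no curvature estimate. The paper's argument is a one-liner that settles everything at once: the coordinate $z$ is harmonic on any minimal disc $D$ with $\partial D=C\subset\{z=z_{C}\}$, hence $z\equiv z_{C}$ on $D$ and $D$ is the flat disc; this gives existence, uniqueness among \emph{all} minimal discs, and area-minimization, for every $L\in(0,1)$. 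Your treatment of ${\mathscr{S}}_{0}$ (no closed minimal surfaces in $\mathbb{R}^{3}$, via subharmonicity of $|x|^{2}$) is correct and equivalent to the paper's. With these two repairs your proof is essentially the paper's, packaged less explicitly.
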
 
Of course one can scale down the catenoid and then restrict it to $B_{\mathbb{R}^{3}}(o,1)$ to show that there are minimal surfaces in the family ${\mathscr S}_{2}$ reaching as close to the origin $o$ as wished. However such surfaces are not going to be stable when they close enough to the axis.  

It will be necessary to dispose of a version of Proposition \ref{PMSF} but for $C^{2}$-perturbed axisymmetric metrics. More precisely
\begin{Proposition}\label{PMSF2} 
Let $g$ be a smooth axisymmetric metric on the Euclidean ball $B_{\mathbb{R}^{3}}(o,2)\subset \mathbb{R}^{3}$. Assume that the polar coordinate system $(z,\rho,\varphi)$ of $\mathbb{R}^{3}$ is adapted to $g$ in the sense that, in these coordinates $g_{ij}$ is independent on $\varphi$.    
Then there is $\epsilon_{0}>0$ and $0<L<1$ such that if $g$ is $\epsilon_{0}$-close in $C^{2}$ to the flat metric in $\mathbb{R}^{3}$ the following is true. 
\begin{enumerate}
\item For every orbit $C\subset B_{g}(o,L)$, which is not a point in the axis, there is a unique disc $D$ in ${\mathscr{S}}^{C}_{1}$ which is minimal. Such unique disc minimizes area among all the surfaces in the family ${\mathscr{S}}^{C}_{1}$.
\item There are no stable axisymmetric minimal surfaces in the family ${\mathscr{S}}_{0}\cup {\mathscr{S}}_{2}$ and intersecting $B_{g}(o,L)$.
\end{enumerate}
\end{Proposition}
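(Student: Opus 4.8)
The plan is to prove Proposition~\ref{PMSF2} as a perturbation of Proposition~\ref{PMSF}, exploiting the fact that the two items there are robust under $C^{2}$-small perturbations of the flat metric for essentially different reasons: item~1 is a statement about existence, uniqueness, and area-minimization within a family of discs with fixed boundary, which is stable under small metric perturbations by direct variational arguments; item~2 is a non-existence (instability) statement, which follows from a lower bound on the first eigenvalue of the stability operator that persists under $C^{2}$-small perturbations.

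\emph{Item 1.} First I would fix the orbit $C$ and work in the class ${\mathscr{S}}^{C}_{1}$ of axisymmetric discs with boundary $C$. Reducing to axisymmetric competitors, the area functional becomes a one-dimensional functional of a profile curve (the meridian), with convexity/coercivity properties inherited from those in the flat case up to errors controlled by $\|g-\delta\|_{C^{2}}$. For $\epsilon_{0}$ small enough these errors do not destroy the strict convexity that gives uniqueness, nor the coercivity that gives existence of an area minimizer in ${\mathscr{S}}^{C}_{1}$; a standard minimization together with the Euler--Lagrange equation produces the minimal disc, and convexity forces it to be the unique axisymmetric critical point. One must also rule out a non-axisymmetric minimal disc with the same boundary: this follows because the area minimizer in the full class ${\mathscr{S}}^{C}_{1}$ of all discs with boundary $C\subset B_{g}(o,L)$ lies deep in the interior (by a barrier/monotonicity argument, as in the flat case with $L$ chosen small) and hence, by the maximum principle applied to the $U(1)$-average of the profile, must be axisymmetric, so it coincides with the one just constructed. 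The constant $L$ is chosen exactly as in Proposition~\ref{PMSF}, shrunk slightly to absorb the perturbation in these barrier estimates.

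\emph{Item 2.} For the non-existence of stable minimal surfaces in ${\mathscr{S}}_{0}\cup{\mathscr{S}}_{2}$ meeting $B_{g}(o,L)$, I would argue by contradiction and use the second variation formula~(\ref{SIM}). In the flat case Proposition~\ref{PMSF} says such surfaces do not exist, and the mechanism is that any such surface, being a sphere, torus, or cylinder passing near the axis, admits a test function (essentially a bump supported near the portion of the surface close to the axis, or the rotational deformation) for which $\int_{S}(|\Theta|^{2}+Ric(\varsigma,\varsigma))\alpha^{2}$ strictly dominates $\int_{S}|d\alpha|^{2}$. Quantitatively this is a strictly positive gap in the first eigenvalue of the Jacobi operator $-\Delta_{S}-(|\Theta|^{2}+Ric(\varsigma,\varsigma))$, uniform over the relevant family (using compactness of surfaces constrained to $\overline{B_{\mathbb{R}^{3}}(o,1)}$ and meeting $B(o,L)$, together with curvature estimates for stable minimal surfaces that would otherwise force convergence to a flat disc or plane, contradicting the topological type). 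Since $|\Theta|^{2}$, $Ric(\varsigma,\varsigma)$, the induced area element, and $\Delta_{S}$ all depend continuously in $C^{0}$ on $g$ in $C^{2}$, for $\epsilon_{0}$ small this eigenvalue gap survives, so no stable minimal surface of type ${\mathscr{S}}_{0}\cup{\mathscr{S}}_{2}$ can intersect $B_{g}(o,L)$.

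\emph{Main obstacle.} I expect the delicate point to be item~2: one needs the instability of the flat-case surfaces to be \emph{uniform} across the whole family ${\mathscr{S}}_{0}\cup{\mathscr{S}}_{2}$ of possible competitors, including surfaces that are geometrically wild (long thin cylinders spiraling near the axis, degenerating tori, etc.), before one can assert that a $C^{2}$-small perturbation cannot stabilize any of them. The clean way to get this uniformity is a compactness argument for stable axisymmetric minimal surfaces in a ball---if there were a sequence of $\epsilon_{i}$-perturbed metrics with $\epsilon_{i}\to0$ and stable minimal surfaces $S_{i}\in{\mathscr{S}}_{0}\cup{\mathscr{S}}_{2}$ meeting $B_{g_{i}}(o,L)$, then curvature estimates for stable minimal surfaces (valid once $\epsilon_{i}$ is small, since the ambient curvature is uniformly bounded) give subsequential smooth convergence, away from the axis, to a stable minimal surface in the flat ball, and one then checks the limit still has a component of type ${\mathscr{S}}_{0}$ or ${\mathscr{S}}_{2}$ intersecting $\overline{B(o,L)}$, contradicting Proposition~\ref{PMSF}. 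Handling the behavior exactly on the axis (where the $U(1)$-action is singular and the surface may pass through a pole) requires a small separate removable-singularity/regularity argument, but this is standard for axisymmetric minimal surfaces. Since, as the authors note, this proposition is standard and somewhat tangential, I would relegate the full details to the Appendix and here only indicate these two mechanisms.
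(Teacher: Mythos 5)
Your route is genuinely different from the paper's, and it has gaps at precisely the points the paper's argument is designed to handle. The paper's proof of Proposition \ref{PMSF2} is simply the proof of Proposition \ref{PMSF}, which was deliberately written with ``extension-ready'' mechanisms: for item 1 and for the class ${\mathscr S}_{0}$ it uses barriers (a foliation of the complement of the flat disc by strictly convex discs and spheres, and mean-convex cylinders $\{\rho=\rho_{0}\}$ tangent from outside), and for the class ${\mathscr S}_{2}$ it uses the Colding--Minicozzi curvature estimate for stable minimal surfaces to show the component near the origin is a graph with $|\nabla u|\leq 1$, followed by an elementary length comparison ($\mathrm{length}(\alpha)\leq 2L$ versus $\mathrm{length}(\alpha)\geq L_{0}/4-L$). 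All of these are quantitative and survive a $C^{2}$-small perturbation of the metric verbatim; no compactness or eigenvalue analysis is needed. Concerning your item 1: the strict convexity of the one-dimensional area functional (the integrand $\rho\sqrt{1+u'^{2}}$) gives uniqueness only among \emph{graphical} axisymmetric competitors; a priori a minimal disc in ${\mathscr S}^{C}_{1}$ could fold over the plane $\{z=z_{C}\}$, and ruling this out is exactly what the harmonicity of $z$ (flat case) or the barrier foliation (perturbed case) accomplishes. Your ``maximum principle applied to the $U(1)$-average of the profile'' does not produce a minimal surface from a non-axisymmetric one, and in any case the class ${\mathscr S}^{C}_{1}$ is already axisymmetric by definition.

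The more serious gap is in your item 2. Your first mechanism --- a ``uniform eigenvalue gap of the Jacobi operator over the relevant family'' in the flat case, persisting under perturbation --- is ill-posed: Proposition \ref{PMSF} item 2 is a \emph{non-existence} statement, so in the flat metric there are no stable surfaces in ${\mathscr S}_{0}\cup{\mathscr S}_{2}$ meeting $B(o,L)$ whose instability you could quantify and then perturb. What you would actually need is a uniform negative upper bound on the first eigenvalue over all \emph{minimal} (not stable) surfaces of that type meeting $B(o,L)$ --- a different and much stronger claim, not established by continuity of the coefficients. Your fallback compactness argument is closer to viable, but it defers exactly the hard steps: for closed surfaces in ${\mathscr S}_{0}$ there is no a priori area bound, so subsequential convergence is only as a lamination and the limit need not lie in ${\mathscr S}_{0}\cup{\mathscr S}_{2}$ (cylinders can converge with multiplicity to a disc, tori can degenerate), so the asserted contradiction with Proposition \ref{PMSF} does not follow without substantial extra work. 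The direct arguments of Proposition \ref{PMSF} bypass all of this: mean-convex barriers kill ${\mathscr S}_{0}$ immediately in any $C^{2}$-close metric, and the pointwise curvature estimate plus the length comparison kills ${\mathscr S}_{2}$ with explicit constants $L_{0}$, $L<L_{0}/20$ that degrade only slightly under the perturbation.
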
 
A corollary which we also prove in the Appendix is the following.
\begin{Corollary}\label{COR3}
Let $(\Sigma,g)$ be a $\rho_{0}$-homogeneously regular, non-necessarily compact axisymmetric three-manifold with smooth, compact and axisymmetric boundary. Assume that the (outward) mean curvature of the boundary is bounded below by $\mu_{0}>0$ and that the norm of the second fundamental form is bounded above by $\mu_{1}$. Let $S$ be an orientable, compact and boundary-less, axisymmetric and stable minimal surface embedded in $\Sigma$. Then, the following four statements hold.
\begin{enumerate}
\item $S\subset (\Sigma\setminus {\mathcal T}_{g}(\partial \Sigma,\epsilon_{1}))$ where $\epsilon_{1}=\epsilon_{1}(\rho_{0},\mu_{0},\mu_{1})$. In other words, $S$ lies at a distance greater than $\epsilon_{1}$ from $\partial \Sigma$. 

\item $A(S)\geq A_{0}(\rho_{0},\mu_{0},\mu_{1})>0$. That is, there is a uniform lower bound for the area of $S$.

\item There is $\epsilon_{2}(\rho_{0},\mu_{0},\mu_{1})<\epsilon_{1}/2$ such that $\partial \big( {\mathcal T}_{g}(\poles, \epsilon_{2})\big)\setminus {\mathcal T}_{g}(\partial \Sigma,\epsilon_{1})$ is smooth and foliated by orbits and for any one of such orbits there is a unique area minimizing disc $D$ with boundary the orbit.    
Moreover, if $S$ intersects ${\mathcal T}_{g}(\poles,\epsilon_{2})$ then it does at two of such discs. Thus, either $S$ intersects the axes $\poles$, in which case it does twice and the surface is a sphere, or it lies at a distance from $\poles$ greater than $\epsilon_{2}$.

\item Let $D$ be any disc as in item 3 and and let $h=ds^{2}+(\ell/2\pi)^{2}(s)\ d\varphi^{2}$ be its two-metric in polar coordinates. Then $s$ ranges in an interval $(0,s_{D}]$ where $0<s_{0}(\rho_{0},\mu_{0},\mu_{1}) <s_{D}< s_{1}(\rho_{0},\mu_{0},\mu_{1})$. Moreover we have $|\ell(s)-2\pi s|\leq c_{0}(\rho_{0}) s^{2}$.
\end{enumerate}
When $\Sigma$ is $\rho_{0}$-homogeneously regular but boundary-less then the estimates in the items 2,3 and 4 depend only on $\rho_{0}$. 
\end{Corollary}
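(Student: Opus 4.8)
The plan is a nearly-flat local model argument. First I would use homogeneous regularity to fix, via the rescaling remark in Section~\ref{HRM}, a scale $r_{*}=r_{*}(\rho_{0})>0$ such that for every $p\in\Sigma$ the ball $B_{g}(p,r_{*})$ is, after the obvious rescaling, $\epsilon_{0}$-close in $C^{2}$ to a Euclidean ball, $\epsilon_{0}$ being the constant of Proposition~\ref{PMSF2}; near $\partial\Sigma$ one works inside the homogeneously regular extension. When $p\in\poles$ the exponential map is $U(1)$-equivariant and fixes $p$, so after a linear change of the Cartesian coordinates on $T_{p}\Sigma$ these coordinates are adapted to the axisymmetry, and Proposition~\ref{PMSF2} applies verbatim on $B_{g}(p,r_{*})$ for every $p\in\poles$. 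With this in place the four items become essentially local statements.

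\emph{Items 1 and 2.} For item~1 consider the collar foliation $\{\,\mathrm{dist}_{g}(\cdot,\partial\Sigma)=t\,\}$ and run the Riccati comparison for its principal curvatures along the normal geodesics: the forcing term is the ambient sectional curvature, bounded by $c/\rho_{0}^{2}$ by homogeneous regularity, and the initial data satisfy $\mathrm{tr}\,\mathrm{II}_{\partial\Sigma}\geq\mu_{0}$, $|\mathrm{II}_{\partial\Sigma}|\leq\mu_{1}$; this produces $\epsilon_{1}=\epsilon_{1}(\rho_{0},\mu_{0},\mu_{1})>0$ for which the leaves $\{\,\mathrm{dist}_{g}(\cdot,\partial\Sigma)=t\,\}$, $0\leq t\leq\epsilon_{1}$, are smooth with mean curvature $\geq\mu_{0}/2>0$ pointing away from $\partial\Sigma$. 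If $S$ met this collar, at the point of deepest penetration $S$ would touch a strictly mean-convex leaf from its concave side, contradicting the strong maximum principle; hence $S\cap\mathcal{T}_{g}(\partial\Sigma,\epsilon_{1})=\emptyset$. For item~2, by homogeneous regularity the monotonicity formula gives $A(S\cap B_{g}(p,r))\geq c_{1}(\rho_{0})\,r^{2}$ for $p\in S$ and any $r\leq r_{1}(\rho_{0})$ with $B_{g}(p,r)$ interior; taking $r=\min(r_{1}(\rho_{0}),\epsilon_{1})$, available by item~1, yields $A(S)\geq A_{0}(\rho_{0},\mu_{0},\mu_{1})>0$, which reduces to a bound in $\rho_{0}$ alone when $\partial\Sigma=\emptyset$.

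\emph{Item 3.} Choose $\epsilon_{2}<\epsilon_{1}/2$ so small that around every component $P$ of $\poles$ staying away from $\partial\Sigma$ the set $\mathcal{T}_{g}(P,\epsilon_{2})$ lies in the union of the nearly-flat balls $B_{g}(p,Lr_{*})$, $p\in P$, and its boundary $\{\,\mathrm{dist}_{g}(\cdot,\poles)=\epsilon_{2}\,\}$ is a smooth $U(1)$-invariant hypersurface foliated by orbits, each of which — lying in one such ball — bounds a unique area-minimizing disc by Proposition~\ref{PMSF2}(1); this gives the first sentence of the item. A Riccati computation as above shows $\{\,\mathrm{dist}_{g}(\cdot,\poles)=c\,\}$ is strictly mean-convex towards $\poles$ for $0<c\leq\epsilon_{2}$. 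Now suppose $S$ meets $\mathcal{T}_{g}(\poles,\epsilon_{2})$; since $S$ is compact only finitely many balls $B_{g}(p,Lr_{*})$, $p\in\poles$, are involved. Each component of the intersection is a stable axisymmetric minimal surface lying in these balls, so by Proposition~\ref{PMSF2}(2) it contains no subpiece of type $\mathcal{I}_{0}$ or $\mathcal{I}_{2}$; hence every piece is a minimal axisymmetric disc and, by the uniqueness in Proposition~\ref{PMSF2}(1), \emph{the} area-minimizing disc bounded by its boundary orbit, in particular a disc crossing the axis once. Moreover the maximum principle against the mean-convex level sets $\{\,\mathrm{dist}_{g}(\cdot,\poles)=c\,\}$ forbids an interior maximum of $\mathrm{dist}_{g}(\cdot,\poles)|_{S}$ at a value below $\epsilon_{2}$, so each component reaches $\partial\mathcal{T}_{g}(P,\epsilon_{2})$ monotonically and is a single such disc. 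Therefore $S\cap\mathcal{T}_{g}(\poles,\epsilon_{2})$ is a disjoint union of area-minimizing discs each meeting $\poles$ once; passing to the orbit space, $S/U(1)$ is a compact connected $1$-manifold with one boundary point per disc, which forces exactly two discs and $S/U(1)$ an interval, i.e. $S$ is a sphere meeting $\poles$ at exactly two points. If $S$ is instead disjoint from $\mathcal{T}_{g}(\poles,\epsilon_{2})$ there is nothing to prove, and the two alternatives are mutually exclusive, yielding the dichotomy.

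\emph{Item 4 and the main obstacle.} For item~4 one uses curvature estimates for stable minimal surfaces together with the ambient $C^{2}$-bounds to control the intrinsic geometry of an area-minimizing disc $D$ of item~3: in polar coordinates centered at the pole $D\cap\poles$ one has $h=ds^{2}+(\ell(s)/2\pi)^{2}d\varphi^{2}$ with $\ell(0)=0$, $\ell'(0)=2\pi$ and $|\ell''(s)|\leq c_{1}(\rho_{0})$, whence $|\ell(s)-2\pi s|\leq c_{0}(\rho_{0})s^{2}$, while $D$ being $C^{2}$-close to the flat disc of Euclidean radius $\approx\epsilon_{2}$ gives $0<s_{0}\leq s_{D}\leq s_{1}$ with the obvious dependence of constants (only on $\rho_{0}$ when $\partial\Sigma=\emptyset$). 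I expect the main obstacle to be the structural part of item~3 — proving that $S\cap\mathcal{T}_{g}(\poles,\epsilon_{2})$ is \emph{precisely} a union of area-minimizing discs, with no thin torus- or cylinder-type pieces hovering near the axis and no in-and-out oscillation — because Proposition~\ref{PMSF2} is a single-ball statement that must be patched along possibly long or non-compact axis components and then combined with the mean-convex barriers; a related delicate point is choosing $\epsilon_{2}$ and propagating the ``nearly flat'' conclusions uniformly so that the mean-convexity of $\{\,\mathrm{dist}_{g}(\cdot,\poles)=c\,\}$ and all the estimates in items~2--4 come out with constants depending only on $(\rho_{0},\mu_{0},\mu_{1})$.
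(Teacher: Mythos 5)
Your proposal is correct, and for items 1, 3 and 4 it follows essentially the paper's route: mean-convex barriers near $\partial \Sigma$ and near $\poles$ (which the paper simply asserts from homogeneous regularity, while you derive them by Riccati comparison), Proposition \ref{PMSF2} applied in rescaled nearly-flat balls centered at points of $\poles$ (including your correct and worthwhile observation, which the paper leaves implicit, that normal coordinates at an axis point are automatically adapted to the $U(1)$-action by equivariance of the exponential map), and, for item 4, a bound on the Gaussian curvature of the discs combined with the relation between ${\mathcal K}$ and $\ell''$ — the paper phrases this via Gauss--Bonnet, $2\pi-\ell'(s)=\int_{0}^{s}{\mathcal K}\,\ell\,d\bar{s}$, which is the integrated form of your $\ell''$ estimate. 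The genuinely different step is item 2: you use the monotonicity formula to get $A(S\cap B_{g}(p,r))\geq c_{1}(\rho_{0})r^{2}$, whereas the paper passes to the quotient $\tilde{\Sigma}$, uses that $\Pi(S)$ is a geodesic of the conformal metric $\bar{\tilde{g}}=(2\pi\lambda)^{2}\tilde{g}$ whose $\bar{\tilde{g}}$-length equals $A(S)$, and bounds that length below using homogeneous regularity of $\bar{\tilde{g}}$ away from $\poles$ together with the fact that $S$ cannot lie entirely in a thin tube around the axis. Your route is the more standard one and does not use axisymmetry of $S$; the paper's route stays inside the quotient formalism it needs anyway for Theorem \ref{Lemma3} and avoids invoking Riemannian almost-monotonicity. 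Finally, your maximum-principle argument in item 3 against the level sets of ${\rm dist}_{g}(\cdot,\poles)$, which forces every component of $S\cap{\mathcal T}_{g}(\poles,\epsilon_{2})$ to reach $\partial{\mathcal T}_{g}(\poles,\epsilon_{2})$, is a helpful supplement to the paper's terser classification of $S\cap B_{g/R_{1}^{2}}(p,1)$ via Proposition \ref{PMSF2} alone; both treatments leave the same small amount of bookkeeping to the reader and neither contains a gap.
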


\vs
To finish this section let us mention that orientable, compact and boundary-less stable minimal surfaces in either asymptotically flat axisymmetric manifolds or in compact axisymmetric manifolds with non-empty axisymmetric boundary are necessarily axisymmetric [\footnote{We would like to thank the referee for pointing out that a similar argument appears in {\bf Theorem 8.1} in \cite{Andersson:2007fh}.}]. In particular one can replace the hypothesis ``orientable, compact and boundary-less stable axisymmetric minimal surface" in Corollary \ref{COR3} by ``orientable, compact and boundary-less stable minimal surface". 
We explain this important fact in what follows. Let $\Sigma$ be a three-manifold of one of the two mentioned types and let $S$ be an orientable, compact and boundary-less stable minimal surface embedded in $\Sigma$. Let again $\xi$ denote the axial Killing field and $\varsigma$ a unit normal to $S$. Let $\psi=<\xi,\varsigma>$ be the normal component of $\xi$ on $S$. The surface $S$ will be axisymmetric if we can prove that $\psi$ is zero. Let us show this. If $\psi$ is nowhere zero then the $U(1)$-orbits are transversal to $S$ at any of its points which implies that $\Sigma$ must be diffeomorphic to $S\times S^{1}$. This is impossible because $\Sigma$ is either non-compact or with non-empty boundary. It follows that $\psi$ must be somewhere zero. Suppose that $\psi$ is not identically zero in $S$. As $\xi$ is Killing we deduce that the second variation of the area along $\psi\varsigma$ is zero. From this and because $S$ is stable we deduce that the first eigenvalue of the second variation operator [\footnote{The second variation operator is here that obtained from varying (\ref{SIM}) that is $L \phi = -\Delta \phi - (|\Theta|^{2}+Ric(\varsigma,\varsigma))\phi$.}] is zero and that $\psi$ is its eigenfunction. But the eigenfunction of the first eigenvalue is always nowhere zero and we reach a contradiction.

\section{Proof of the main results.}\label{PMR}
\subsection{The second variation of area in time for extreme Kerr-throat spheres.}
The following is a main technical tool that we will use in the proofs of the main results.
\begin{Proposition}\label{P1}
Let $(\Sigma;g,K)$ be a vacuum, axisymmetric and maximal initial data set and let $S_{H}$ be a stable extreme Kerr-throat sphere embedded in $\Sigma$. Let the initial data $(g,K)$ evolve following the vacuum Einstein equations with smooth lapse $N(t)$ and shift $X(t)$ about which we know only that  $N(0)>0$ and that $X(0)=0$. Then, at time $t$ equal to zero we have 
\ben
\label{FVA} \dot{A}(S_{H})=0,\\
\een
and
\ben
\label{SVA} \ddot{A}(S_{H})=-A''_{N}(S_{H})\leq 0,
\een
where $\dot{A}(S_{H})$ and $\ddot{A}(S_{H})$ are the first and second time derivatives of $A_{g(t)}(S_{H})$ respectively. Moreover equality in the inequality in (\ref{SVA}) holds iff in areal-coordinates we have $N(0)\big |_{S_{H}}=c \alpha_{T}$ for a certain  constant $c$.
\end{Proposition}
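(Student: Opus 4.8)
The plan is to differentiate $t\mapsto A_{g(t)}(S_H)$ twice at $t=0$ using the dynamical equations (\ref{DEG})--(\ref{DEK}) together with the constraints (\ref{EC})--(\ref{MC}), and then to reorganize the resulting curvature terms into the stability quadratic form (\ref{SIM}). I work with $\dot g$ and $\ddot g$ as tensors on the fixed manifold $\Sigma$ (as emphasized in Section~\ref{VEE}, the flow is on a fixed $\Sigma$). Let $\iota:S_H\hookrightarrow\Sigma$ denote the fixed inclusion, $h(t)=\iota^*g(t)$ the induced metrics, and $\varsigma$ a unit normal; since $\iota$ is time-independent, $\dot h=\iota^*\dot g$ and $\ddot h=\iota^*\ddot g$. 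Everything below is at $t=0$, where I use: $S_H$ is totally geodesic, so $\Theta\equiv0$ and $\operatorname{tr}_h\Theta=0$; the purely tangential part of $K$ vanishes, $\iota^*K\equiv0$ (property (iii)); the data is maximal, $k\equiv0$; and $X(0)=0$. For the first derivative, (\ref{DEG}) and $X(0)=0$ give $\dot h=-2N\,\iota^*K=0$, hence $\dot A(S_H)=\tfrac12\int_{S_H}h^{AB}\dot h_{AB}\,dA=0$ (invariantly: the spacetime mean curvature vector of $S_H$ vanishes, as its $\varsigma$-component is $\operatorname{tr}_h\Theta$ and its normal-to-$\Sigma$ component is $\operatorname{tr}_h(\iota^*K)$, both zero).

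Because $\dot h=0$ at $t=0$, differentiating $\dot A=\tfrac12\int_{S_H}h^{AB}\dot h_{AB}\,dA$ once more leaves a single term,
\[
\ddot A(S_H)=\tfrac12\int_{S_H}h^{AB}\,\ddot h_{AB}\,dA .
\]
From (\ref{DEG}), $\ddot g=-2\dot N K-2N\dot K+\mathcal L_{\dot X(0)}g$ at $t=0$ (using $X(0)=0$). Restricting to $S_H$ and tracing with $h^{-1}$: the $\dot N K$ term dies since $\iota^*K=0$; and $\int_{S_H}h^{AB}(\mathcal L_{\dot X(0)}g)_{AB}\,dA=0$, because on a totally geodesic surface $\iota^*(\mathcal L_Y g)=\mathcal L_{Y^{\top}}h$ for any vector field $Y$ (the normal part of $Y$ couples only to $\Theta\equiv0$), and $\int_{S_H}\operatorname{tr}_h(\mathcal L_{Y^{\top}}h)\,dA=2\int_{S_H}\operatorname{div}_h Y^{\top}\,dA=0$ on the closed surface $S_H$. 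Hence
\[
\ddot A(S_H)=-\int_{S_H}N\,h^{AB}\,(\iota^*\dot K)_{AB}\,dA .
\]

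Now substitute $\dot K$ from (\ref{DEK}) with $k=0$, $X(0)=0$: $\dot K_{ij}=-\nabla_i\nabla_j N+N(Ric_{ij}-2K_{i}^{\ k}K_{kj})$. When traced against $h^{-1}$ over $S_H$ I use: $h^{AB}\nabla_A\nabla_B N=\Delta_{S_H}N$ (valid since $\operatorname{tr}_h\Theta=0$); $h^{AB}K_{A}^{\ k}K_{kB}=\tfrac12|K|^2$ (valid since $\iota^*K=0$, so only the contraction through $\varsigma$ survives); and $h^{AB}Ric_{AB}=R-Ric(\varsigma,\varsigma)=2\mathcal K+Ric(\varsigma,\varsigma)$, the last step being the dimension-three identity $2\mathcal K=(\operatorname{tr}_h\Theta)^2-|\Theta|^2+R-2Ric(\varsigma,\varsigma)$ with $\Theta\equiv0$. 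Integrating by parts on $S_H$ gives
\[
\ddot A(S_H)=-\int_{S_H}\Big(|dN|^2+N^2\big[\,2\mathcal K+Ric(\varsigma,\varsigma)-|K|^2\,\big]\Big)\,dA .
\]
Finally, the energy constraint $R=|K|^2$ (maximality) combined with $R=2\mathcal K+2Ric(\varsigma,\varsigma)$ yields $2\mathcal K+Ric(\varsigma,\varsigma)-|K|^2=-Ric(\varsigma,\varsigma)$, so $\ddot A(S_H)=-\int_{S_H}\big(|dN|^2-Ric(\varsigma,\varsigma)N^2\big)\,dA$, which by (\ref{SIM}) and $\Theta\equiv0$ is exactly $-A''_{N(0)}(S_H)$. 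Since $S_H$ is stable, $A''_{N(0)}(S_H)\geq0$, hence $\ddot A(S_H)\leq0$.

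For the equality case, $\ddot A(S_H)=0$ forces $A''_{N(0)}(S_H)=0$, i.e. $N(0)|_{S_H}$ lies in the kernel of the stability operator $L\phi=-\Delta_{S_H}\phi-(|\Theta|^2+Ric(\varsigma,\varsigma))\phi$. For an extreme Kerr-throat sphere $L$ coincides with that of the model extreme Kerr-throat sphere: the induced metric is (\ref{INDUMETRIC}), $\Theta\equiv0$, and $Ric(\varsigma,\varsigma)=\tfrac12|K|^2-\mathcal K$ is fixed by (\ref{INDUMETRIC})--(\ref{INDUSFF}); by the properties of the extreme Kerr-throat spheres recalled in the Introduction, the kernel of this operator is spanned by $\alpha_T$. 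Hence equality holds iff $N(0)|_{S_H}=c\,\alpha_T$ for some constant $c$ (necessarily $c\neq0$ since $N>0$). The main obstacle is not conceptual but organizational: the two steps that genuinely use the structure are (i) showing the shift $\dot X(0)$ contributes nothing to $\ddot A(S_H)$, which rests on $\Theta\equiv0$ together with the divergence theorem on the closed $S_H$, and (ii) collapsing the curvature terms to precisely $-A''_{N(0)}(S_H)$, which needs the Gauss identity and the energy constraint used in tandem, with $\operatorname{tr}_h\Theta=0$ and $\iota^*K=0$ disposing of the cross terms; a sign slip in either would be fatal.
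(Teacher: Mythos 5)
Your proposal is correct and follows essentially the same route as the paper's proof: compute $\dot A$ and $\ddot A$ from the dynamical equations, kill the shift contribution using that $S_{H}$ is totally geodesic and closed, reduce the curvature terms to $-A''_{N}(S_{H})$ via the Gauss identity and the energy constraint with $k=0$, and settle the equality case by identifying the null space of the stability form with the span of $\alpha_{T}$. The only differences are cosmetic (you pass through $2\mathcal{K}=R-2Ric(\varsigma,\varsigma)$ explicitly where the paper cancels the two $R$'s directly, and you quote the model throat's kernel where the paper reruns the first-eigenvalue argument).
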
 
\begin{Note} In a vacuum axisymmetric maximal and asymptotically flat data set, every extreme Kerr-throat sphere is necessarily stable. In Proposition \ref{P1} there is no global assumption on the initial data of any kind and the stability of $S_{H}$ has to be imposed a priori. Nevertheless, as can be seen from the proof, even if $S_{H}$ is not stable we still have $\ddot{A}(S_{H})=-A''(S_{H})$.
\end{Note}
\begin{proof}[\bf Proof.] In the following calculation one can use for instance the space-time coordinate system introduced in Section \ref{APCO}. At any time $t$ (not only zero) we compute
\be\label{CYA1}
\dot{A}(S_{H})=\frac{1}{2}\int_{S_{H}} \dot{h}_{AB}h^{AB}dA=\int_{S_{H}}\big[\, -N K_{AB}h^{AB}+\frac{1}{2}(\nabla_{A}X_{B}+\nabla_{B}X_{A})h^{AB}\big]\, dA,
\ee
where to obtain the second inequality we used (\ref{DEG}). 
Differentiate in time this expression once more and use that $K_{AB}(0)h^{AB}(0)=0$ (because $S_{H}$ is an extreme Kerr-throat sphere of the initial data) and that $X(0)=0$ to obtain at time $t$ equal to zero the expression
\be\label{CALYARENA}
\ddot{A}(S_{H})=\int_{S_{H}}\big[ -N\dot{K}_{AB}h^{AB}+\frac{1}{2}(\nabla_{A}\dot{X}_{B}+\nabla_{B}\dot{X}_{A})h^{AB}\big]\, dA.
\ee
As $S_{H}$ is totally geodesic in $(\Sigma,g(0))$ then (also at time zero) we have
\ben
\int_{S_{H}} \frac{1}{2}(\nabla_{A}\dot{X}_{B}+\nabla_{B}\dot{X}_{A})h^{AB}dA=\int_{S_{H}} \big[ Div_{S_{H}} \Pi(\dot{X})\big]\, dA=0,
\een
where $\Pi(\dot{X})$ is the projection of $\dot{X}$ to the tangent space of $S_{H}$, and $Div_{S_{H}}\Pi(\dot{X})$ is its divergence as a vector field in $(S_{H},h(0))$. Hence the second term in the r.h.s of (\ref{CALYARENA})  is zero. Use now (\ref{DEK}) in (\ref{CALYARENA}) and recall that the initial data is maximal (i.e. $k(0)=0$) to obtain (at time zero)
\be\label{CYA2}
\ddot{A}(S_{H})=\int_{S_{H}}\big[-N\big(-\nabla_{A}\nabla_{B} N +N (Ric_{AB} - 2K_{Ai}K^{i}_{\ B})\big)\, h^{AB}\big]\, dA.
\ee
For the first term on the r.h.s of this expression we have 
\ben
\int_{S_{H}}N(\nabla_{A}\nabla_{B} N)h^{AB} dA=-\int_{S_{H}}|dN|^{2}dA,
\een
because $S_{H}$ is totally geodesic in $(\Sigma,g(0))$. Here $d N$ is the differential of $N$ in $S_{H}$.
On the other hand we have $Ric_{AB}h^{AB}=R-Ric(\varsigma,\varsigma)$ where $\varsigma$ is a unit normal to $S_{H}$ in $\Sigma$. Finally, using the energy constraint and using that at time zero the only nonzero components of $K_{ij}$ are $K_{\varphi r}$ (and $K_{r\varphi}$) then we have $2K_{Ai}K^{i}_{\ B}h^{AB}=|K|^{2}=R$. Putting all together we obtain the following expression for (\ref{CYA2}) at time zero
\ben
\ddot{A}(S_{H})=\int_{S_{H}}\big[-|dN|^{2}+Ric(\varsigma,\varsigma)N^{2}\big]\, dA=-A''_{N}(S_{H}),
\een
where the second inequality is due to the fact the second fundamental form of $S_{H}$ as a surface in $(\Sigma,g(0))$ is zero. This proves the equality in (\ref{SVA}). The inequality instead follows from the fact that because $S_{H}$ is a stable minimal surface in $(\Sigma,g(0))$ then for any $\alpha:S_{H}\rightarrow \mathbb{R}$ we have 
\be\label{PPP}
A_{\alpha\varsigma}''(S_{H})\geq 0.
\ee 
We prove now the last statement of the proposition. As shown in \cite{2011PhRvL.107e1101D} in any extreme Kerr-throat sphere we have $A''_{\alpha_{T}\varsigma}(S_{H})=0$. From this and (\ref{PPP}) we deduce that the first eigenvalue of the stability operator is zero and that $\alpha_{T}$ is an eigenfunction. But because the eigenspace of the first eigenvalue is one dimensional we deduce that $A''_{\alpha\varsigma}(S_{H})=0$ iff $\alpha$ is proportional to $\alpha_{T}$. Therefore $A''_{N}(S_{H})=0$ iff $N(0)|_{S_{H}}=\alpha_{T}$ as wished.
 \end{proof}

\subsection{Proof of Theorem \ref{MLemma}.}
Let us recall the main argument behind the proof of Theorem \ref{MLemma}. Assume by contradiction the existence of $(\Sigma;g_{0},K_{0})$ containing an extreme Kerr-throat sphere $S_{H}$ and to fix ideas suppose that $S_{H}$ is isotopic to a sphere at ``infinity" on one of the asymptotically flat ends. 
%
Suppose that the {\it maximal lapse} $N_{m}$, namely the solution to (\ref{Lapse}) which  is asymptotically one at infinity on $\Sigma$, is not proportional to $\alpha_{T}$ over $S_{H}$. Then, evolving the initial data $(g_{0},K_{0})$ in the maximal gauge with zero shift and using Proposition \ref{P1} and the conservation of angular momentum one obtains, in short time, a sphere on an asymptotically flat, axisymmetric, maximal data violating (\ref{AJI}) which is impossible. 
Unfortunately it could be (a priori) that $N_{m}$ is proportional to $\alpha_{T}$ over $S_{H}$ and therefore the argument, as such, is incomplete. However one can technically modify it, still following the same ground idea, to make it work. 
The modification consists in working on a large but compact region of the initial data enclosed by large convex spheres, where as we will see using Proposition \ref{P2} it is always possible to find a positive solution $N_{0}$ of the Lapse equation not proportional to $\alpha_{T}$ over $S_{H}$. 
Then flow the initial data over such compact region along certain (see later) axisymmetric lapse and shift $(N,X)(t)$ with $N|_{t=0}=N_{0}$ and $X|_{t=0}=0$. Unfortunately the flow will be known to be maximal only at time zero and at later times maximality could fail. This is an important drawback because although by Proposition \ref{P1} we have, for $t$ small, 
\be\label{IF1}
A(S_{H})\leq 8\pi|J(S_{H})|-\frac{A_{N_{0}}''(S_{H})}{4}t^{2},\ \ A_{N_{0}}''(S_{H})>0,
\ee
(where here and below $A(S_{H}):=A_{g(t)}(S_{H})$), the inequality $A(S)\geq 8\pi|J(S)|$ is not known to hold on non-maximal slices, and the original contradiction argument may be inapplicable. Here is where we use that the initial slice is maximal and that, because $N|_{t=0}=N_{0}$ satisfies the Lapse equation and $\dot{k}|_{t=0}=-\Delta N_{0}+|K_{0}|^{2}N_{0}=0$, then the mean curvature $k(t)$ at small times behaves as $k(t)\approx O(t^{2})$. This order of failure of maximality allows us to prove in Proposition \ref{P4}, and for small times, the lower estimate $A(S_{t})\geq 8\pi|J(S_{t})|-\Lambda_{0}t^{4}$ for the area of stable axisymmetric minimal surfaces $S_{t}$.
This is then used in the proof of Theorem \ref{MLemma} to obtain the inequality 
\be\label{IF2}
A(S_{H})\geq 8\pi |J(S_{H})|+O(t^{4}),
\ee
for small times. Inequalities (\ref{IF1})-(\ref{IF2}) show a contradiction in the original spirit. We move then to prove the preliminary Propositions \ref{P2} and \ref{P4}, the proof of Theorem \ref{MLemma} is given afterwards.  

\begin{Proposition}\label{P2} Let $(\Omega;g,K)$ be a vacuum maximal data set where $\Omega$ is a compact manifold with smooth boundary. Suppose that either
\begin{enumerate}[labelindent=\parindent,leftmargin=*,label={\bf A\arabic*.}]
\item There is an extreme Kerr-throat sphere $S_{H}$ dividing $\Omega$ into two connected components $\Omega_{1}$ and $\Omega_{2}$, or, 
\item There are extreme Kerr-throat spheres  $S^{1}_{H}$ and $S^{2}_{H}$, the union of which divides $\Omega$ into two connected components $\Omega_{1}$ and $\Omega_{2}$. 
\end{enumerate}
Then there is an axisymmetric solution $N$ of the Lapse equation which is positive on $\Omega^{\circ}=\Omega\setminus \partial \Omega$ 
and
\begin{enumerate}
\item is not proportional to $\alpha^{1}_{T}$ over $S_{H}$ in case {\bf A1}, or,
\item is not proportional to $\alpha^{1}_{T}$ over at least one of the surfaces $S^{1}_{H}$ or $S^{2}_{H}$ in case {\bf A2}.
\end{enumerate}
where $\alpha^{1}_{T}:=\sqrt{1+\cos^{2}\theta}$ in the areal-coordinates of the respective sphere.
\end{Proposition}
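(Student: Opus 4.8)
The plan is to deduce the statement from the existence of a single auxiliary axisymmetric solution of the Lapse equation, produced by a unique–continuation argument. Set $L:=\Delta-|K|^{2}$. Since $|K|^{2}\geq 0$, $L$ is uniformly elliptic and satisfies the maximum principle, the Dirichlet problem $LN=0$ with $N|_{\partial\Omega}=\phi$ is uniquely solvable, and its solution is positive on $\Omega^{\circ}$ when $\phi\geq 0$, $\phi\not\equiv 0$; moreover $L$ commutes with the $U(1)$-action because $(\Sigma;g,K)$ is axisymmetric, so the solution is axisymmetric whenever $\phi$ is. Let $N_{0}$ be the positive axisymmetric solution with $N_{0}|_{\partial\Omega}=1$ and put $m_{0}:=\min_{\overline{\Omega}}N_{0}>0$. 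If $N_{0}$ already fails to be proportional to $\alpha^{1}_{T}$ over the relevant sphere (over at least one of $S^{1}_{H},S^{2}_{H}$ in case \textbf{A2}) we are done, so assume $N_{0}|_{S_{H}}=c_{0}\,\alpha^{1}_{T}$ in case \textbf{A1}, resp. $N_{0}|_{S^{i}_{H}}=c^{i}_{0}\,\alpha^{1}_{T}$, $i=1,2$, in case \textbf{A2}.

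The crux is the claim that there is an axisymmetric solution $\psi$ of $L\psi=0$ on $\Omega$ whose restriction to $S_{H}$ (resp. to at least one of $S^{1}_{H},S^{2}_{H}$) is not proportional to $\alpha^{1}_{T}$. Granting it, $N:=N_{0}+\varepsilon\psi$ is an axisymmetric solution of $LN=0$, positive on $\Omega^{\circ}$ once $0<\varepsilon<m_{0}/(1+\|\psi\|_{C^{0}(\overline{\Omega})})$, and because the multiples of $\alpha^{1}_{T}$ form a one–dimensional space, $N|_{S_{H}}=c_{0}\alpha^{1}_{T}+\varepsilon\psi|_{S_{H}}$ is not proportional to $\alpha^{1}_{T}$ (otherwise $\psi|_{S_{H}}$ would be), and similarly in case \textbf{A2}; this gives the proposition. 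To prove the claim, suppose on the contrary that every axisymmetric solution of $L\psi=0$ on $\Omega$ restricts to a multiple of $\alpha^{1}_{T}$ on $S_{H}$ (resp. on each of $S^{1}_{H},S^{2}_{H}$). Fix a boundary component $\Sigma_{0}\subset\partial\Omega$; since the separating set lies in $\Omega^{\circ}$ and $\Sigma_{0}$ is connected, $\Sigma_{0}\subset\overline{\Omega_{1}}$ after relabelling, so the orbits carried by $\Sigma_{0}$ are disjoint from $\Gamma_{2}:=\partial\Omega\cap\overline{\Omega_{2}}$.

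For an orbit $[z]\subset\Sigma_{0}$ let $P(\cdot,[z])$ be the orbit–averaged Poisson kernel of $L$ on $\Omega$ with pole on $[z]$; it is the $C^{\infty}_{\mathrm{loc}}(\Omega^{\circ})$–limit of the positive axisymmetric solutions $N_{\phi_{k}}$ with $\phi_{k}\geq 0$ an axisymmetric approximate identity at $[z]$, and it is $L$-harmonic and positive on $\Omega^{\circ}$, continuous and vanishing on $\partial\Omega\setminus[z]$, and unbounded near $[z]$. Passing to the limit in $N_{\phi_{k}}|_{S_{H}}=c_{k}\alpha^{1}_{T}$ gives $P(\cdot,[z])|_{S_{H}}=c([z])\alpha^{1}_{T}$ with $c([z])>0$ (and likewise on $S^{2}_{H}$ in case \textbf{A2}). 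Hence $P(\cdot,[z])|_{\Omega_{2}}$ is the unique $L$-harmonic function on $\Omega_{2}$ with boundary values the prescribed multiples of $\alpha^{1}_{T}$ on the separating sphere(s) met by $\Omega_{2}$ and $0$ on $\Gamma_{2}$, so by linearity it lies in a fixed finite–dimensional subspace (of dimension at most two) of functions on $\Omega_{2}$. Since $\Sigma_{0}$ carries infinitely many orbits, some nontrivial finite linear combination $v:=\sum_{j}a_{j}P(\cdot,[z_{j}])$ with the $[z_{j}]$ distinct satisfies $v|_{\Omega_{2}}\equiv 0$; being $L$-harmonic on the connected open set $\Omega^{\circ}$ and vanishing on the nonempty open set $\Omega_{2}\cap\Omega^{\circ}$, Aronszajn's unique continuation theorem forces $v\equiv 0$ on $\Omega^{\circ}$. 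This is impossible: near $[z_{1}]$ every term $a_{j}P(\cdot,[z_{j}])$ with $j\neq 1$ is bounded while $a_{1}P(\cdot,[z_{1}])$ is unbounded, so $v$ cannot vanish identically unless all $a_{j}=0$. This contradiction establishes the claim.

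I expect the last step of the contradiction — the linear independence on $\Omega^{\circ}$ of the orbit–averaged Poisson kernels attached to distinct boundary orbits — to be the only non-routine point: it rests on the standard boundary asymptotics of the Poisson kernel of a second-order elliptic operator on a smooth domain (unbounded near its own pole, vanishing continuously elsewhere) combined with Aronszajn's unique continuation to transfer a linear dependency from $\Omega_{2}$ to all of $\Omega^{\circ}$. Everything else — solvability, positivity and $U(1)$-equivariance for $L$, and the convergence $N_{\phi_{k}}\to P(\cdot,[z])$ — is classical elliptic theory, and the same argument covers cases \textbf{A1} and \textbf{A2} verbatim, the only difference being the dimension (one or two) of the subspace of $\Omega_{2}$–restrictions, which merely changes how many orbits $[z_{j}]$ one must take.
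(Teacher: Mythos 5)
Your proof is correct, and its engine is the same as the paper's: assume every axisymmetric Lapse solution restricts to a multiple of $\alpha^{1}_{T}$ on the dividing sphere(s), produce from a suitable finite family a nontrivial linear combination that is forced to vanish on one of the two components (because the restrictions to that component are determined, via uniqueness of the Dirichlet problem there, by data lying in a one- or two-dimensional space), extend the vanishing to all of $\Omega^{\circ}$ by unique continuation, and contradict an independence property of the family. Where you differ is in the choice of family and in where the final contradiction lands. The paper simply takes two (case {\bf A1}) or three (case {\bf A2}) Dirichlet solutions with linearly independent positive axisymmetric boundary data on $W_{2}$ and zero data on $W_{1}$; the vanishing combination is zero on $\partial\Omega_{1}=W_{1}\cup S_{H}$, hence on $\Omega_{1}$, hence everywhere, contradicting the linear independence of the boundary data on $W_{2}$ --- a contradiction that is read off directly from the prescribed boundary values. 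You instead take orbit-averaged Poisson kernels $P(\cdot,[z_{j}])$ at distinct boundary orbits, propagate the vanishing from $\Omega_{2}$ outward, and contradict the independence of the kernels via their singular behaviour at their own poles. Both work, but your route imports more machinery: you must construct the orbit-averaged kernel as a locally uniform limit of Dirichlet solutions, justify that it vanishes continuously on $\partial\Omega\setminus[z]$ and blows up at $[z]$ (or, equivalently, identify its boundary trace with a delta measure on the orbit), none of which is needed in the paper's version, where the ``independence'' is a trivially checkable property of the chosen boundary data. A minor point of hygiene: your claim should explicitly restrict to solutions continuous on $\overline{\Omega}$ (the class to which the contradiction hypothesis is actually applied via the $N_{\phi_{k}}$), since the perturbation $N_{0}+\varepsilon\psi$ uses $\|\psi\|_{C^{0}(\overline{\Omega})}$; as written, ``solution on $\Omega$'' is ambiguous because the Poisson kernels themselves are unbounded solutions. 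Note also that the paper arranges positivity without any $\varepsilon$-perturbation, since its candidate solutions already have nonnegative, not identically zero, boundary data.
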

\begin{Note} If $S_{H}$ is an extreme Kerr-throat sphere of area $A_{H}$ then the function $\alpha_{T}$, as defined in the introduction is, $\alpha_{T}=\sqrt{A_{H}/8\pi}\sqrt{1+\cos^{2}\theta}$. Therefore if $N$ is not proportional to $\alpha_{T}^{1}$ over $S_{H}$ then it not proportional either to $\alpha_{T}$. 
\end{Note}

\begin{proof}[\bf Proof.] Denote by $W_{1}$ and $W_{2}$ the set of connected components of $\partial \Omega$ belonging to $\Omega_{1}$ and $\Omega_{2}$ respectively. As $\Omega$ has non-empty boundary by assumption then $W_{1}$ and $W_{2}$ cannot be empty at the same time. We will assume that $W_{2}\neq \emptyset$. We discuss cases {\bf A1} and {\bf A2} separately.  

{\it Case {\bf A1}}. Let $\bar{N}_{a}$ and $\bar{N}_{b}$ be two positive and linearly independent axisymmetric functions over $W_{2}$ and let $N_{a}$ and $N_{b}$ be, respectively, the solutions of the Lapse equation with boundary data 
\begin{align*}
&\Delta N=RN,\\
&N\big |_{W_{2}}=\bar{N}_{a}\ {\rm or}\ \bar{N}_{b},\\ 
&N\big|_{W_{1}}=0,\ \text{if } W_{1}\neq \emptyset.
\end{align*}
The uniqueness of the solutions $N_{a}$, $N_{b}$ and the axisymmetry of the boundary data imply that $N_{a}$ and $N_{b}$ must coincide with their rotational averages and therefore must be rotational symmetric. If either $N_{a}$ or $N_{b}$ are not proportional to $\alpha^{1}_{T}$ over $S$ we are done. On the other hand, if both $N_{a}$ and $N_{b}$ are proportional to $\alpha^{1}_{T}$ over $S_{H}$, we can consider a non-zero linear combination $c_{a}N_{a}+c_{b}N_{b}$ which is equal to zero over $S_{H}$. The combination is also zero over $W_{1}$ if $W_{1}\neq \emptyset$ and as $\partial \Omega_{1}=W_{1}\cup S_{H}$ uniqueness implies that the combination is zero all over $\Omega_{1}$. But because $\bar{N}_{a}$ and $\bar{N}_{b}$ are linearly independent then $c_{a}N_{a}+c_{b}N_{b}$ is not identically zero over $\Omega_{2}$. This contradicts the {\it unique continuation principle for elliptic equations} \cite{MR0092067}. Thus either $N_{a}$ and $N_{b}$ are not proportional to $\alpha^{1}_{T}$ over $S_{H}$.

{\it Case {\bf A2}}. Let $\bar{N}_{a}$, $\bar{N}_{b}$ and $\bar{N}_{c}$ be three positive and linearly independent axisymmetric functions over $W_{2}$. Let $N_{a}$, $N_{b}$ and $N_{c}$ be the axisymmetric solutions to the Lapse equation with boundary data zero over $W_{1}$ if non-empty and 
boundary data $\bar{N}_{a},\ \bar{N}_{b}$ or $\bar{N}_{c}$ over $W_{2}$, respectively. If all of $N_{a},\ N_{b}$ and $N_{c}$ are proportional to $\alpha^{1}_{T}$ over both $S^{1}_{H}$ and $S^{2}_{H}$ then, again, one can easily find a linear combination of the three that is zero over both $S^{1}_{H}$ and $S^{2}_{H}$. A contradiction is then found as in {\it Case {\bf A1}}.
\end{proof}

\vs
Let $(\Sigma;g,k)$ be, as in Theorem \ref{MLemma}, a maximal (vacuum and axisymmetric) data set with possibly finitely many asymptotically flat ends $E_{1},\ldots,E_{n}$, $n\geq 1$. Let ${\mathfrak S}_{1},\ldots,{\mathfrak S}_{n}$ be large axisymmetric and strictly convex spheres on each of the ends $E_{1},\ldots,E_{n}$ respectively. Suppose there is an extreme Kerr-throat sphere $S_{H}$. If $S_{H}$ does not divide $\Sigma$ into two connected components then one can cut $\Sigma$ along $S_{H}$ to get a manifold with two boundary components (say $S^{1}_{H}$ and $S^{2}_{H}$) and glue back smoothly a copy of it (crossing the boundaries). The result is a smooth manifold with $2n$ asymptotically flat ends and having two embedded extreme Kerr-throat spheres $S^{1}_{H}$ and $S^{2}_{H}$ the union of which divides the manifold into two connected components. The conclusion is that if there is an extreme Kerr-throat sphere then either we are in the hypothesis {\bf A1} of Proposition \ref{P2} or that we can construct a data set in the hypothesis {\bf A2}. We will continue as if we were either in the hypothesis {\bf A1} or {\bf A2} therefore. In either case we are denoting by $\Omega$ to the regions enclosed by the large spheres (including the large spheres).          

We note now that there is always a positive solution $N_{0}$ to the Lapse equation on $\Omega$ (including $\partial \Omega$) which is not proportional to $\alpha_{T}$ over $S_{H}$ (in case {\bf A1}) or is not proportional to $\alpha_{T}$ over one of the spheres $S^{1}_{H}$ or $S^{2}_{H}$ (in case {\bf A2}). Indeed let $N_{m}$ be the (positive) solution to the Lapse equation that is asymptotically one over any asymptotically flat end. If $N_{m}$ is proportional to $\alpha_{T}$ over $S_{H}$ (in case {\bf A1}) or is proportional to $\alpha_{T}$ over both, $S^{1}_{H}$ and $S^{2}_{H}$ (in case {\bf A2}) then we can add to it a solution to the Lapse equation as in Proposition \ref{P2} to obtain a positive solution on $\Omega$ with the desired property.   
From now on let $N_{0}$ be the positive axisymmetric solution of (\ref{Lapse}) in $\Omega$, that is not proportional to $\alpha_{T}$ over $S_{H}$ (in case {\bf A1}) or is not proportional to $\alpha_{T}$ over either $S^{1}_{H}$ or $S^{2}_{H}$ (in case {\bf A2}). 

Let ${\mathfrak n}$ be a (one of the two possible) unit-normal fields to $\Omega$ inside the space-time. For any $p\in \Omega$ let $\gamma_{p}(\tau)$ be the (space-time) geodesic emanating from $p$, in the direction of ${\mathfrak n}(p)$, and parametrized with arc length $\tau$. In the domain $\{o/o=\gamma_{p}(\tau),p\in \Omega,0\leq \tau\leq \tau_{0}\}$, $\tau_{0}$ a small constant, we consider a vector field $V$ by
\be\label{VFV}
V(o):=N_{0}(p)\gamma'_{p}(\tau),\ \ {\rm if}\ \gamma_{p}(\tau)=o.
\ee     
Flowing the domain $\Omega$ inside the space-time (generated by the data) by the the vector field $V$, we obtain an evolution flow $(g,K;N,X)(t)$ over the fixed $\Omega$ where $t$ is the parameter associated to the flow by $V$ (see Section \ref{VEE}). Moreover we have $X_{t=0}=0$ and $N_{t=0}=N_{0}$. Note that the evolution induced by $V$ is axisymmetric, namely $(g,K;N,X)$ are axisymmetric. However the evolution does not have to be maximal, that is, we do not necessarily have $k(t)=0$ for all $t$. Despite of this we have
\ben
k\big|_{t=0}=0,\ \ \dot{k}\big|_{t=0}=-\Delta N_{0}+RN_{0}=0.
\een
Thus there is $k_{0}>0$ such that $|k(t)|\leq k_{0}t^{2}$. Therefore the energy constraint implies (at time $t$) $R=|K|^{2}+O(t^{4})$.  The following auxiliary proposition gives a crucial lower bound on the area $A(S_{t})$ of stable minimal surfaces $S_{t}$ in $(\Omega;g(t),K(t))$ (with an upper bound on their areas).

\begin{Proposition}\label{P4} For any $A_{1}>0$ there are $t_{0}>0$ and $\Lambda_{0}>0$ such for any $0<t<t_{0}$ and  (orientable, compact and boundary-less) stable axisymmetric minimal surface $S_{t}$ on $(\Omega;g(t),K(t))$ with $A(S_{t})\leq A_{1}$ we have
\ben
A(S_{t}) \geq 8\pi|J(S_{t})| - \Lambda_{0} t^{4}.
\een
\end{Proposition}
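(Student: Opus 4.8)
The plan is to prove Proposition \ref{P4} as a quantitative, \emph{almost maximal} refinement of the sharp inequality $A(S)\ge 8\pi|J(S)|$ established in \cite{2011PhRvL.107e1101D} for stable axisymmetric minimal surfaces in maximal vacuum data. The data $(\Omega;g(t),K(t))$ is not maximal for $t\neq 0$, but $|k(t)|\le k_{0}t^{2}$, so by the energy constraint (\ref{EC}) one has $R(t)=|K(t)|^{2}-k(t)^{2}=|K(t)|^{2}+O(t^{4})$ uniformly on $\Omega$. The strategy is to revisit the proof of $A\ge 8\pi|J|$ of \cite{2011PhRvL.107e1101D}, to isolate the single place where the maximality hypothesis enters --- namely through this energy constraint --- and to carry along the resulting correction, which we will show is of order $t^{4}$.

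First we fix a uniform geometric background. After shrinking $t_{0}$, the flow $(g,K;N,X)(t)$ is defined on the compact manifold $\Omega$ for $|t|\le t_{0}$ and $g(t)\to g(0)$ in $C^{\infty}$ on $\Omega$; hence $(\Omega,g(t))$ is $\rho_{0}$-homogeneously regular with boundary of mean curvature bounded below and second fundamental form bounded above, all uniformly in $t$. Corollary \ref{COR3} then applies with $t$-independent constants: any orientable compact boundary-less stable axisymmetric minimal surface $S_{t}$ satisfies $A(S_{t})\ge A_{0}>0$, lies at a definite distance from $\partial\Omega$ and, away from controlled discs, from the axes $\poles$; moreover $\sup_{\Omega}|\xi|\le\Xi$ and the curvature of $g(t)$ are bounded independently of $t$. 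Next we dispose of the case in which $S_{t}$ is not a sphere: then $\int_{S_{t}}\mathcal{K}\,dA\le 0$, so the stability inequality (\ref{SIMD}) with $\alpha\equiv 1$ gives $\int_{S_{t}}|K|^{2}\,dA\le\int_{S_{t}}k^{2}\,dA\le k_{0}^{2}t^{4}A_{1}$, whence by Cauchy--Schwarz $8\pi|J(S_{t})|\le\int_{S_{t}}|\xi|\,|K|\,dA\le\Xi A_{1}^{1/2}(k_{0}^{2}t^{4}A_{1})^{1/2}=O(t^{2})$; since $A(S_{t})\ge A_{0}$, the asserted inequality then holds trivially once $t_{0}$ is small enough. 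From now on $S_{t}$ is an axisymmetric sphere.

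For the sphere we run exactly the argument of \cite{2011PhRvL.107e1101D}. We equip $S_{t}$ with areal coordinates, so that $h(t)=\bigg[\nareat\bigg]^{2}e^{-\sigma}d\theta^{2}+e^{\sigma}\sin^{2}\theta\,d\varphi^{2}$, and we introduce the twist potential $\omega$ of $K(\xi,\varsigma)$; the latter exists for \emph{any} axisymmetric vacuum data because, by the momentum constraint (\ref{MC}) and axisymmetry, $\nabla^{i}(K_{ij}\xi^{j})=0$, and $|J(S_{t})|$ equals a fixed numerical multiple of the variation of $\omega$ between the two poles --- none of this uses maximality. Now the point is that the stability inequality (\ref{SIMD}) already incorporates the energy constraint in the form $|K|^{2}-k^{2}$; hence, for the test function $\alpha$ used in \cite{2011PhRvL.107e1101D}, the derivation there proceeds verbatim except that its right-hand side carries the additional term $-\tfrac12\int_{S_{t}}k^{2}\alpha^{2}\,dA$. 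All the remaining ingredients of that proof --- the pointwise algebraic bound $|K|^{2}\ge 2[K(\xi,\varsigma)]^{2}/|\xi|^{2}$, the Gauss--Bonnet identity at the poles, and the two-dimensional variational inequality for the functional of $(\sigma,\omega)$ on $S^{2}$ --- do not involve $k$ at all. Therefore that proof now yields $A(S_{t})\ge 8\pi|J(S_{t})|-\mathcal{E}(t)$, where $\mathcal{E}(t)$ denotes the (linear) propagation of $\tfrac12\int_{S_{t}}k^{2}\alpha^{2}\,dA$ through the final area estimate.

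It then remains to bound $\mathcal{E}(t)\le\Lambda_{0}t^{4}$. Since $|k(t)|\le k_{0}t^{2}$ on all of $\Omega$, we have $\tfrac12\int_{S_{t}}k^{2}\alpha^{2}\,dA\le\tfrac12 k_{0}^{2}t^{4}\int_{S_{t}}\alpha^{2}\,dA$; using $A(S_{t})\le A_{1}$ together with the uniform background bounds of Corollary \ref{COR3} (and, if the test function is not simply constant, the crude bound $\int_{S_{t}}|K|^{2}\,dA\le 8\pi+k_{0}^{2}t^{4}A_{1}$ coming from (\ref{SIMD}) with $\alpha\equiv 1$ and Gauss--Bonnet) one controls $\int_{S_{t}}\alpha^{2}\,dA$ and checks that $|J(S_{t})|\le C(A_{1},\Omega)$ uniformly in $t$; the linear propagation then gives $\mathcal{E}(t)\le\Lambda_{0}t^{4}$ with $\Lambda_{0}=\Lambda_{0}(A_{1})$, also depending on the fixed data and on $k_{0},\rho_{0}$. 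Choosing $t_{0}$ small enough that all the smallness conditions hold simultaneously completes the proof. The main obstacle is precisely this step: one must pin down the exact --- and, as argued, unique --- form in which $k$ enters the argument of \cite{2011PhRvL.107e1101D}, verify that its propagation through the (exponential) area bound is genuinely linear in the error, and check that every intermediate geometric quantity feeding the final inequality (the area, the angular momentum, the relevant norm of the test function, and the conformal factor $\sigma$ entering the two-dimensional functional) is controlled uniformly in $t$ by $A_{1}$ and the a priori estimates of Corollary \ref{COR3}, so that the $O(t^{4})$ error is not amplified upon exponentiation.
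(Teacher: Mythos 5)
Your proposal follows essentially the same route as the paper's proof: the non-sphere case is dispatched exactly as in the paper ($\alpha\equiv 1$ in (\ref{SIMD}), Cauchy--Schwarz on (\ref{KAM}), and the uniform lower area bound $A_{0}$ from Corollary \ref{COR3}), and for spheres you rerun the Dain--Reiris argument keeping the non-maximality correction, which in the paper enters as the extra term $-\tfrac12 e^{2c}\int_{S} k^{2}e^{-\sigma}\sin\theta\,d\theta\,d\varphi$ in the functional $\overline{\mathcal{M}}$, followed by the linearization $e^{x}\le 1+2x$ of the exponentiated error, exactly as the paper does. The one substantive step you defer --- controlling the error integral, which amounts to a uniform \emph{lower} bound on $e^{\sigma}$ so that the factor $e^{-\sigma}$ is bounded above independently of $t$ and of the particular surface $S_{t}$ --- is where the paper's proof of the sphere case does most of its work: it combines the polar-coordinate disc estimates $|\ell(s)-2\pi s|\le c_{0}s^{2}$ and $|A(s)-\pi s^{2}|\le c_{1}s^{3}$ of Corollary \ref{COR3} with the areal-coordinate identities $\ell=2\pi e^{\sigma/2}\sin\theta$ and $A(s)=\tfrac{A(S_{t})}{2}(1-\cos\theta)$ near the poles, and with orbit-length and angle bounds away from the poles, to obtain $A_{0}/8\pi\le e^{\sigma}\le A_{1}/\pi$ and hence a uniform two-sided bound on $\sigma$ over all of $S_{t}$. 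Your parenthetical suggestion that the crude bound $\int_{S_{t}}|K|^{2}\,dA\le 8\pi+k_{0}^{2}t^{4}A_{1}$ would control $\int_{S_{t}}\alpha^{2}\,dA$ is not the right mechanism --- what is needed is precisely the pointwise bound on $\sigma$, which you correctly list among the quantities to be checked but do not derive; with that bound supplied, your argument closes and coincides with the paper's.
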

\begin{proof}[\bf Proof] Let $\bar{t}_{0}$, $\rho_{0}$, $\mu_{0}$ and $\mu_{1}$ be such that for any $t$ in $[0,\bar{t}_{0}]$ the manifold $(\Omega,g(t))$ is $\rho_{0}$-homogeneously regular and with strictly convex boundary of (outward) mean curvature greater or equal than $\mu_{0}$ and norm of the second fundamental form bounded by $\mu_{1}$. Below we are going to use the following constants. Let $\epsilon_{2}$ and $A_{0}$ be as in Corollary \ref{COR3}. Let $k_{0}>0$ be (as before) such that for all $t\in [0,\bar{t}_{0}]$, we have $|k(t)|\leq k_{0}t^{2}$. Finally let $\ell_{0}=\sup\{2\pi|\xi(p)|_{g(t)},p\in \Omega,t\in [0,\bar{t}_{0}]\}$. 

For the proof we distinguish two cases, (I) when $S_{t}$ is an axisymmetric torus, and, (II) when $S_{t}$ is an axisymmetric sphere (there are no other possibilities). Take into account that all the calculations below are made on $(\Omega;g(t),K(t))$ in particular that $A(S_{t})$ denotes the $g(t)$-area of $S_{t}$, i.e. $A(S_{t})=A_{g(t)}(S_{t})$.

\vs
(I) {\it $S_{t}$ is an axisymmetric torus}.  Choosing $\alpha=1$ in the stability inequality (\ref{SIMD}) we obtain
\ben
 \int_{S_{t}} |K|^{2} dA\leq \int_{S_{t}} k^{2} dA,
\een
and from (\ref{KAM}) we get 
\ben
(8\pi)^{2}|J(S_{t})|^{2}=|\int_{S_{t}} K(\xi,\varsigma)dA|^{2}\leq \frac{\ell_{0}^{2} A_{1}}{(2\pi)^{2}} \int_{S_{t}} |K|^{2}dA.
\een
From these two inequalities and the bounds mentioned above we obtain
\ben
(8\pi)^{2}|J(S_{t})|^{2}\leq \frac{A^{2}_{1} \ell_{0}^{2}k_{0}^{2}}{(2\pi)^{2}}\ t^{4}. 
\een
Now, if $t_{0}$, ($t_{0}<\bar{t}_{0}$) is small enough we obviously have
\ben
A_{0}\geq  \frac{A_{1} \ell_{0}k_{0}\ t_{0}^{2}}{2\pi}.
\een
Putting all the inequalities together we obtain for any $t\leq t_{0}$ and $\Lambda_{0}>0$
\ben
A(S_{t})\geq A_{0}\geq \frac{A_{1} \ell_{0}k_{0}\ t_{0}^{2}}{2\pi}\geq 8\pi |J(S_{t})| \geq 8\pi |J(S_{t})|-\Lambda_{0} t^{4}.
\een

\vs
(II) {\it $S_{t}$ is an axisymmetric sphere.} Again, take into account that all the calculations below are made on $(\Omega;g(t),K(t))$. Let $(\theta,\varphi)$ be the areal-coordinates of the sphere $S_{t}$ and let $\sigma$ be as in (\ref{MH}). We would like to obtain first uniform upper and lower bounds for the function $\sigma(\theta)$ ($\theta\in [0,2\pi]$) where by uniform we mean independently of $\theta$ and independent also of the stable surface $S_{t}$ ($t_{0}\leq \bar{t}_{0}$, $\bar{t}_{0}$ as before), although dependent on $\rho_{0}, \mu_{0}$ and $\mu_{1}$. Observe that because any axisymmetric sphere has two poles then $S_{t}$ intersects $\poles$ at two points.  By Corollary \ref{COR3} if $S_{t}$ intersects ${\mathcal T}_{g(t)}(\poles,\epsilon_{2})$ it does so in discs $D$ on which, in polar coordinates, we have the bounds 
\begin{gather}
\label{I1} |\ell(s)-2\pi s|\leq c_{0}s^{2},\\
\label{I2} |A(s)-\pi s^{2}|\leq c_{1}s^{3},
\end{gather}
where we are using the notation for $\ell(s), A(s)$ as explained in Section \ref{APCO}. The equations are
valid on $0<s<s_{D}$ with $s_{D}\in [s_{0},s_{1}]$ and the constants $s_{0},s_{1},c_{0},c_{1}$ depend only on $\rho_{0}$, $\mu_{0}$ and $\mu_{1}$. Moreover we have (suppose without loss of generality that $\theta=0$ is the pole of the disc),
\begin{gather}
\label{II1}\ell(s(\theta))=2\pi e^{\scb{.9}{$\sigma(\theta)/2$}}\sin\theta,\\
\label{II2}A(s(\theta))=\frac{A(S_{t})}{2} (1-\cos\theta).
\end{gather}
Combining (\ref{II1}) and (\ref{II2}) we get
\ben\label{COMBINING}
e^{\scb{.9}{$\sigma(\theta(s))$}}=\frac{1}{16\pi^{2}} \bigg[ \frac{A(S_{t})^{2}}{A(S_{t}) - A(s)}\bigg] \bigg[ \frac{\ell^{2}(s)}{A(s)}\bigg].
\een
To estimate the right hand side of this expression we will use
\ben
4\pi \frac{(1-c_{0}s/2\pi)^{2}}{(1+c_{1}s/\pi)} \leq \frac{\ell^{2}(s)}{A(s)}\leq 4\pi \frac{(1+c_{0}s/2\pi)^{2}}{(1-c_{1}s/\pi)},
\een
obtained from the inequalities (\ref{I1})-(\ref{I2}) and $A_{0}\leq A(S_{t})\leq A_{1}$. From them one easily shows that there is $s_{2}(\rho_{0},\mu_{0},\mu_{1})$, with $s_{2}\leq s_{0}$, such that for any $s\in (0,s_{2}]$ 
we have (the coarse) bounds
\be\label{CBOU}
\frac{A_{0}}{8\pi}\leq e^{\scb{.9}{$\sigma$}}\leq \frac{A_{1}}{\pi}. 
\ee
It follows that there is some uniform $\epsilon_{3}(\rho_{0},\mu_{0},\mu_{2})$ with $\epsilon_{3}\leq \epsilon_{2}$ such that for any $S_{t}$, $|\sigma|$ is uniformly bounded on $S_{t}\cap {\mathcal T}_{g(t)}(\poles,\epsilon_{3})$.

Moreover, combining (\ref{I1})-(\ref{II2}) we obtain that there is $\theta_{0}(\rho_{0},\mu_{0},\mu_{2})$ such that for any point $q\in S_{t}$ outside ${\mathcal T}_{g(t)}(\poles,\epsilon_{3})$ we have either $|\theta(q)|\geq \theta_{0}$ or $|\pi-\theta(q)|\geq \theta_{0}$. On the other hand there are uniform upper and lower bounds for the length $\ell(q)$ of the axisymmetric circles (orbits) passing through any point $q\in S_{t}$ outside ${\mathcal T}_{g(t)}(\poles,\epsilon_{3})$. Because of these two facts and the expression 
$e^{\sigma(\theta(q))/2}=\frac{\ell(q)}{2\pi \sin\theta(q)}$ we deduce that there is a uniform bound for $|\sigma(\theta(q))|$ at any point $q\in S_{t}$ outside the tubular neighborhood ${\mathcal T}_{g(t)}(\poles,\epsilon_{3})$. We thus obtain a uniform bound for $|\sigma|$ on any axisymmetric stable minimal sphere $S_{t}$ with $A(S_{t})\leq A_{1}$ as desired. 

Now, following \cite{2011PhRvL.107e1101D} (see also \cite{2011PhRvD..84l1503J}) one can use the stability inequality to deduce [\footnote{Use (15)-(29)-(31)-(32) in \cite{2011PhRvL.107e1101D} and instead of (30) in \cite{2011PhRvL.107e1101D} use the constraint equation $R=|K|-k^{2}$. Finally note that the equation (\ref{KAM}) for the angular momentum is still valid even if the data set is not maximal. Indeed the Komar angular momentum can be intrinsically defined as (5) in \cite{2011PhRvD..84l1503J} which coincides with (\ref{KAM}).}] for axisymmetric spheres $S_{t}$ the following crucial inequality
\be\label{AI1}
A(S)\geq 4\pi e^{\scalebox{1.1}{$\frac{{\mathcal{\overline{M}}}-8}{8}$}},
\ee
where ${\mathcal{\overline{M}}}$ is defined by
\ben
{\mathcal{\overline{M}}}=\frac{1}{2\pi}\int_{S} (\sigma'^{2}+4\sigma +\frac{\omega'^{2}}{\eta^{2}})\sin\theta d\theta d\phi -\frac{1}{2}e^{\scb{.9}{$2c$}}\int_{S} k^{2}e^{\scalebox{1}{$-\sigma$}}\sin\theta d\theta d\phi.
\een
In this expression $e^{c}=A(S)/4\pi$, $\eta=e^{\sigma}\sin^{2}\theta$, the prime in $\sigma'$ and $\omega'$ are their $\theta$-derivative and $\omega=\omega(\theta)$ is a smooth function defined through
\ben
\frac{d\, \omega}{d\, \theta}=\frac{A(S)}{2\pi}K(\xi,\varsigma)\sin\theta.
\een 
A direct computation shows $J(S)=(\omega(\pi)-\omega(0))/8$. On the other hand it is proved in \cite{2011PhRvL.107e1101D}, \cite{2011CQGra..28j5014A} that, if we let
\ben
{\mathcal{M}}=\frac{1}{2\pi}\int_{S} (\sigma'^{2}+4\sigma +\frac{\omega'^{2}}{\eta^{2}})\sin\theta d\theta d\phi,
\een
\n then 
\be\label{AI2}
4\pi e^{\scalebox{1.1}{$\frac{{\mathcal{M}}-8}{8}$}  }\geq 8\pi |J(S)|.
\ee
Using this in (\ref{AI1}) we obtain
\be\label{III}
e^{\big[ \scalebox{1}{$\frac{  \scb{1}{$e^{2c}$}  }{16}\int_{S} k^{2}e^{\scb{.9}{$-\sigma$}}\sin\theta d\theta d\phi$}\big]} A(S)\geq 8\pi |J(S)|.
\ee
We will use this now for $S=S_{t}$ on $(\Omega;g(t),K(t))$. Using the bounds $A(S_{t})\leq A_{1}$, $|k|=|k(t)|\leq k_{0} t^{2}$ and (\ref{CBOU}) we obtain the following bound on the exponent of the l.h.s of the inequality (\ref{III}) 
\be\label{LOLITA}
\left(\frac{A(S_{t})}{16\pi}\right)^{2}\int_{S_{t}} k^{2}e^{\displaystyle -\sigma}\sin\theta d\theta d\phi\leq \frac{A_{1}^{2}k_{0}^{2}}{8A_{0}} t^{4}.
\ee
Let $x=A_{1}^{2}k_{0}^{2}t^{4}/8 A_{0}$. Then if $t_{0}$ is sufficiently small we have $0<x\leq A_{1}^{2}k_{0}^{2}t_{0}^{4}/8 A_{0}\leq 1$ and therefore $e^{x}\leq 1+2x$. Use now (\ref{LOLITA}) in (\ref{AI2}), then $e^{x}\leq 1+2x$ and finally again the bound $A(S_{t})\leq A_{1}$ to  obtain
\ben
A(S_{t})\geq 8\pi|J(S_{t})|- \frac{A_{1}^{3}k_{0}^{2}}{4A_{0}} t^{4}.
\een
The claim follows by defining $\Lambda_{0}=A_{1}^{3}k_{0}^{2}/4A_{0}$.\end{proof}

\vs
We are ready for the proof of Theorem \ref{MLemma}. We recall first the setup of the proof. We assume by contradiction that there is an asymptotically flat data set $(\Sigma;g,K)$ on which there is an extreme Kerr-throat sphere. Then, as was explained, one can always consider a data set $(\Omega;g,K)$ (constructed from the data $(\Sigma;g,K)$) where $(\Omega;g)$ is a compact manifold with strictly mean convex boundary and having an extreme Kerr-throat sphere $S_{H}$ in its interior. Moreover there exists a positive solution $N_{0}$ to the Lapse equation on $\Omega$ which is not proportional to $\alpha_{T}$ over the extreme sphere. The data $(\Omega;g,K)$ is embedded in a space-time and we consider its evolution under the vector field $V$ as in (\ref{VFV}) which gives us an axisymmetric flow $(\Omega;g(t),K(t))$.  

\vs
\begin{proof}[\bf Proof of Theorem \ref{MLemma}] Let $t_{0}$ and $\Lambda_{0}$ be as in Proposition \ref{P4} when one choses $A_{1}=A_{g(0)}(S_{H})$. Let $t$ be a time in $(0,t_{0})$. Below we will work on $(\Omega;g(t),K(t))$. Therefore keep in mind that all the quantities, in particular areas, are found from $(g(t),K(t))$.  

First we observe that the infimum of the areas of all the surfaces isotopic to $S_{H}$ is non-zero. Indeed for any surface $S'$ isotopic to $S_{H}$ we have
\ben
0<|J(S_{H})|=|J(S')|=\frac{1}{8\pi}\bigg|\int_{S'} K(\xi,\varsigma)\, dA\, \bigg|\leq \frac{1}{8\pi}\|K\|_{L^{\infty}}\|\xi\|_{L^{\infty}} A(S').
\een 
Then, following \cite{MR678484} (Theorems 1 and 1')  there is a sequence of surfaces $\{S'_{l}\}$, with each $S'_{l}$ isotopic to $S_{H}$, converging in measure to $n_{1}S_{1}+\ldots+n_{k}S_{k}$, where $\{S_{1},\ldots,S_{k}\}$ is a set of compact and embedded surfaces [\footnote{A very accurate description of the relation between the sequence $\{S'_{l}\}$ and the surfaces $S_{1},\ldots,S_{k}$ is given in Remark (3.27) of \cite{MR678484}.}], [\footnote{Namely for every function $f$ on $\Omega$ we have $
\lim \int_{\bar{S}_{l}}f dA=\sum n_{i} \int_{S_{i}}f dA$.}]. Moreover the infimum of the areas of all the surfaces isotopic to $S_{H}$ is equal to $n_{1}A(S_{1})+\ldots+n_{k}A(S_{k})$. In particular $A(S_{i})\leq A(S_{H})\leq A_{1}$ (this upper bound is needed to apply later Proposition \ref{P4}).
If one of the surfaces, say $S_{i}$, is orientable, then it is stable and therefore axisymmetric (see Section \ref{MSAS}). In such case there are $n_{i}^{+}\geq 0$, $n_{i}^{-}\geq 0$ with $n_{i}=n_{i}^{+}+n_{i}^{-}$, indicating how many times the sequence $\{S'_{l}\}$ ``wraps around" the oriented $S_{i}$ with one orientation and how many with the opposite orientation. Precisely, 
for any two-form $\chi$ supported on a small neighborhood of the oriented surface $S_{i}$, we have
\ben
\lim \int_{S'_{l}} \chi = (n_{i}^{+}-n^{-}_{i}) \int_{S_{i}} \chi.
\een
If on the other hand one of the surfaces, say $S_{i}$, is non-orientable then for any two-form supported in a small neighborhood  $S_{i}$ we have
\be\label{LSL}
\lim \int_{S'_{l}}\chi=0.
\ee
Note that if all the $S_{i}$'s were non-orientable, then using (\ref{LSL}) with $\chi=*K(\xi,-)$, we would get $J(S_{H})=J(S'_{l})=\lim J(S'_{l})=0$ which is not possible. We deduce that at least one of $S_{i}$'s has to be orientable (this fact is not essential). Let us order the surfaces in such a way that $\{S_{1},\ldots,S_{j}\}$, $j\geq 1$ are the orientable (and oriented) and $\{S_{j+1},\ldots,S_{k}\}$ are the non-orientable. We have
\begin{align*}
|J(S_{H})|&=\lim |J(S'_{l})|=| \sum_{i=1}^{i=j}(n^{+}_{i}-n^{-}_{i})J(S_{i})|\leq \sum_{i=1}^{i=j}n_{i}|J(S_{i})|
\leq \sum_{i=1}^{i=j}\frac{n_{i}}{8\pi}A(S_{i})+O(t^{4})\\
&\leq \frac{1}{8\pi}A(S_{H})+O(t^{4})\leq |J(S_{H})|-\frac{A''_{N_{0}}(S_{H})}{16\pi}t^{2}+O(t^{3}),
\end{align*}
where $A''_{N_{0}}(S_{H})>0$ and where to obtain the inequality between the fourth and fifth terms we have used the Proposition \ref{P4} and to obtain the inequality between the sixth and seventh terms we have used Proposition \ref{P1}. 
We obtained thus a contradiction for short times. This finishes the proof of Theorem \ref{MLemma}.\end{proof} 
%


%
\subsection{Proof of Theorem \ref{Lemma3}.}\label{PT2}
In the following sections the reader may benefit from the ``quotient" viewpoint on the geometry of $(\Sigma,g)$, where $(\Sigma,g)$, as in the hypothesis of Theorem \ref{Lemma3}, is axisymmetric and diffeomorphic to $S^{2}\times \mathbb{R}$. 

Recall that the group $U(1)$ acts by isometries and that the set of fixed points consist of two connected and complete one-dimensional manifolds (the axes), and therefore each diffeomorphic to $\mathbb{R}$. The quotient of $\Sigma$ by the action, denoted by $\tilde{\Sigma}$ is diffeomorphic to $[0,1]\times \mathbb{R}$, where ${\mathcal S}:=\{0\}\times \mathbb{R}$ and ${\mathcal N}:=\{1\}\times \mathbb{R}$ are the pair of ``South" and ``North" axis. The set of both axis will be denoted as before by ${\mathcal P}={\mathcal S}\cup {\mathcal N}$ and the topological interior by $\tilde{\Sigma}^{\circ}:=\tilde{\Sigma}\setminus \poles=\tilde{\Sigma}\setminus \partial \tilde{\Sigma}$. We denote the projection by $\Pi$ (in particular $\Pi(\Sigma)=\tilde{\Sigma}$). Denote by $\lambda^{2}$ the square norm of the axisymmetric Killing field and let $\tilde{g}$ be the two dimensional quotient metric on $\tilde{\Sigma}^{\circ}$, namely, if $\tilde{w}=\Pi(w)$ and $\tilde{v}=\Pi(v)$ with $w,v$ tangent vectors at $p\in \Sigma\setminus \poles$, then at $\tilde{p}=\Pi(p)$ we have
\ben
\tilde{g}(\tilde{w},\tilde{v})=g(w,v)-\frac{g(\xi,w)g(\xi,v)}{\lambda^{2}}.
\een      
The metric $\tilde{g}$ extends smoothly to $\tilde{\Sigma}$ \cite{Chrusciel:2007dd}. Every axisymmetric sphere $S$ projects into a one-dimensional manifold diffeomorphic to $[0,1]$ starting and ending $\tilde{g}$-perpendicularly to the axis (see Figure \ref{Fig3}). An axisymmetric sphere is contractible inside $\Sigma$ iff the projection starts and ends in the same axis. Axisymmetric torus project into closed curves inside $\tilde{\Sigma}^{\circ}$ and are therefore contractible in $\Sigma$. Besides spheres and tori, there are no more orientable axisymmetric boundary-less surfaces. Axisymmetric discs project into one-dimensional manifolds diffemorphic to $[0,1]$, starting $\tilde{g}$-perpendicularly at an axis and ending at an interior point. We will use the notation $\beta^{\north,\south}$ for projected axisymmetric spheres starting in ${\mathcal N}$ and ending in ${\mathcal S}$, and $\beta^{\north}(\tilde{p})$ (resp. $\beta^{\south}(\tilde{p})$)  for projected discs starting at $\tilde{p}$ and ending at ${\mathcal N}$ (resp. ${\mathcal S}$). Observe that these curves are embedded. All this is shown in Figure \ref{Fig3}.

Let $\beta(\tilde{s})$ be a curve in $\tilde{\Sigma}$ parametrized with respect to $\tilde{g}$ arc-length, then 
\ben
A(\Pi^{-1}(\beta))=2\pi \int_{s_{0}}^{s_{1}} \lambda(\beta(\tilde{s}))\ d\tilde{s}.
\een 
In other words the area is equal to the length of $\beta$ with respect to the conformal metric $\bar{\tilde{g}}=(2\pi \lambda)^{2}\tilde{g}$. In particular axisymmetric minimal surfaces (which minimize area locally) in $\Sigma$ correspond to $\bar{\tilde{g}}$-geodesics in $\tilde{\Sigma}^{\circ}$. 

\begin{figure}[h]
\centering
\includegraphics[width=10cm,height=5cm]{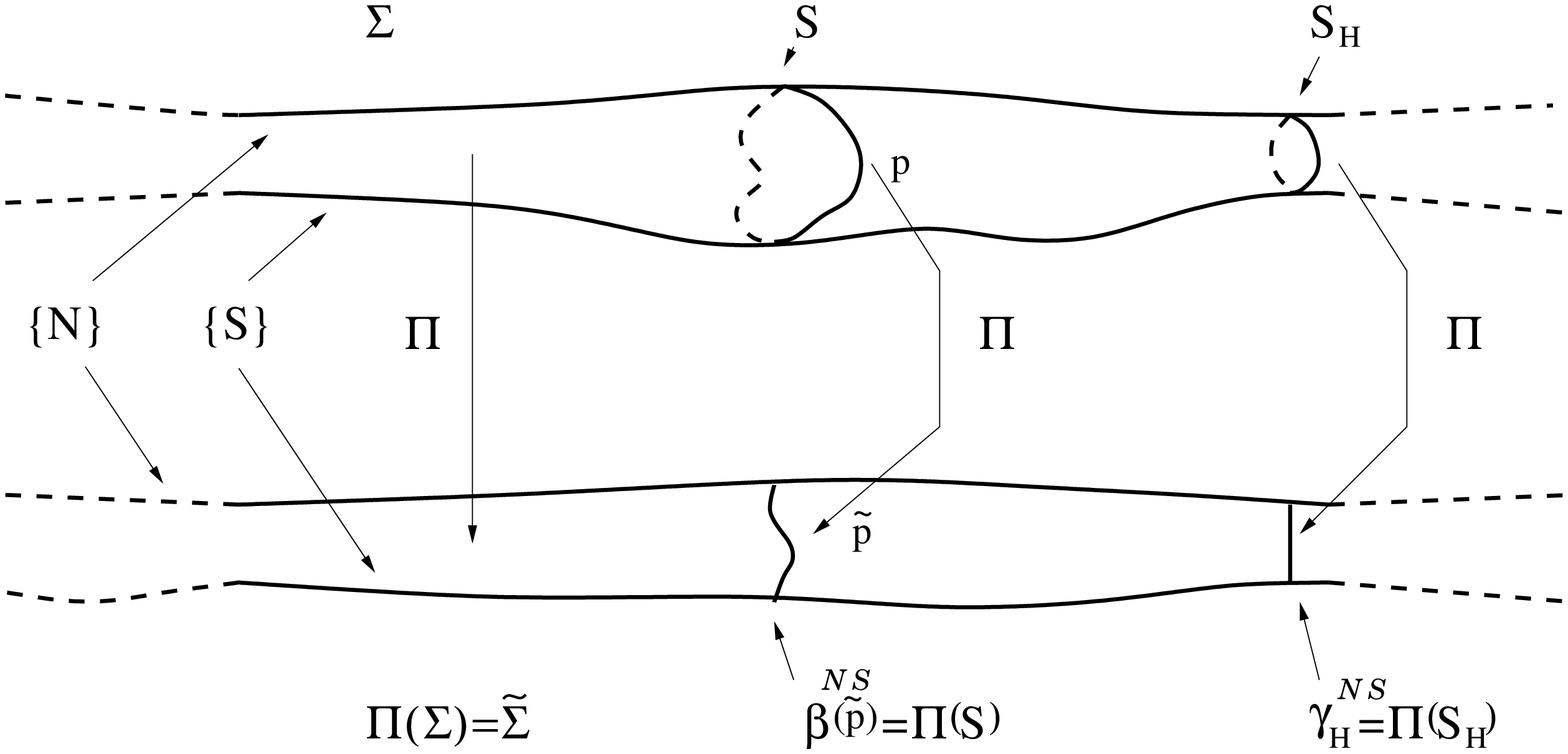}
\caption{}
\label{Fig3}
\end{figure} 

The hypothesis of Theorem \ref{Lemma3} translates into the following two conditions on the quotient manifold:
\begin{enumerate}[labelindent=\parindent,leftmargin=*,label={\bf C\arabic*.}]
\item There is a $\bar{\tilde{g}}$-geodesic $\gamma^{\north,\south}_{H}$ of $\bar{\tilde{g}}$-length $A_{H}:=8\pi|J|$.
\item The $\bar{\tilde{g}}$-length of any curve $\beta^{\north,\south}$ is greater or equal than $A_{H}$.
\end{enumerate}
(we will use $A_{H}$ instead of $8\pi|J|$ from now on). The curve $\gamma^{\north,\south}_{H}$ divides $\tilde{\Sigma}$ into two smooth manifolds $\tilde{\Sigma}_{1}$ and $\tilde{\Sigma}_{2}$ each diffeomorphic to $[0,1]\times \mathbb{R}^{+}_{0}$ ($\mathbb{R}^{+}_{0}=[0,\infty)$). Denote $\Sigma_{i}=\Pi^{-1}(\tilde{\Sigma}_{i})$. At any point $\tilde{p}\in \tilde{\Sigma}_{i}$, with $i$ either $1$ or $2$, define 
\begin{align*}
& \underline{A}_{i}(\tilde{p})=\inf\bigg\{{\rm length}_{\bar{\tilde{g}}}(\beta^{\north,\south}(\tilde{p})),\ \beta^{\north,\south}(\tilde{p})\subset \tilde{\Sigma}_{i}\bigg\},\\
& \underline{A}_{i}^{\north}(\tilde{p})=\inf\bigg\{{\rm length}_{\bar{\tilde{g}}}(\beta^{\north}(\tilde{p})),\ \beta^{\north}(\tilde{p})\subset \tilde{\Sigma}_{i}\bigg\},
\end{align*}
and similarly for $\underline{A}_{i}^{\south}(\tilde{p})$. We also define
\be\label{AU}
\underline{A}(\tilde{p})=\inf \bigg\{{\rm length}_{\bar{\tilde{g}}} (\beta^{\north,\south}(\tilde{p})),\ \beta^{\north,\south}(\tilde{p})\subset \tilde{\Sigma}\bigg\}.
\ee
In the original space, $\underline{A}_{i}(\tilde{p})$ is just the infimum of the areas of the axisymmetric spheres in $\Sigma_{i}$ intersecting ${\mathcal N}$ and ${\mathcal S}$ and containing the orbit $\Pi^{-1}(\tilde{p})$. Similarly $\underline{A}_{i}^{\north}(\tilde{p})$ (resp. $\underline{A}_{i}^{\north}(\tilde{p})$) is the infimum of the areas of the axisymmetric discs in $\Sigma_{i}$ intersecting ${\mathcal N}$ (resp. ${\mathcal S}$) and with boundary $\Pi^{-1}(\tilde{p})$. By {\bf\small{C2}} we have $\underline{A}_{i}(\tilde{p})\geq A_{H}$ for all $\tilde{p}$. The reader can check easily also that $\underline{A}_{i}(\tilde{p})\geq \underline{A}_{i}^{\north}(\tilde{p})+\underline{A}_{i}^{\south}(\tilde{p})$. These quantities are irrelevant outside axisymmetry. 

The use of the quotient picture has considerable advantages but also shortcomings. As a general rule the analysis away from the axis is more conveniently done in the quotient geometry. At the axis however the metric $\bar{\tilde{g}}$ is singular and it is better to stick to the original space. For this reasons we will keep a mixed usage.  

\vs
We state now a basic Proposition concerning area-minimizing sequences of discs that will be required to prove Lemma \ref{LE1} on which the proof of Theorem \ref{Lemma3} relies. To avoid further delays for the proof of Theorem \ref{Lemma3} we postpone the proof of the Proposition until the Appendix. We use the notation $D^{\north}(C)$ and $D^{\south}(C)$ as in Section \ref{APCO} for axisymmetric discs intersecting ${\mathcal N}$ and ${\mathcal S}$ respectively.  
\begin{Proposition}\label{SHGE}
Let $i$ be $1$ or $2$. Let $\Pi(C)=\tilde{p}\in \tilde{\Sigma}^{\circ}_{i}$. Then $\underline{A}_{i}^{\north}(\tilde{p})$ is equal to either
\begin{enumerate}[labelindent=\parindent,leftmargin=*,label={\bf D\arabic*.}]
\item The area of a stable minimal disc $D^{\north}(C)=\Pi^{-1}(\gamma^{\north}(\tilde{p}))$, or,
\item The area of a stable minimal disc $D^{\south}(C)=\Pi^{-1}(\gamma^{\south}(\tilde{p}))$ plus $A_{H}$.
\end{enumerate}
\n 
Moreover there is an area minimizing sequence of discs $D^{\north}_{j}(C)=\Pi^{-1}(\beta^{\north}_{j}(\tilde{p})),\ j\geq 1$ converging (in measure) to either
\begin{enumerate}[labelindent=\parindent,leftmargin=*,label={}]
\item $D^{\north}(C)=\Pi^{-1}(\gamma^{\north}(\tilde{p}))$ in case ${\bf\small{D1}}$ holds, or,
\item $D^{\south}(C)\cup S_{H}=\Pi^{-1}(\gamma^{\south}(\tilde{p}))\cup \Pi^{-1}(\gamma^{\north,\south}_{H})$ in case ${\bf\small{D2}}$ holds.
\end{enumerate}
A similar statement holds for $\underline{A}_{i}^{\south}(\tilde{p})$ by changing $\north\rightarrow \south$ and $\south\rightarrow \north$. 

\end{Proposition}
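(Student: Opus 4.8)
The plan is to read off $\underline{A}_{i}^{\north}(\tilde{p})$ from the $1$-dimensional length problem in the quotient and then realise the infimum by running a Meeks--Simon--Yau type compactness argument (\cite{MR678484}, Theorems 1 and 1', already used in the proof of Theorem \ref{MLemma}) on a minimizing sequence of axisymmetric discs $D^{\north}_{j}(C)=\Pi^{-1}(\beta^{\north}_{j}(\tilde{p}))$ in $\Sigma_{i}$ with $A(D^{\north}_{j}(C))\downarrow \underline{A}_{i}^{\north}(\tilde{p})$. First I would set up the variational problem on $(\tilde{\Sigma}_{i},\bar{\tilde{g}})$: by $A(\Pi^{-1}(\beta))=\mathrm{length}_{\bar{\tilde{g}}}(\beta)$ this is the problem of a $\bar{\tilde{g}}$-shortest path from $\tilde{p}$ to $\mathcal{N}$ inside $\tilde{\Sigma}_{i}$, where $\bar{\tilde{g}}=(2\pi\lambda)^{2}\tilde{g}$ \emph{degenerates} on the axes $\poles$ (since $\lambda$ vanishes there). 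This degeneracy is exactly what produces the dichotomy: travel along a segment of $\mathcal{N}$ or $\mathcal{S}$ is $\bar{\tilde{g}}$-free, being a limit of arbitrarily $\bar{\tilde{g}}$-short (though possibly $\tilde{g}$-long) paths running parallel to the axis; hence a competitor can run from $\tilde{p}$ to $\mathcal{S}$, slide along $\mathcal{S}$ for free, and then cross to $\mathcal{N}$ along a near-copy of $\gamma^{\north,\south}_{H}$ at cost $A_{H}$ (admissible since $\gamma^{\north,\south}_{H}\subset\overline{\tilde{\Sigma}_{i}}$), while by {\bf\small C2} any path joining $\mathcal{S}$ to $\mathcal{N}$ inside $\tilde{\Sigma}_{i}$ has $\bar{\tilde{g}}$-length at least $A_{H}$. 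This gives the identity
\begin{equation*}
\underline{A}_{i}^{\north}(\tilde{p})=\min\bigl\{\, d_{\bar{\tilde{g}}}(\tilde{p},\mathcal{N};\tilde{\Sigma}_{i}),\ d_{\bar{\tilde{g}}}(\tilde{p},\mathcal{S};\tilde{\Sigma}_{i})+A_{H}\,\bigr\},
\end{equation*}
whose two alternatives are precisely {\bf\small D1} and {\bf\small D2}.

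It then remains to see that these distances are attained, and to identify the limit of the minimizing sequence. I would take a minimizing sequence for the appropriate distance, use that the totally geodesic boundary $S_{H}=\partial\Sigma_{i}$ is a barrier and that homogeneous regularity together with the area bound keep the sequence in a compact region (a bounded-area component drifting to an end would be a removable closed minimal surface, contradicting minimality), and apply the Meeks--Simon--Yau compactness theorem: a subsequence converges in measure to a finite sum $\sum n_{a}\Gamma_{a}$ of smooth embedded minimal surfaces whose total area equals the infimum, one component $\Gamma_{0}$ carrying the boundary $C$. Since $A_{H}>0$, every closed component contributes strictly positive area, so minimality forces at most one closed component. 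If there is none, the limit is a single disc $D^{\north}(C)=\Gamma_{0}$ with boundary $C$, and since it is an area minimizer of axisymmetric data it is stable and axisymmetric, so it projects to a $\bar{\tilde{g}}$-geodesic $\gamma^{\north}(\tilde{p})$: case {\bf\small D1}. If there is one closed component $S'$, it is necessarily an axisymmetric minimal sphere (a torus would be null-homologous and removable), so $\Pi(S')$ is a $\beta^{\north,\south}$ curve and $A(S')\ge A_{H}$ by {\bf\small C2}; writing $d_{S}:=d_{\bar{\tilde{g}}}(\tilde{p},\mathcal{S};\tilde{\Sigma}_{i})$, $d_{N}:=d_{\bar{\tilde{g}}}(\tilde{p},\mathcal{N};\tilde{\Sigma}_{i})$, the component $\Gamma_{0}$ cannot cap towards $\mathcal{N}$ (that would give total area $\ge d_{N}+A_{H}>\underline{A}_{i}^{\north}(\tilde{p})$), so it caps towards $\mathcal{S}$ with $A(\Gamma_{0})\ge d_{S}$, and the equality $A(\Gamma_{0})+A(S')=d_{S}+A_{H}$ forces $A(\Gamma_{0})=d_{S}$ (an honest stable minimal disc $D^{\south}(C)$) and $A(S')=A_{H}$, i.e. $S'=S_{H}$: case {\bf\small D2}. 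This also rules out the spurious limits $D^{\north}(C)\cup S_{H}$ (non-minimizing by $A_{H}>0$) and $D^{\south}(C)$ alone (such a limit of $\mathcal{N}$-ending curves would require reaching $\mathcal{N}$ from near $\mathcal{S}$ at $\bar{\tilde{g}}$-cost $<A_{H}$, contradicting {\bf\small C2}).

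The main obstacle is the behaviour near the axes $\poles$, where $\bar{\tilde{g}}$ degenerates and one must control the minimizing sequence in a full neighbourhood of an entire axis rather than near an isolated pole; this is exactly what Corollary \ref{COR3}(3)--(4) and Proposition \ref{PMSF2} provide — the uniform local foliations by area-minimizing discs near the axis together with the estimate $|\ell(s)-2\pi s|\le c_{0}s^{2}$ — so the discs $D^{\north}_{j}(C)$ cannot split or oscillate along $\poles$, the convergence there is genuine, and the limiting disc meets the axis transversally in a single boundary arc. The second delicate point, excluding two or more copies of $S_{H}$ and the orientation-mismatched limits, is handled by the strict positivity of $A_{H}$ and by the $\bar{\tilde{g}}$-length bound {\bf\small C2}, as indicated above. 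The statement for $\underline{A}_{i}^{\south}(\tilde{p})$ follows by interchanging $\mathcal{N}$ and $\mathcal{S}$ throughout.
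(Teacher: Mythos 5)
Your overall strategy is the same as the paper's: pass to the quotient, treat $\underline{A}_{i}^{\north}(\tilde{p})$ as a degenerate length-minimization problem for $\bar{\tilde{g}}$, run a Meeks--Simon--Yau compactness argument on a minimizing sequence of discs, and sort the limit components using the vanishing of $\bar{\tilde{g}}$ along $\poles$ together with {\bf C2}. Your identification of the possible limits (excluding $D^{\south}(C)$ alone, excluding $D^{\north}(C)$ plus closed components, and excluding two or more closed components via the near-axis shortcut of cost $o(1)$) is essentially the paper's case analysis {\bf P1}--{\bf P4} and is sound.

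The genuine gap is the confinement step. $\Sigma_{i}$ is non-compact (diffeomorphic to $S^{2}\times[0,\infty)$), and the Meeks--Simon--Yau theorem you invoke needs a compact ambient manifold with suitable (mean convex) boundary. Your justification --- ``homogeneous regularity together with the area bound keep the sequence in a compact region (a bounded-area component drifting to an end would be a removable closed minimal surface, contradicting minimality)'' --- does not address the actual danger, which is not a closed component escaping but the disc component itself: a priori the minimizing discs $D^{\north}_{j}(C)$ could bulge further and further into the end as $j\to\infty$, and then no limit exists in any fixed compact set and the infimum need not be attained. The paper spends most of its proof on exactly this point: it exhausts $\Sigma_{i}$ by compact regions $\Omega_{m}$, modifies the metric to $g^{*}_{m}$ near the truncating sphere $S_{m}$ so that the outer boundary is strictly convex (making MSY applicable and proving the truncated version, {\it Claim 1}), and then proves the quantitative estimate ({\it Claim 2}) that any stable minimal disc through $C$ reaching the distant sphere $S_{m(B)}$ has area at least $B$ inside $\Omega_{m(B)}$. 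That estimate rests on Corollary \ref{COR3} {\it item 3} (the disc can meet the tube ${\mathcal T}_{g}(\poles,\epsilon_{2})$ only in a small disc at its pole, so its projected $\bar{\tilde{g}}$-geodesic must stay a definite distance from the axis over a long stretch) combined with the homogeneous regularity of $\bar{\tilde{g}}$ away from $\poles$, which gives a lower bound on length per unit of travel toward the end. With $B$ twice the area of any fixed competitor disc, the truncated minimizers are pinned in a fixed $\Omega_{m(B)}$ and stabilize as $m\to\infty$, which is what makes $\underline{A}^{\north}_{i}(\tilde p)=\lim_{m}\underline{A}^{\north}_{i,m}(\tilde p)$ attained in the form {\bf D1}/{\bf D2}. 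Some version of this argument is indispensable; without it your identity $\underline{A}_{i}^{\north}(\tilde{p})=\min\{d_{\bar{\tilde{g}}}(\tilde{p},{\mathcal N}),\,d_{\bar{\tilde{g}}}(\tilde{p},{\mathcal S})+A_{H}\}$ asserts attainment of distances that you have not shown to be attained. A minor additional point: from $A(S')=A_{H}$ you may conclude that $S'$ is an extreme Kerr-throat sphere, but not that $S'=S_{H}$; the paper obtains the specific limit $D^{\south}(C)\cup S_{H}$ in case {\bf D2} by explicitly constructing a minimizing sequence converging to it, not by identifying the closed component of an arbitrary minimizing sequence.
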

\begin{Lemma}\label{LE1} Let $i$ be $1$ or $2$. If for all $\tilde{p}\in \tilde{\Sigma}_{i}$ it is $\underline{A}_{i}(\tilde{p})=A_{H}$ then $(\Sigma_{i};g,K)$ is half of the extreme Kerr throat of angular momentum $|J|=A_{H}/8\pi$. 
\end{Lemma}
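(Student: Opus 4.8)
\textbf{Proof proposal for Lemma \ref{LE1}.}

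The plan is to show that the hypothesis $\underline{A}_i(\tilde p)=A_H$ for all $\tilde p\in\tilde\Sigma_i$ forces every axisymmetric sphere $S\subset\Sigma_i$ isotopic to the $S^2$ factor and meeting both axes to satisfy $A(S)\geq A_H=8\pi|J|$ with the infimum $A_H$ attained along a whole family; I then bootstrap this into a foliation of $\Sigma_i$ by extreme Kerr-throat spheres. First I would fix the curve $\gamma^{\north,\south}_H$ bounding $\tilde\Sigma_i$ and, for each $\tilde p\in\tilde\Sigma_i^\circ$, apply Proposition \ref{SHGE} to the orbit $C=\Pi^{-1}(\tilde p)$: in the notation there, $\underline{A}_i^{\north}(\tilde p)$ and $\underline{A}_i^{\south}(\tilde p)$ are each realized by a stable minimal disc (possibly plus a copy of $S_H$ in the degenerate case \textbf{D2}). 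Since $\underline{A}_i(\tilde p)\geq \underline{A}_i^{\north}(\tilde p)+\underline{A}_i^{\south}(\tilde p)$ and the left side equals $A_H$ by hypothesis, and since concatenating a minimizing $\north$-disc and a minimizing $\south$-disc through $C$ produces an axisymmetric sphere $S$ isotopic to $S^2$ in $\Sigma_i$ whose area is $\underline{A}_i^{\north}(\tilde p)+\underline{A}_i^{\south}(\tilde p)\leq A_H$, and which by \textbf{C2} has area $\geq A_H$, equality holds throughout. In particular case \textbf{D2} cannot occur (it would give area $\geq A_H+{}$(a positive disc area)${}>A_H$), so for every interior $\tilde p$ the orbit $C$ lies on an axisymmetric sphere $S_{\tilde p}\subset\Sigma_i$ of area exactly $A_H$ which is a union of two stable minimal discs meeting smoothly along $C$; one checks this union is itself a smooth stable minimal surface, being the limit (in measure, then by elliptic regularity) of the area-minimizing sequences of spheres through $C$.

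Next I would invoke the key rigidity result quoted from \cite{2011PhRvL.107e1101D} right after the statement of Proposition \ref{PMSF}: any stable axisymmetric minimal surface with $A=8\pi|J|$ embedded in a maximal axisymmetric vacuum data set is an extreme Kerr-throat sphere, \emph{regardless} of the global nature of the data. Hence every $S_{\tilde p}$, $\tilde p\in\tilde\Sigma_i^\circ$, is an extreme Kerr-throat sphere of area $A_H$; in particular it is totally geodesic, its induced two-metric in areal coordinates is (\ref{INDUMETRIC}) with $|J|=A_H/8\pi$, and the only nonzero component of $K$ in the frame $\{\partial_\theta,\partial_\varphi,\varsigma\}$ is (\ref{INDUSFF}). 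As $\tilde p$ ranges over $\tilde\Sigma_i^\circ$, these orbits $C=\Pi^{-1}(\tilde p)$ sweep out $\Sigma_i\setminus\poles$, and distinct spheres $S_{\tilde p}$ are disjoint or equal (two distinct area-$A_H$ extreme throat spheres bounding, together with $\gamma^{\north,\south}_H$, a region, cannot cross, since crossing would produce by a cut-and-paste a competitor sphere of area $<A_H$, contradicting \textbf{C2}); thus $\{S_{\tilde p}\}$ gives a smooth foliation of $\Sigma_i$ by totally geodesic extreme Kerr-throat spheres, with $\gamma^{\north,\south}_H$ (i.e. $S_H$) as the boundary leaf.

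To finish I would reconstruct the metric and second fundamental form from this foliation. Introduce the Gaussian coordinate $\bar r$ along the unit normal $\varsigma$ to the leaves (using the coordinates of Section \ref{APCO}, item III), so that $g=d\bar r^2+h(\bar r)$; since each leaf is totally geodesic, $\partial_{\bar r}h=0$, so $h(\bar r)\equiv h_T$, giving exactly (\ref{mKt}) for $g$. For $K$: on each leaf $K$ has the single component (\ref{INDUSFF}), and one must also identify the $K(\varsigma,\cdot)$ components, which are controlled by the momentum constraint (\ref{MC}), the maximality $k=0$, the vacuum condition, and the fact that $\partial_{\bar r}$ must be a Killing field (this is where the totally-geodesic-foliation plus the pointwise rigidity of $K$ combine: the data is $\bar r$-independent, hence $\partial_{\bar r}$ is Killing, and $|\partial_{\bar r}|^2=|J|(1+\cos^2\theta)=\alpha_T^2$). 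Matching against (\ref{KKt}) then identifies $(\Sigma_i;g,K)$ as half of the extreme Kerr-throat of angular momentum $|J|=A_H/8\pi$. The main obstacle I anticipate is the gluing/regularity step — showing that the two stable minimal discs realizing $\underline{A}_i^{\north}$ and $\underline{A}_i^{\south}$ join along $C$ into a single \emph{smooth} stable minimal sphere (so that the rigidity theorem of \cite{2011PhRvL.107e1101D} applies), and that the resulting family of such spheres is an honest smooth foliation all the way to the axes $\poles$; near the axes the quotient metric $\bar{\tilde g}$ degenerates, so this must be argued in the original space $\Sigma$ using the structure of axisymmetric spheres near $\poles$ from Corollary \ref{COR3}.
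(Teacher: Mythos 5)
Your first half follows the paper's argument almost exactly: using Proposition \ref{SHGE} together with $\underline{A}_{i}(\tilde{p})\geq \underline{A}_{i}^{\north}(\tilde{p})+\underline{A}_{i}^{\south}(\tilde{p})$ and {\bf C2} to force case {\bf D1} and equality, gluing the two minimizing discs through each orbit into a sphere of area $A_{H}$, invoking the rigidity result of \cite{2011PhRvL.107e1101D} to see it is an extreme Kerr-throat sphere, and showing distinct such spheres cannot cross. The gluing/regularity issue you flag is resolved in the paper by the elementary rounding argument: a corner at $\tilde{p}$, or a second intersection point of $\gamma^{\north}(\tilde{p})$ and $\gamma^{\south}(\tilde{p})$, would produce a competitor $\beta^{\north,\south}$ of $\bar{\tilde{g}}$-length strictly less than $A_{H}$, contradicting {\bf C2}. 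So up to the foliation statement your proposal is sound.

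The genuine gap is in the metric reconstruction. If $\bar{r}$ is the Gaussian coordinate (arc-length along unit-normal geodesics from $S_{H}$), then indeed $g=d\bar{r}^{2}+h(\bar{r})$, but the level sets $\{\bar{r}=c\}$ are \emph{not} the leaves of the foliation: in the model (\ref{mKt}) the leaves are not equidistant, since the normal separation between $\{\tilde{r}=c_{1}\}$ and $\{\tilde{r}=c_{2}\}$ is $\alpha_{T}(\theta)\,|c_{2}-c_{1}|$, which depends on $\theta$ (equivalently, the curves of constant $(\theta,\varphi)$ are not $g_{T}$-geodesics because $\partial_{\theta}\alpha_{T}^{2}\neq 0$). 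Hence ``each leaf is totally geodesic'' does not give $\partial_{\bar{r}}h=0$ for your $\bar{r}$; and even if it did, $d\bar{r}^{2}+h_{T}$ is not (\ref{mKt}), whose normal coefficient is $\alpha_{T}^{2}$, not $1$. The missing ingredient — which you assume rather than prove when you write $|\partial_{\bar{r}}|^{2}=\alpha_{T}^{2}$ — is the identification of the normal speed of the foliation. The paper obtains it by taking the leaf-adapted field $Y=\nabla r/|\nabla r|^{2}$ (so that, in Lie-dragged coordinates, $g=|Y|^{2}dr^{2}+h$ with $\partial_{r}h_{AB}=0$ from total geodesy), noting that the first \emph{and second} variations of area along $Y$ then vanish, and using the equality case of stability for extreme Kerr-throat spheres — the second variation vanishes only for deformations proportional to $\alpha_{T}\varsigma$ — to conclude $Y=c(r)\alpha_{T}\varsigma$; only after reparametrizing $r$ does one get $g=\alpha_{T}^{2}dr^{2}+h_{T}$, and then (\ref{KKt}) for $K$. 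Without this step the lapse of the foliation is undetermined and the conclusion does not follow.
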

\begin{proof}[\bf Proof.] Before we start recall that any stable minimal surface $S$ with $A(S)=8\pi|J(S)|$ is an extreme Kerr-throat sphere \cite{2011PhRvL.107e1101D}. 
We prove first that $\Sigma_{i}$ is foliated by extreme Kerr-throat spheres, or, in the quotient space, that 
$\tilde{\Sigma}_{i}$ is foliated by $\bar{\tilde{g}}$-geodesics of length $A_{H}$ starting $\tilde{g}$-perpendicularly to ${\mathcal N}$ and ending $\tilde{g}$-perpendicularly to ${\mathcal S}$. In this first part of the proof we work in the quotient space. Let $\tilde{p}\in \tilde{\Sigma}_{i}^{\circ}$. By hypothesis $\underline{A}_{i}(\tilde{p})=A_{H}$ and recall that $\underline{A}_{i}(\tilde{p})\geq \underline{A}_{i}^{\north}(\tilde{p})+\underline{A}_{i}^{\south}(\tilde{p})$. Therefore, by Proposition \ref{SHGE}, $\underline{A}^{\north}_{i}(\tilde{p})$ and $\underline{A}_{i}^{\south}(\tilde{p})$ are realized by $\bar{\tilde{g}}$-geodesics $\gamma^{\north}(\tilde{p})$ and $\gamma^{\south}(\tilde{p})$ respectively  (case {\bf D1} must hold) and the sum of their $\bar{\tilde{g}}$-lengths is less or equal than $A_{H}$. 

If $\gamma^{\north}(\tilde{p})$ and $\gamma^{\south}(\tilde{p})$ intersect only at $\tilde{p}$ (where they start) and they do not have the same tangent line at $\tilde{p}$ then $\gamma^{\north}(\tilde{p})\cup \gamma^{\south}(\tilde{p})$ could be rounded up at the vertex $\tilde{p}$ to a curve $\beta^{\north,\south}$ with $\bar{\tilde{g}}$-length less than $A_{H}$ violating {\bf C2}. On the other hand $\gamma^{\north}(\tilde{p})$ and $\gamma^{\south}(\tilde{p})$ cannot intersect in a point other than $\tilde{p}$ because in this case one could again construct a curve $\beta^{\north,\south}$ of length less than $A_{H}$ which is not possible [\footnote{$\beta^{\north,\south}$ would be constructed from $\gamma^{\north}(\tilde{p},\tau),\ \tau\in [0,\tau^{\north}]$ and $\gamma^{\south}(\tilde{p},\tau),\ \tau\in [0,\tau^{\south}]$ as follows ($\tau$ here is arc-length and therefore $\tau^{N}+\tau^{\south}\leq A_{H}$). Let $\tau^{\north}_{*}$ be the greatest $\tau$ such that $\gamma^{N}(\tilde{p},\tau)$ is a point also of $\gamma^{\south}$. Suppose that $\gamma^{\north}(\tilde{p},\tau^{N}_{*})=\gamma^{\south}(\tilde{p},\tau^{\south}_{*})$ which defines an $\tau^{\south}_{*}$. Then define $\beta^{\north,\south}(\tau)=\gamma^{\south}(\tilde{p},\tau)$ for $\tau\in [0,\tau^{\south}_{*}]$ and $\beta^{\north,\south}(\tau)=\gamma^{\north}(\tilde{p},\tau-\tau^{\south}_{*}+\tau^{\north}_{*})$ for $\tau\in [\tau^{\south}_{*},\tau^{\north}-\tau^{\north}_{*}+\tau^{\south}_{*}]$. This curve has length less than $A_{H}$ but is not $C^{1}$ at $\beta^{\north,\south}(\tau^{\south}_{*})$. Then round it off at this point to have and embedded curve $\beta^{\north,\south}_{1}$ of length less than $A_{H}$.}]. 
Thus $\gamma^{\north,\south}(\tilde{p}):=\gamma^{\north}(\tilde{p})\cup \gamma^{\south}(\tilde{p})$ is a $\bar{\tilde{g}}$-geodesic of length $A_{H}$ or, the same, $\Pi^{-1}(\gamma^{\north}(\tilde{p})\cup \gamma^{\south}(\tilde{p}))$ is an extreme Kerr-throat sphere. 
We claim that for any $\tilde{p}_{1}\neq \tilde{p}_{2}$ in $\tilde{\Sigma}_{i}^{\circ}$ the geodesics $\gamma^{\north,\south}(\tilde{p}_{1})$ and $\gamma^{\north,\south}(\tilde{p}_{2})$ must be either equal or disjoint. Indeed if they are not disjoint then when they intersect they would have to do transversely and one could again easily construct a curve $\beta^{\north,\south}(\tilde{p})$ of $\bar{\tilde{g}}$-length less than $A_{H}$ violating ${\bf C2}$. The set $\{\gamma^{\north,\south}(\tilde{p}),\tilde{p}\in \Sigma_{i}^{\circ}\}$ is thus the desired foliation of $\tilde{\Sigma}_{i}$ by $\bar{\tilde{g}}$-geodesics of length $A_{H}$ and $\{\Pi^{-1}(\gamma^{\north,\south}(\tilde{p})),\tilde{p}\in \Sigma_{i}^{\circ}\}$ is the desired foliation of $\Sigma_{i}$ by extreme Kerr-throat spheres.

What we have so far is a foliation by extreme Kerr-throat spheres and from this information we want to deduce that there are coordinates $(r,\theta,\varphi)$ on $\Sigma_{i}$ on which the metric $g$ has exactly the expression (\ref{mKt}). We work now in the original manifold $\Sigma_{i}$ (not in the quotient). Let $r$ be a smooth function, constant along the leaves of the foliation and with non-zero gradient everywhere. One can take for instance the function $r$ which at a point $p$ is equal to the volume enclosed by the leaf passing through $p$ and $\partial \Sigma_{i}$. The flow induced by the vector field
\ben
Y:=\frac{\nabla^{i} r}{|\nabla r|^{2}},
\een
takes leaves into leaves because $dr(Y)=1$ (and indeed orbits into orbits because, can be seen, $Y$ is $U(1)$-invariant). Let $(\theta,\varphi)$ be the areal coordinates on $S_{H}$. Extend them to all $\Sigma_{i}$ by Lie dragging, namely define them by imposing $Y(\theta)=0,\ Y(\varphi)=0$. In this way $(r,\theta,\varphi)$ are coordinates on $\Sigma_{i}$. Let $h_{AB}(\bar{r})$ be the metric components of the two-metric induced on the leaf $\{r=\bar{r}\}$ in the coordinates $(\theta,\varphi)$. Then we have
$\partial_{r} h_{AB}=0$,
%
because every leaf is totally geodesic.  Thus $h_{AB}(r)=h_{AB}(0)$ which is the metric of $S_{H}$ in areal coordinates, namely (from (\ref{mKt}))
\ben
h(0)=\bigg(\frac{4|J|\sin^{2}\theta}{1+\cos^{2}\theta}\bigg)\, d\varphi^{2}+|J|(1+\cos^{2}\theta)\, d\theta^{2}. 
\een
Because of  $\partial_{r} h_{AB}=0$ the first and second variation of the area of the leaves along $Y$ is zero. We deduce that at every leaf we must have $\partial_{r}=Y= c(r) \alpha_{T} \varsigma$ where $\varsigma$ is a $g$-unit normal field to the leaf. It follows that one can redefine $r$ to have $\partial_{r}=\alpha_{T} \varsigma$ over every leaf. As $<\partial_{r},\partial_{\theta}>_{g}=<\partial_{r},\partial_{\varphi}>_{g}=0$, the metric $g$ takes in these coordinates the form 
\ben
g=\alpha_{T}^{2}dr^{2}+h_{AB}(0),
\een
which is (\ref{mKt}). That the second fundamental form takes the form (\ref{KKt}) is a direct consequence of the fact that every leaf of the foliation is an extreme Kerr-throat sphere of the same area. \end{proof}

We are ready to prove Theorem \ref{Lemma3}.

\begin{proof}[\bf Proof of Theorem \ref{Lemma3}] To start, let $\Sigma_{1}$ and $\Sigma_{2}$ be the closures of the two connected components of $\Sigma\setminus S_{H}$. For $i=1,2,$ let $\bar{A}_{i}=\sup\{\underline{A}_{i}(\tilde{p}),\tilde{p}\in \Sigma_{i}\}$. Now, if for $i=1,2,$ $\bar{A}_{i}=A_{H}$ then by Lemma \ref{LE1} the data has to be the extreme Kerr throat and we are done. 
Assume then that one of the $\bar{A}_{i}$'s is greater than $A_{H}$. If say $\bar{A}_{1}=A_{H}$ (but $\bar{A}_{2}>A_{H}$) then again by Lemma \ref{LE1} the data over $\Sigma_{1}$ is half of the extreme Kerr-throat data. In this case one can easily make a doubling of the data on $\Sigma_{2}$ and construct a new data on a manifold $\Sigma'$ also diffeomorphic to $S^{2}\times \mathbb{R}$ and having an extreme Kerr-sphere $S'_{H}$ dividing $\Sigma'$ in sectors 
$\Sigma'_{1}$ and $\Sigma'_{2}$ with $\bar{A}'_{1}>A_{H}$ and $\bar{A}'_{2}>A_{H}$. We can then assume without loss of generality that $\bar{A}_{i}>A_{H}$ for $i=1,2$. We will see that this leads to a contradiction.

We observe now that the set of points ${\mathcal E}=\{\tilde{p}\in \tilde{\Sigma}^{\circ}, \underline{A}(\tilde{p})=A_{H}\}$ ($\underline{A}(\tilde{p})$ as in (\ref{AU})) is (i) a closed set, and (ii) a union of projected extreme Kerr-throat spheres.
That ${\mathcal E}$ is closed follows from the fact that $\underline{A}(\tilde{p})$ is continuous with respect to $\tilde{p}$ [\footnote{This is a consequence of the fact that for any two points $\tilde{p}$ and $\tilde{q}$ we have $|\underline{A}(\tilde{p})-\underline{A}(\tilde{q})|\leq 2{\rm dist}_{\bar{\tilde{g}}}(\tilde{p},\tilde{q})$. The reader can check this by proving first that for any $\beta^{\north,\south}(\tilde{p})$ we have, ${\rm length}_{\bar{\tilde{g}}}(\beta^{\north,\south}(\tilde{p}))+2{\rm dist}_{\bar{\tilde{g}}}(\{\beta^{\north,\south}(\tilde{p})\},\tilde{q})\geq \underline{A}(\tilde{q})$ and therefore that ${\rm length}_{\bar{\tilde{g}}}(\beta^{\north,\south}(\tilde{p}))+2{\rm dist}_{\bar{\tilde{g}}}(\tilde{p},\tilde{q})\geq \underline{A}(\tilde{q})$ because ${\rm dist}_{\bar{\tilde{g}}}(\tilde{p},\tilde{q})\geq {\rm dist}_{\bar{\tilde{g}}}(\{\beta^{\north,\south}(\tilde{p})\},\tilde{q})$. Here ${\rm dist}_{\bar{\tilde{g}}}(\{\beta^{\north,\south}(\tilde{p})\},\tilde{q})$ is the $\bar{\tilde{g}}$-distance from the point $\tilde{q}$ to the set curve $\beta^{\north,\south}(\tilde{p})$ (as a set). Taking the infimum among all $\beta^{\north,\south}(\tilde{p})$ we deduce $\underline{A}(\tilde{p})+2{\rm dist}_{\bar{\tilde{g}}}(\tilde{p},\tilde{q})\geq \underline{A}(\tilde{q})$.}]. That ${\mathcal E}$ is a union of projected extreme Kerr-throat spheres follows from the fact, shown inside the proof of Lemma \ref{LE1}, that if $\underline{A}(\tilde{p})=A_{H}$ then $\tilde{p}$ lies in a projected extreme Kerr-throat sphere. 

Now, for $i=1,2$ let $S_{i}$ be a sphere with poles $\north_{i}\in {\mathcal N}$ and $\south_{i}\in {\mathcal S}$ and embedded in $\Sigma_{i}\setminus \Pi^{-1}({\mathcal E})$. Let also $\delta>0$ be such that (i) ${\mathcal T}_{g}(S_{i},4\delta)\subset (\Sigma\setminus \Pi^{-1}({\mathcal E}))$, and (ii) one can construct $g$-Gaussian coordinates inside ${\mathcal T}_{g}(S_{i},4\delta)$.
Let $\Omega$ be the region enclosed by $S_{1}$ and $S_{2}$ including the spheres themselves and observe that $S_{H}\subset \Omega^{\circ}$.
Let $N_{0}$ be a positive solution of the Lapse equation in the region $\Omega_{4\delta}:=\Omega\cup {\mathcal T}_{g}(S_{1},4\delta)\cup {\mathcal T}_{g}(S_{2},4\delta)$ which is not proportional to $\alpha_{T}$ over $S_{H}$ as is provided by Proposition \ref{P2}. Transport now the region $\Omega_{3\delta}:=\Omega\cup {\mathcal T}_{g}(S_{1},3\delta)\cup {\mathcal T}_{g}(S_{2},3\delta)$
inside the space-time following the vector field $V$ as we did in (\ref{VFV}). Then, as was explained in the last paragraph of Section \ref{VEE}, by transporting $\Omega_{3\delta}$ we induce a flow $(g(t),K(t);N(t),X(t))$ over $\Omega_{3\delta}$ for $t\in [0,t_{0}]$ and with $t_{0}$ small. We will use this flow in the argumentation below.
We consider now a smooth path of spheres $S_{i}(t)$, $t\in [0,t_{0}]$ and $i=1,2$ and coinciding at time zero with the $S_{i}$'s introduced before, namely, $S_{i}(0)=S_{i}$, for $i=1,2$. Denote by $\north_{i}(t)$ and $\south _{i}(t)$ the poles of $S_{i}(t)$ and define $\Omega(t)$ as the region enclosed by $S_{1}(t)$ and $S_{2}(t)$. Chose $t_{0}$ smaller if necessary such that for any $t\in [0,t_{0}]$ there are $g(t)$-Gaussian coordinates in ${\mathcal T}_{g(t)}(S_{i}(t),2\delta)$.  
Now, in $\Omega_{2\delta}(t)=\Omega(t)\cup {\mathcal T}_{g(t)}(S_{i}(t),2\delta)\cup {\mathcal T}_{g(t)}(S_{2}(t),2\delta)$ we will consider a flow of axisymmetric metrics $g^{*}(t)$ enjoying the following three properties for every $t\in [0,t_{0}]$,
\begin{enumerate}
\item $g^{*}(t)=g(t)$ on $\Omega_{\delta}(t)=\Omega(t)\cup {\mathcal T}_{g(t)}(S_{i}(t),\delta)\cup {\mathcal T}_{g(t)}(S_{2}(t),\delta)$,
\item $g^{*}(t)\geq g(t)$ on  $\Omega_{2\delta}(t)$,
\item The $g^{*}$-mean curvature of the boundary of the region $\Omega_{2\delta}(t)$ is strictly positive in the outward direction, namely the boundary is strictly mean convex. 
\end{enumerate}
The flow of metrics $g^{*}(t)$ can be explicitly given for instance as follows. On everyone of the two connected components of $\Omega_{2\delta}(t)\setminus \Omega(t)$ we write the metric $g(t)$ in $g(t)$-Gaussian coordinates as
$g(t)=dr^{2}+h_{i}(t)$, $i=1,2$. Then make
\ben
g^{*}(t):=
\left\{ 
\begin{array}{lll}
g(t) & {\rm on} & \Omega_{\delta}(t),\\
dr^{2}+f^{2}(r)h_{i}(t), & {\rm on} & \Omega_{2\delta}(t)\setminus \Omega_{\delta}(t),\ (\delta<r<2\delta),
\end{array}
\right.
\een
where $f(r)$ is the real and time independent function
\ben
f(r)=1+e^{\ \scb{1.1}{$\big[\frac{1}{2\delta+\epsilon-r}-\frac{1}{r-\delta}\big]$}},
\een
for $\epsilon>0$ small enough making the boundary strictly convex. 
\begin{figure}[h]
\centering
\includegraphics[width=13cm,height=5cm]{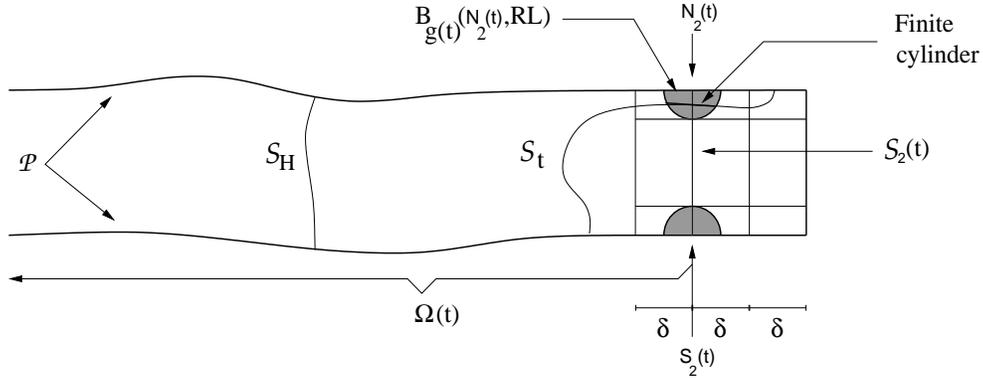}
\caption{Representation of the construction in the argument by contradiction in the proof of Theorem \ref{Lemma3}.}
\label{FFIF}
\end{figure} 
With this setup at hand we move to obtain the main contradiction. By Proposition \ref{P1} and our choice of $N_{0}$ we have $\ddot{A}(S_{H})<0$ at time zero. Because of this we can chose $t_{0}$ smaller if necessary to have $A_{g(t)}(S_{H})<A_{H}$ for any $0<t<t_{0}$. By \cite{MR678484} (Theorems 1 and 1') there is, for every $0<t\leq t_{0}$, a $g^{*}(t)$-stable, axisymmetric and area-minimizing sphere $S_{t}$ inside $\Omega_{2\delta}(t)$ of $g^{*}(t)$-area less than $A_{H}$. We claim that, making $t_{0}$ smaller if necessary, the spheres $S_{t}$ must lie inside $\Omega_{\delta}(t)$ which is a region where by construction the metric $g^{*}(t)$ is equal to $g(t)$. This would prove the $g(t)$-stability of the $S_{t}$ that will be useful later. We show the claim now.

Let $0<R<\delta/2$ be small enough that for every $t\in [0,t_{0}]$ the Riemannian balls $(B_{g(t)/R^{2}}(\north_{i}(t),2),g(t)/R^{2})$ and $(B_{g(t)/R^{2}}(\south_{i}(t),2),g(t)/R^{2})$ are $\epsilon_{0}$-close [\footnote{To be precise in the coordinates $(x^{1},x^{2},x^{3})$ as in Section \ref{HRM}.}] in $C^{2}$ to the flat metric in $B_{\mathbb{R}^{3}}(o,2)$ where $\epsilon_{0}$ is a constant as in Proposition \ref{PMSF2}. Note that because $R<\delta/2$ then the balls are included in ${\mathcal T}_{g(t)}(S_{1}(t),\delta)\cup {\mathcal T}_{g(t)}(S_{2}(t),\delta)$ and that of course $B_{g(t)/R^{2}}(\north_{i}(t),2)=B_{g(t)}(\north_{i}(t),2R)$ and $B_{g(t)/R^{2}}(\south_{i}(t),2)=B_{g(t)}(\south_{i}(t),2R)$. In what follows we let $0<L<1$ be as in Proposition \ref{PMSF2}. We also make $\Omega_{-\delta}(t)=\Omega(t)\setminus ({\mathcal T}_{g(t)}(S_{1}(t),\delta)\cup {\mathcal T}_{g(t)}(S_{2}(t),\delta))$ 

Now, by construction, the closure of the region $\Omega_{2\delta}(0)\setminus (\Omega_{-\delta}(0) \cup {\mathcal T}_{g(0)}(\poles,RL))$ is included in $\Sigma\setminus \Pi^{-1}({\mathcal E})$. Due to this there is $\Gamma>0$ such that for every sphere $S$ with poles in ${\mathcal N}$ and ${\mathcal S}$ and intersecting $\Omega_{2\delta}(0)\setminus (\Omega_{-\delta}(0) \cup {\mathcal T}_{g(0)}(\poles,RL))$ has $g(0)$-area greater or equal than $A_{H}+\Gamma$. By continuity, and making $t_{0}$ smaller if necessary, we can assume that for every $t\in [0,t_{0}]$ every sphere $S$ (with poles in ${\mathcal N}$ and ${\mathcal S}$) intersecting $\Omega_{2\delta}(t)\setminus (\Omega_{-\delta}(t)\cup {\mathcal T}_{g(t)}(\poles,RL))$ has $g^{*}(t)$-area greater or equal than $A_{H}+\Gamma/2$. But the area-minimizing spheres $S_{t}$ have $g^{*}(t)$-area less than $A_{H}$ and therefore if the $S_{t}$ do not lie entirely in $\Omega_{\delta}(t)$ then they must necessarily intersect either the ball
$(B_{g(t)}(\north_{i}(t),RL),g(t))$ or the ball $(B_{g(t)}(\south_{i}(t),RL),g(t))$ and at least one of the connected components of the intersection must be a cylinder with two boundary components (see Figure \ref{FFIF}). This violates Proposition \ref{PMSF2}. Thus $S_{t}\subset \Omega_{\delta}(t)$. 
Now that we have proved that $S_{t}\subset \Omega_{\delta}(t)$ and therefore the $g(t)$-stability of the $S_{t}$ we can proceed in the same way as in Proposition \ref{P4} to show that there is $\Lambda_{0}>0$ such that for any $t\in (0,t_{0}]$ we have $A(S_{t})=A_{g(t)}(S_{t})\geq 8\pi|J|-\Lambda_{0}t^{4}$. On the other hand
$A(S_{t})=A_{g(t)}(S_{t})\leq A_{g(t)}(S_{H})= A_{H} + \ddot{A}(S_{H})t^{2}/2 +O(t^{3})$. This shows a contradiction because $A_{H}=8\pi |J|$ and $\ddot{A}(S_{H})<0$.\end{proof}

\section{Appendix.}

\begin{proof}[\bf Proof of Proposition \ref{PMSF}] We prove first {\it item 1} which is true for any value of $L$ chosen between $(0,1)$ (indeed we just reproduce here the classical proof). Let $D$ be a minimal disc with boundary $C$. Say $z|_{C}=z_{C}$. Then the function $z-z_{C}$ is harmonic on $D$ and $(z-z_{C})|_{C}=0$. It follows that $z=z_{C}$ all over $D$ and therefore that $D$ is the disc enclosed by $C$ in the plane $\{z=z_{C}\}$. Another proof, best suited for extensions, can be obtained along the following lines (we just provide the sketch). 
Let $D(C)$ be the disc enclosed by $C$ in the plane $\{z=z_{C}\}$. Let $o(C)=(0,0,z_{C})$ be its center and $R(C)$ its radius. Let $B^{+}=B_{\mathbb{R}^{3}}(o(C),R(C))\cap \{z>z_{C}\}$ and $B^{-}=B_{\mathbb{R}^{3}}(o(C),R(C))\cap \{z<z_{C}\}$. Now, the disc $D(C)$ is minimal and, by a direct inspection of the stability operator, also strictly stable. Using the strict stability one can construct a smooth foliation of $B^{+}$ by discs $\{D'(C)\}$, each with boundary $C$, and strictly convex (in the direction of increasing $z$) and similarly for $B^{-}$. In addition one can construct a foliation of $B_{\mathbb{R}^{3}}(o,1)\setminus B_{\mathbb{R}^{3}}(o(C),R(C))$ by round spheres $\{S'\}$. We have thus a foliation of $B_{\mathbb{R}^{3}}(o,1)\setminus D(C)$ by strictly convex surfaces, acting as barriers, and preventing the existence of any other minimal disc with boundary $C$ inside $B_{\mathbb{R}^{3}}(o,1)$.   

We prove now {\it item 2}. First we prove that there are no stable surfaces in the class ${\mathscr S}_{0}$. This again is true for any value of $L$ chosen in $(0,1)$. This can be proved as in {\it item 1} by showing that if there is one then the function $z$ has to be constant on it. Another proof, more independent of the flatness of the ambient space $\mathbb{R}^{3}$ and therefore best suited for extensions is the following. Assume again that there is one such surface. Then note that there is a cylinder $\{\rho=\rho_{0}\}$ where $\rho^{2}=x^{2}+y^{2}$, enclosing the surface and tangent to it at least in one orbit (a circle). Such cylinder has strictly mean convex boundary which implies that at the points of tangency the minimal surface must have positive mean curvature (in the outgoing direction from the axis) which is absurd. Note that instead of cylinders one could have used spheres to reach a similar conclusion. 

We prove now that for some $L$ appropriately chosen there are no stable surfaces in the class ${\mathscr S}_{2}$. This requires a bit more effort. Recall that the class ${\mathscr S}_{2}$ consist of axisymmetric cylinders with boundary in $B_{\mathbb{R}^{3}}(o,1)$. From now on we let $S$ be an stable axisymmetric cylinder with boundary in $\partial B_{\mathbb{R}^{3}}(o,1)$. The reader should keep that in mind because it will not be repeated. If $S\cap B_{\mathbb{R}^{3}}(o,R)\neq \emptyset$ for some $R>0$ and $q\in S\cap B_{\mathbb{R}^{3}}(o,R)$ then we will denote by $\big[S\cap B_{\mathbb{R}^{3}}(o,R)\big]^{c.c.}_{q}$ to the connected component of $S\cap B_{\mathbb{R}^{3}}(o,R)$ containing $q$.

In the following we will use two standard results in minimal surfaces that we take form \cite{MR2780140}. We refer the reader to this reference for full details. We first observe that there is a universal constant $c>0$ such that for any $S$ intersecting $B_{\mathbb{R}^{3}}(o,1/8)$ and $q$ in $S\cap B_{\mathbb{R}^{3}}(o,1/8)$ we have
\be\label{SFFE1}
\sup\bigg\{|\Theta|^{2}(p),\ p\in \big[S\cap B_{\mathbb{R}^{3}}(o,1/4)\big]^{c.c.}_{q}\bigg\} \leq 4c.
\ee
This is the result of using Corollary 2.11 in page 79 of \cite{MR2780140} with $r_{0}=7/8$, $\sigma=1/2$ and observing in there that $B_{\mathbb{R}^{3}}(o,1/4)\subset B_{\mathbb{R}^{3}}(q,r_{0}-\sigma)$. This is an important estimate that will be used crucially below.
Let now $L_{0}=\min \{1/(4\sqrt{64c}),1/4\}$ and observe that $4c\leq 1/(16(16L_{0}^{2}))$ and that $L_{0}/2\leq 1/8$. From this and (\ref{SFFE1}) we obtain that for any $S$ intersecting $B_{\mathbb{R}^{3}}(o,L_{0}/2)$ and for any $q$ in $S\cap B_{\mathbb{R}^{3}}(o,L_{0}/2)$ we have
\be\label{33}
16L_{0}^{2}\sup \bigg\{|\Theta|^{2}(p),\ p\in \big[S\cap B_{\mathbb{R}^{3}}(o,L_{0})\big]^{c.c.}_{q}\bigg\}\leq \frac{1}{16}.
\ee
Let $L$ be any number in $(0,L_{0}/8)$. We will see at the end of the argumentation below that if $L<L_{0}/20$ then $S\cap B_{\mathbb{R}^{3}}(o,L)=\emptyset$. At the moment just assume that $0<L<L_{0}/8$. 
We use now the estimate (\ref{33}) in conjunction with Lemma 2.4 in page 74 of \cite{MR2780140} (used with $s: = L_{0}/4$, $\Sigma:=\big[S\cap B_{\mathbb{R}^{3}}(o,L_{0})\big]^{c.c.}_{q}$ and $x:=q$) to conclude that for any surface $S$ intersecting $B_{\mathbb{R}^{3}}(o,L)$ and for any $q$ in $S\cap B_{\mathbb{R}^{3}}(o,L)$ the following two facts hold.
\begin{enumerate}
\item $B_{S}(q,L_{0}/2)$ is a graph of a function $u$ on a domain of $T_{q}S\subset \mathbb{R}^{3}$, where $B_{S}(q,L_{0}/2)$ is the intrinsic ball inside $S$ (with the induced metric) of center $q$ and radius $L_{0}/2$. Moreover $|\nabla u|\leq 1$. 
\item The connected component of $S\cap B_{\mathbb{R}^{3}}(o,L_{0}/4)$ containing $q$, namely $\big[S\cap B_{\mathbb{R}^{3}}(o,L_{0}/4)\big]^{c.c.}_{q}$, lies inside $B_{S}(q,L_{0}/2)$ and therefore is a graph by the item before.
\end{enumerate}
For any surface $S$ such that  $S\cap B_{\mathbb{R}^{3}}(o,L)\neq \emptyset$ denote to simplify notation ${\mathscr C}=[S\cap B_{\mathbb{R}^{3}}(o,L_{0}/4)]^{c.c}_{q}$. Thus ${\mathscr C}$ is a cylinder whose boundary consists of two orbits, $C_{1}$ and $C_{2}$, in $\partial B_{\mathbb{R}^{3}}(o,L_{0}/4)$. Denote by $\Pi_{T_{q}S}$ the projection into the plane $T_{q}S\subset \mathbb{R}^{3}$. Then $\Pi_{T_{q}S}({\mathscr C})$ is an annulus with boundary components $\Pi_{T_{q}S}(C_{1})$ and $\Pi_{T_{q}(S)}(C_{2})$. Moreover $\Pi_{T_{q}(S)}C(q)$ ($C(q)$ here is the orbit passing through $q$) encloses either $\Pi_{T_{q}S}(C_{1})$ or $\Pi_{T_{q}(S)}(C_{2})$. Say, for concreteness, that it encloses $\Pi_{T_{q}(S)}(C_{2})$. Observe also that ${\rm length}(\Pi_{T_{q}S}(C(q)))\leq {\rm length}(C(q))\leq 2\pi L$ where the last inequality is because $q\in B_{\mathbb{R}^{3}}(o,L)$.  
Let $\alpha$ be the curve in ${\mathscr C}$ with constant azimuthal angle $\varphi$ and joining $q$ to $C_{2}$ and observe that $\Pi_{T_{q}S}(\alpha)$ is a straight segment joining $q$ to $\Pi_{T_{q}S}(C_{2})$. Then because $|\nabla u|\leq 1$ we have ${\rm length}(\alpha)\leq 2{\rm length}(\Pi_{T_{q}S}(\alpha))\leq 2 L$ where the last inequality is due to fact that the length of a straight segment inside an ellipse with perimeter less than $2\pi L$ has length less than $L$. But on the other hand $q\in B_{\mathbb{R}^{3}}(o,L)$ and $\partial {\mathscr C}\subset \partial B_{\mathbb{R}^{3}}(o,L_{0}/4)$ and therefore ${\rm length}(\alpha)\geq L_{0}/4-L$. These two inequalities for the length of $\alpha$ are incompatible if $L<L_{0}/20$. 
Hence there are no stable minimal surfaces $S$ in the family ${\mathscr S}_{2}$ intersecting $B_{\mathbb{R}^{3}}(o,L)$ if $L<L_{0}/20$. \end{proof}

\vs
The proof of Proposition \ref{PMSF2} is done in the same way as in Proposition \ref{PMSF} and will not be included here. Let us  prove now Corollary \ref{COR3}. 

\vs
\begin{proof}[\bf Proof of Corollary \ref{COR3}.] In several parts of the proof we will use the ``quotient picture" as explained in Section \ref{PT2}. 

{\it Item 1}. By homogeneous regularity there is $\epsilon_{1}(\rho_{0},\mu_{0},\mu_{1})$ such that for any $0<\epsilon\leq \epsilon_{1}$, the set
\ben
(\partial {\mathcal T}(\partial \Sigma,\epsilon))\setminus \partial \Sigma,
\een
is smooth and strictly convex with mean curvature greater or equal than $\mu_{0}/2$. Thus these surfaces act as barriers preventing  the the existence of minimal surfaces $S$ at a distance less than $\epsilon_{1}$ from $\partial \Sigma$.

{\it Item 2}. By homogeneous regularity there is $\epsilon_{4}(\rho_{0},\mu_{0},\mu_{1})>0$ such that for any $0<\epsilon<\epsilon_{4}$  the surface
\ben
(\partial {\mathcal T}_{g}(\poles,\epsilon))\cap (\Sigma\setminus {\mathcal T}_{g}(\partial \Sigma,\epsilon_{1})),
\een
is smooth (with boundary) and of mean curvature greater or equal than one (in the outgoing direction from the axes $\poles$). Note that as $\epsilon\rightarrow 0$ the mean curvature goes to infinity ($\sim 1/\epsilon$). In particular, because these surfaces act as barriers there are no minimal surfaces lying entirely inside 
\ben
{\mathcal T}_{g}(\poles,\epsilon_{4})\cap (\Sigma\setminus {\mathcal T}_{g}(\partial \Sigma,\epsilon_{1})).
\een 
We will work now in the quotient and follow the notation of Section \ref{PT2}. By homogeneous regularity again, there is $R(\rho_{0},\epsilon_{1},\epsilon_{4})>0$ such that for any $\tilde{p}$
in $\tilde{\Sigma}\setminus \Pi \big({\mathcal T}_{g}(\poles,\epsilon_{4}) \cup {\mathcal T}_{g}(\partial \Sigma,\epsilon_{1})\big)$ we have
\ben
B_{\bar{\tilde{g}}}(\tilde{p},R)\subset \bigg(\tilde{\Sigma}\setminus \Pi \big({\mathcal T}_{g}(\poles,\frac{\epsilon_{4}}{2})\cup {\mathcal T}_{g}(\partial \Sigma,\frac{\epsilon_{1}}{2})\big)\bigg),
\een
and moreover the metric $\bar{\tilde{g}}/R^{2}$ in $B_{\bar{\tilde{g}}/R^{2}}(\tilde{p},1)=B_{\bar{\tilde{g}}}(\tilde{p},R)$ is sufficiently close in $C^{2}$ to the flat metric in $\mathbb{R}^{2}$ that every $\bar{\tilde{g}}/R^{2}$-geodesic (or, the same, every $\bar{\tilde{g}}$-geodesic) passing through $\tilde{p}$ reaches the boundary of $B_{\bar{\tilde{g}}/R^{2}}(\tilde{p},1)$ and therefore has $\bar{\tilde{g}}/R^{2}$-length greater than $1$ (or, the same, the $\bar{\tilde{g}}$-length is greater or equal than $R$). Now, for every axisymmetric minimal surface $S$, $\Pi(S)$ is a $\bar{\tilde{g}}$-geodesic and $A(S)={\rm length}_{\bar{\tilde{g}}}(\Pi(S))$. Moreover by what was said before there is always a point $\tilde{p}$ of $\Pi(S)$ in $\tilde{\Sigma}\setminus \Pi \big({\mathcal T}_{g}(\poles,\epsilon_{4}) \cup {\mathcal T}_{g}(\partial \Sigma,\epsilon_{1})\big)$. It follows that $A(S)\geq R$. The {\it item 2} follows by defining $A_{0}:=R$.  

{\it Item 3}. By homogeneous regularity there exists $R_{1}(\rho_{0},\epsilon_{1},\epsilon_{4})>0$ such that for any $p\in \poles\setminus {\mathcal T}_{g}(\partial \Sigma,\epsilon_{1}/2)$ the metric $g/R_{1}^{2}$ in $B_{g/R_{1}^{2}}(p,2)$ is $\epsilon_{0}$-close in $C^{2}$ to the flat metric in $\mathbb{R}^{3}$ where $\epsilon_{0}$ is as in Proposition \ref{PMSF2}. Then we define $\epsilon_{2}:=LR_{1}$ where $L$ is as in Proposition \ref{PMSF2}. We show now that with this $\epsilon_{2}$ we have all the properties that we desire for {\it item 3}. The cylinder (with boundary)
\ben
\big( \partial {\mathcal T}_{g}(\poles,\epsilon_{2})\big) \cap B_{g}(p,R_{1})=\big(\partial {\mathcal T}_{g/R_{1}^{2}}(\poles, L)\big)\cap B_{g/R_{1}^{2}}(p,1), 
\een
is foliated by $U(1)$-orbits (circles) and for every one, one can consider the area minimizing disc according to Proposition \ref{PMSF2} and with boundary the orbit. This construction can be done for every point $p$ in $\poles\setminus {\mathcal T}_{g}(\partial \Sigma,\epsilon_{1}/2)$ which gives us the set of all the discs we are looking for. Now, let $S$ be a stable and axisymmetric minimal surface embedded in $\Sigma$. If $S\cap {\mathcal T}_{g}(\poles,\epsilon_{2})\neq \emptyset$ then there is $p\in \poles$ such that 
\ben
S\cap B_{g/R_{1}^{2}}(p,L)\neq \emptyset.
\een
But then by Proposition \ref{PMSF2}, $S\cap B_{g/R_{1}^{2}}(p,L)$ must be one of the discs we defined before. Finally note that if an axisymmetric compact and boundary-less surface intersects the axes $\poles$ then it must do twice and the surface must be a sphere.    

{\it Item 4}. By standard minimal surfaces estimates \cite{MR2780140} the Gaussian curvature ${\mathcal K}$ of any disc $D$ as in {\it item 4} and with respect to the induced metric from $g/R_{1}^{2}$ is uniformly bounded. Consider now polar coordinates in $D$ as defined in Section \ref{APCO}. Then, by Gauss-Bonet we have
\ben
2\pi- \frac{d\, l(s)}{d\, s}= \int_{0}^{s}{\mathcal K}(\bar{s}) l(\bar{s}) d\bar{s}.  
\een
From this, the estimate $|l(s)-2\pi s|\leq c_{0}(\rho_{0},\mu_{0},\mu_{1}) s^{2}$ easily follows. 
\end{proof}

\begin{proof}[\bf Proof of Proposition \ref{SHGE}.] Let $i$ be $1$ or $2$. Let $\{S_{m}\}$ be a sequence of axisymmetric spheres in $\Sigma_{i}$. Denote by $\Omega_{m}$ the (closed) set enclosed by $S_{H}$ and $S_{m}$. Assume that the sequence is such that
$\Omega_{m}\subset \Omega_{m+1}$ and ${\rm dist}_{g}(S_{H},S_{m})\rightarrow \infty$.
Let $\{\delta_{m}\}$ be a positive sequence in such a way that (for every $m$) $\delta_{m}$ is sufficiently small that we have $g$-Gaussian coordinates well defined inside ${\mathcal T}_{g}(S_{m},\delta_{m})\setminus \Omega_{m}$. Thus on ${\mathcal T}_{g}(S_{m},\delta_{m})\setminus \Omega_{m}$ we can write
\ben
g=dr^{2}+h_{m},\ 
\een
where $r(p)=dist(p,S_{m})>0$ and $h_{m}(\partial_{r},-)=h_{m}(-,\partial_{r})=0$. Define $\Omega_{\delta_{m}}=\Omega_{m}\cup {\mathcal T}_{g}(S_{m},\delta_{m})$. On $\Omega_{\delta_{m}}$ we define the axisymmetric metric $g^{*}_{m}$ 
\be\label{GSD}
g^{*}_{m}=
\left\{
\begin{array}{lll}
g & {\rm on} & \Omega_{m},\\
dr^{2}+f(r)^{2}h_{m} & {\rm on} & {\mathcal T}_{g}(S_{m},\delta_{m})\setminus \Omega_{m}
\end{array}
\right.
\ee
and where $f(r)$ is the scalar function 
\ben
f(r)=1+e^{\scb{1.1}{\ $\big[\frac{1}{\delta_{m}+\epsilon_{m}-r}-\frac{1}{r}\big]$}},
\een
where $\epsilon_{m}$ is, for every $m$, small enough to make the boundary $\{r=\delta_{m}\}$ strictly convex. Observe that  $g^{*}_{m}=g$ on $\Omega_{m}$ and that $g^{*}_{m}\geq g$ on $\Omega_{\delta_{m}}$ (because $f\geq 1$).  

We define $\underline{A}_{i,m}^{\north}(\tilde{p})$, in the same way as $\underline{A}^{\north}_{i}(\tilde{p})$, as the infimum of the $g^{*}_{m}$-areas of the axisymmetric discs $D^{\north}(C)$, $C=\Pi^{-1}(\tilde{p})$, inside $\Omega_{\delta_{m}}$. 

\vs
\n {\it Claim 1: $\underline{A}^{\north}_{i,m}(\tilde{p})$ is realized by either
\begin{enumerate}[labelindent=\parindent,leftmargin=*,label={\bf D\arabic*{'}}]
\item The $g^{*}_{m}$-area of a disc $D^{\north}(C)=\Pi^{-1}(\gamma^{\north}(\tilde{p}))\subset \Omega_{\delta_{m}}$, or by,
\item The $g^{*}_{m}$-area of a disc $D^{\south}(C)=\Pi^{-1}(\gamma^{\south}(\tilde{p}))\subset \Omega_{\delta_{m}}$ plus $A_{H}$.
\end{enumerate}
}

\n The proof of the {\it claim 1} is as follows. Let $\{D^{\north}_{j}(C)\}=\{\Pi^{-1}(\beta_{j}^{\north}(\tilde{p}))\}$ be a $g^{*}_{m}$-area minimizing sequence of discs in $\Omega_{\delta_{m}}$, that is, a sequence for which we have
\ben
\lim A_{g^{*}_{m}}(D_{j}^{\north}(C))=\lim {\rm length}_{\bar{\tilde{g}}^{*}_{m}} (\beta^{\north}(\tilde{p}))= \underline{A}_{i,m}^{\north}(\tilde{p}),
\een    
(where as we defined $\bar{\tilde{g}}$ we define here $\bar{\tilde{g}}^{*}_{m}:={\lambda^{*}}^{2} \tilde{g}^{*}_{m}$, ${\lambda^{*}}^{2}=<\xi^{*}_{m},\xi^{*}_{m}>_{g^{*}_{m}}$ the $g^{*}_{m}$-norm squared of the rotational Killing $\xi^{*}_{m}$ of the metric $g^{*}_{m}$). 
On general grounds [\footnote{Despite it naturalness this  does not follow exactly from the well known result in Riemannian geometry that a minimizing sequence of curves in a complete boundary-less Riemannian manifold with fixed extreme points converges in measure to a geodesic because on one side the manifold $\tilde{\Sigma}$ has boundary and on the other hand the metric $\bar{\tilde{g}}_{m}^{*}$ is singular at the axis $\poles$ (the boundary of $\tilde{\Sigma}$. Although with more work a proof can be given along these lines, a rigorous proof follows from the standard results on geometric measure theory on area minimizing sequences of discs \cite{MR756417},\cite{MR678484}.}] the sequence of area minimizing discs converges in measure to a disc with boundary $C$ (the solution of the ``Plateau's problem") and, possibly, to a finite set of axisymmetric compact and non-contractible stable minimal surfaces, which because of the geometry of $\Omega_{\delta_{m}}$ must be axisymmetric spheres. Thus the limit is either 
\begin{enumerate}[labelindent=\parindent,leftmargin=*,label={\bf P\arabic*.}]
\item A stable minimal disc $D^{\north}(C)=\Pi^{-1}(\gamma^{\north}(\tilde{p}))$, or,
\item A stable minimal disc $D^{\south}(C)=\Pi^{-1}(\gamma^{\south}(\tilde{p}))$, or,
\item A stable minimal disc $D^{\north}(C)=\Pi^{-1}(\gamma^{\north}(\tilde{p}))$ plus a set of stable axisymmetric minimal spheres $S_{k}=\Pi^{-1}(\gamma_{k}^{\north,\south}),\ k=1,\ldots,k_{1}$, or, 
\item A stable minimal disc $D^{\south}(C)=\Pi^{-1}(\gamma^{\south}(\tilde{p}))$ plus a set of stable axisymmetric minimal spheres $S_{k}=\Pi^{-1}(\gamma_{k}^{\north,\south}),\ k=1,\ldots,k_{2}$.
\end{enumerate} 
Of course to guarantee the existence of the limit one uses that $(\Omega_{\delta_{m}},g^{*}_{m})$ has one totally geodesic boundary ($S_{H}$) and one strictly convex boundary.  
We show now that the cases {\bf P2} and {\bf P3} cannot occur and that case {\bf P3} could occur but when does then it does only with $k_{2}=1$ and $A_{g^{*}_{m}}(S_{1})={\rm length}_{\bar{\tilde{g}}_{m}^{*}}(\gamma^{\north,\south}_{1}(\tilde{p}))=A_{H}$. We show this below, which completes the proof of the {\it claim 1}.

Case {\bf P3} cannot occur for in that case the (constant) sequence of curves $\{\beta{'}^{\north}_{j}(\tilde{p})=\gamma^{\north}(\tilde{p})\}$ has 
\ben
\lim {\rm length}_{\bar{\tilde{g}}_{m}^{*}}(\beta{'}^{\north}_{j}(\tilde{p}))<\lim {\rm length}_{\bar{\tilde{g}}_{m}^{*}}(\beta^{\north}_{j}(\tilde{p})),
\een
which is not possible because $\{\beta^{\north}_{j}(\tilde{p})\}$ is by assumption a minimizing sequence. 

We show now that the case {\bf P2} cannot occur. Let $\epsilon>0$ be small enough such that the curve $\gamma^{\south}(\tilde{p})$ does not intersect ${\mathcal T}_{\bar{\tilde{g}}_{m}^{*}}({\mathcal N},\epsilon)$ where the tubular neighborhood is inside $\Omega^{*}_{m}$. If the sequence $\{\beta^{\north}_{j}(\tilde{p})\}$ converges in measure to $\gamma^{\north}(\tilde{p})$ then
it must be
\ben
\lim {\rm length}_{\bar{\tilde{g}}_{m}^{*}}(\beta^{\north}_{j}(\tilde{p})\cap {\mathcal T}_{\bar{\tilde{g}}_{m}^{*}}({\mathcal N},\epsilon))=0.
\een
But every curve from $\tilde{p}$ to ${\mathcal N}$ must intersect the tubular neighborhood in a curve (or a set of curves) of total length at least $\epsilon$. This gives a contradiction. 

We finally analyze case {\bf P4}. First we show that $k_{2}=1$. Assume by contradiction that $k_{2}\geq 2$. Then one can consider a sequence of curves $\{\beta{'}^{\north}_{j}(\tilde{p})\}$ that would allow us to show that the sequence $\{\beta^{\north}_{j}(\tilde{p})\}$ was not minimizing. 
The sequence $\{\beta{'}^{\north}_{j}(\tilde{p})\}$ is constructed as follows. Let $\{q^{1}_{j}\}$ and $\{q^{2}_{j}\}$ be sequences of points in $\gamma^{\south}(\tilde{p})$ and $\gamma^{\north,\south}_{1}$ respectively converging to points in the south axis ${\mathcal S}$ (i.e. the (south) end points of the curves $\gamma^{\north}(\tilde{p})$ and $\gamma^{\north,\south}_{1}$ respectively).  Then $\beta{'}^{\north}_{j}(\tilde{p})$ is defined (starting from $\tilde{p}$) as $\gamma^{\north}(\tilde{p})$ until $q^{1}_{j}$, then as a curve joining $q^{1}_{j}$ and $q^{2}_{j}$ going very near the south axis ${\mathcal S}$ and then as the piece of $\gamma^{\north,\south}_{1}$ that starts at $q^{2}_{j}$ and end at the north point of it. The curve between $q^{1}_{j}$ and $q^{2}_{j}$ has to be chosen in such a way that its $\bar{\tilde{g}}^{*}_{m}$-length goes to zero as $j\rightarrow \infty$ (note that the pre-image of such curve under $\Pi$ becomes a very thin tube joining $\Pi^{-1}(q^{1}_{j})$ and $\Pi^{-1}(q^{2}_{j})$ of small $g^{*}_{m}$-area). With this definition of $\{\beta{'}^{\north}_{j}(\tilde{p})\}$ we have
\ben
\lim {\rm length}_{\bar{\tilde{g}}_{m}^{*}}(\beta{'}^{\north}_{j}(\tilde{p}))={\rm length}_{\bar{\tilde{g}}_{m}^{*}}(\gamma^{\south}(\tilde{p}))+{\rm length}_{\bar{\tilde{g}}_{m}^{*}}(\gamma^{\north,\south}_{1})<\lim {\rm length}_{\bar{\tilde{g}}_{m}^{*}}(\beta^{\north}_{j}(\tilde{p})),
\een
which is impossible because by assumption $\{\beta^{\north}_{j}(\tilde{p})\}$ was a minimizing sequence. That ${\rm length}_{\bar{\tilde{g}}_{m}^{*}}(\gamma_{1}^{\north,\south})=A_{H}$ is a consequence of the fact that one can easily construct a sequence of curves $\{\beta{'}^{\north}(\tilde{p})\}$ (following a similar procedure as before) converging in measure to $\gamma^{\south}(\tilde{p})\cup \gamma_{H}^{\north,\south}$. 

\vs
\n {\it Claim2: For every $\tilde{p} \in \tilde{\Sigma}_{i}^{\circ}$ and constant $B>0$ there is $m(B)$ such that for any $m\geq m(B)$ and $g^{*}_{m}$-stable axisymmetric minimal disc $D_{m}=D_{m}^{\north}(\Pi^{-1}(\tilde{p}))=\Pi^{-1}(\gamma_{m}^{\north}(\tilde{p}))$ or disc $D_{m}=D_{m}^{\mathcal S}(\Pi^{-1}(\tilde{p}))=\Pi^{-1}(\gamma_{m}^{\mathcal S}(\tilde{p}))$, inside $\Omega_{\delta_{m}}$ and intersecting $S_{m(B)}$ we have}
\ben
A_{g}(D_{m}\cap \Omega_{m(B)})=A_{g^{*}_{m}}(D_{m}\cap \Omega_{m(B)})\geq B.
\een
The Proof of the {\it claim 2} is as follows. Let $m$ and $m(B)$ with $m\geq m(B)$ be arbitrary and let $D_{m}$ be a disc as in the hypothesis. By Corollary \ref{COR3}, there is $\epsilon_{2}>0$ such that if $D_{m}$ intersects ${\mathcal T}_{g}(\poles,\epsilon_{2})\setminus {\mathcal T}_{g}(S_{m(B)},1)$ then it intersects ${\mathcal T}_{g}(\poles, \epsilon_{2})\cap \Omega_{m(B)}$ exactly in a small disc. Therefore if $D_{m}\cap S_{m(B)}\neq \emptyset$ then $\gamma_{m}^{\north}(\tilde{p})$ or $\gamma_{m}^{\mathcal S}(\tilde{p})$ (depending on the case) remains inside $\big(\tilde{\Sigma}_{i}\setminus \Pi({\mathcal T}_{g}(\poles,\epsilon_{2}))\big)\cap \Pi(\Omega_{m(B)})$ until entering $\Pi({\mathcal T}_{g}(S_{m(B)},1))$ for the first time. By the homogeneous regularity of the metric $\bar{\tilde{g}}$ on $\tilde{\Sigma}\setminus \Pi({\mathcal T}_{g}(\poles,\epsilon_{2})$ we conclude that if $m(B)\rightarrow \infty$ then necessarily ${\rm length}_{\bar{\tilde{g}}}(\gamma_{m}^{\north}(\tilde{p})\cap \Pi(\Omega_{m(B)}))\rightarrow \infty$. The claim follows then from the identity $A(\Pi^{-1}(\gamma_{m}^{\north}(\tilde{p}))\cap \Omega_{m(B)})={\rm length}_{\bar{\tilde{g}}}(\gamma_{m}^{\north}(\tilde{p})\cap \Pi(\Omega_{m(B)}))$.

\vs
Now, let $\bar{D}^{\north}(C)$, $\Pi(C)=\tilde{p}$, be a fixed disc in $\Sigma_{i}$. Let $B=2A(\bar{D}^{\north}(C))$ and $m(B)$ as in the {\it claim} (assume that $m(B)$ is big enough that $\bar{D}^{\north}(C)\subset \Omega_{m(B)}$). Then for any $m\geq m(B)$ we have
\ben
\underline{A}^{\north}_{i,m}(\tilde{p})\leq A(\bar{D}^{\north}(C))=\frac{B}{2}.
\een
It follows from the {\it claim 2} that for any $m\geq m(B)$ the minimizers $D^{\north}(C)$ (in case {\bf D1'}) or $D^{\south}(C)$ (in case {\bf D2'}) realizing $\underline{A}^{\north}_{i,m}(\tilde{p})$, lie in $\Omega_{m(B)}$. Therefore for any $m\geq m(B)$ we have
\ben
\underline{A}^{\north}_{i,m}(\tilde{p})=\underline{A}^{\north}_{i,m(B)}(\tilde{p}).
\een
As $\lim_{m\rightarrow \infty} \underline{A}^{\north}_{i,m}(\tilde{p})=\underline{A}^{\north}_{i}(\tilde{p})$ we conclude that the minimizers $D^{\north}(C)$ (in case {\bf D1'}) or $D^{\south}(C)$ (in case {\bf D2'}) realizing $\underline{A}^{\north}_{i,m}(\tilde{p})$ are the minimizers claimed {\bf D1} or {\bf D2} in the statement of the Proposition. The rest of the claim in the Proposition are automatic. 
\end{proof}

\bibliographystyle{plain}
\bibliography{KTS3.bbl}

\end{document}